\crefname{ineq}{inequality}{inequalities}					
\newtheorem{theorem}{Theorem}
\newtheorem{corollary}[theorem]{Corollary}
\newtheorem{lemma}[theorem]{Lemma}
\newtheorem{case}{Case}
\theoremstyle{definition}
\newtheorem{definition}[theorem]{Definition}
\newcommand*{\G}{\mathcal{G}}
\newcommand*{\ball}{\operatorname{ball}}
\newcommand*{\rank}{\operatorname{rank}}
\DeclareMathOperator{\tr}{tr}
\DeclareMathOperator{\seg}{seg}
\newcommand*{\Otilde}{\widetilde{O}}
\newcommand*{\poly}{\textsf{poly}}
\newcommand*{\length}[1]{|#1|}
\newcommand*{\nwspace}{\hspace*{.1em}} 
\renewcommand{\leq}{\leqslant}
\renewcommand{\geq}{\geqslant}
\renewcommand{\le}{\leqslant}
\renewcommand{\ge}{\geqslant}
\renewcommand{\epsilon}{\varepsilon}
\newcommand{\eps}{\varepsilon}
\let\oldsqrt\sqrt
\def\hksqrt{\mathpalette\DHLhksqrt}
\def\DHLhksqrt#1#2{\setbox0=\hbox{$#1\oldsqrt{#2\,}$}\dimen0=\ht0
   \advance\dimen0-0.2\ht0
   \setbox2=\hbox{\vrule height\ht0 depth -\dimen0}%
   {\box0\lower0.4pt\box2}}
\renewcommand\sqrt\hksqrt
\title{Improved Distance (Sensitivity) Oracles with Subquadratic Space}
\date{}
\author[1]{Davide Bilò}
\author[2]{Shiri Chechik}
\author[3]{Keerti Choudhary}
\author[4]{\\\vspace*{.5em} Sarel Cohen}
\author[5]{Tobias Friedrich}
\author[6]{Martin Schirneck}
\affil[1]{Department of Information Engineering, Computer Science and Mathematics, 
	
	University of L'Aquila

	\texttt{davide.bilo@univaq.it}
	\vspace*{.5em}
}
\affil[2]{Department of Computer Science, Tel Aviv University

	\texttt{shiri.chechik@gmail.com}
	\vspace*{.5em}
}
\affil[3]{Department of Computer Science and Engineering, Indian Institute of Technology Delhi,

	\texttt{keerti@iitd.ac.in}
	\vspace*{.5em}
}
\affil[4]{School of Computer Science, The Academic College of Tel Aviv-Yaffo

	\texttt{sarelco@mta.ac.il}
	\vspace*{.5em}
}
\affil[5]{Hasso Plattner Institute, University of Potsdam

	\texttt{tobias.friedrich@hpi.de}
	\vspace*{.5em}
}
\affil[6]{Faculty of Computer Science, University of Vienna

	\texttt{martin.schirneck@univie.ac.at}
}
\begin{document}
\maketitle

\thispagestyle{empty}

\begin{abstract}
A \emph{distance oracle} (DO) for a graph $G$ is a data structure that, when queried with vertices $s,t$, returns an estimate $\widehat{d}(s,t)$ of their distance in $G$. The oracle has stretch  $(\alpha, \beta)$ if the estimate satisfies 
$d(s,t) \le \widehat{d}(s,t) \le \alpha \cdot d(s,t) + \beta$. An $f$-\emph{edge fault-tolerant distance sensitivity oracle} ($f$-DSO) additionally receives a set $F$ of up to $f$ edges 
and estimates the distance in $G{-}F$.

Our first contribution is the design of new distance oracles with subquadratic space for undirected graphs. We show that introducing a small additive stretch $\beta > 0$ allows one to make the multiplicative stretch $\alpha$ arbitrarily small.
This sidesteps a known lower bound of $\alpha \ge 3$ (for $\beta = 0$ and subquadratic space) [Thorup \& Zwick, JACM 2005]. We present a DO for graphs with edge weights in $[0,W]$ that, for any positive integer $\ell$ and any $c \in (0, \ell/2]$, has stretch $(1{+}\frac{1}{\ell}, 2W)$, space $\Otilde(n^{2-\frac{c}{\ell}})$, and query time $O(n^c)$.
These are the first subquadratic-space DOs with $(1+\epsilon, O(1))$-stretch generalizing Agarwal and Godfrey's results for sparse graphs [SODA 2013] to general undirected graphs. We also construct alternative DOs with even smaller space at the cost of a higher additive stretch. For any integer $k \ge 1$, the DOs have a stretch $(2k{-}1{+}\frac{1}{\ell},4kW)$, space $O(n^{1+\frac{1}{k}(1- \frac{c}{8\ell})})$, and query time $O(n^{c})$.

Our second contribution is a framework that turns any $(\alpha,\beta)$-stretch DO for unweighted graphs into an $(\alpha (1{+}\varepsilon),\beta)$-stretch \(f\)-DSO
with sensitivity $f = o(\log(n)/\log\log n)$ and retains subquadratic space.
This generalizes a result by Bilò, Chechik, Choudhary, Cohen, Friedrich, Krogmann, and Schirneck [STOC 2023, TheoretiCS 2024] for the special case of stretch $(3,0)$ and $f = O(1)$. 
We also derandomize the entire construction. 
By combining the framework with our new distance oracle, we obtain an \(f\)-DSO that, for any $\gamma \in (0, (\ell{+}1)/2]$, has stretch $((1{+}\frac{1}{\ell}) (1{+}\varepsilon), 2)$, space $n^{ 2- \frac{\gamma}{(\ell+1)(f+1)} + o(1)}/\varepsilon^{f+2}$, and query time $\Otilde(n^{\gamma} /{\varepsilon}^2)$. This is the first deterministic $f$-DSO with subquadratic space, near-additive stretch, and sublinear query time.
\end{abstract}

\section{Introduction}
\label{sec:intro}

A \emph{distance oracle} (DO) is a data structure to retrieve (exact or approximate) distances between any pair of vertices $s,t$ in an undirected graph $G = (V,E)$ upon query.
The problem of designing distance oracles has attracted a lot of attention in recent years
due to the wide applicability in domains like network routing, traffic engineering, and distributed computing.
These oracles are used in settings where one cannot afford to store the entire graph,
but still wants to be able to quickly query graph distances.
A DO has \emph{stretch} $(\alpha,\beta)$ if, for any pair $s$ and $t$, the value $\widehat{d}(s,t)$ returned by the DO satisfies $d(s,t) \leq \widehat{d}(s,t) \leq \alpha \cdot d(s,t)+\beta$, where $d(s,t)$ denotes the exact distance between $s$ and $t$ in $G$.
As networks in most real-life applications are prone to transient failures, researchers have also studied the problem of designing oracles that additionally tolerate multiple edge failures in $G$. 
An $f$-\emph{edge fault-tolerant distance sensitivity oracle} ($f$-DSO) with stretch $(\alpha, \beta)$ is a data structure that, when queried on a triple $(s,t,F)$, where $F\subseteq E$ has size at most $f$, outputs an estimate $\widehat{d}(s,t,F)$ of the distance $d(s,t,F)$ from $s$ to $t$ in $G{-}F$ such that $d(s,t,F) \leq \widehat{d}(s,t,F) \leq \alpha \cdot d(s,t,F)+\beta$.

Several DOs and $f$-DSOs with different size-stretch-time trade-offs have been developed in the last decades. See, for example, \cite{Agarwal14SpaceStretchTimeTradeoffDOs,Baswana09Nearly2ApproximateJournal,ChCo20,ChCoFiKa17,
dey2024nearly,DuanP09a,DuRe22,DemetrescuT02,GrandoniVWilliamsFasterRPandDSO_journal,
PatrascuRoditty14BeyondThorupZwick,Ren22Improved,ThorupZ05,WY13}.
Our focus is on providing new distance oracles with a subquadratic space usage  for both static and error-prone graphs. 
A special focus of this work is how static distance oracles
can be converted into fault-tolerant distance \emph{sensitivity} oracles.

\subsection{Approximate Distance Oracles for Static Graphs}
\label{sec:intro_DO}

We first discuss distance oracles.
Extensive research has been dedicated in that direction in the past two decades. 
In their seminal paper \cite{ThorupZ05}, Thorup and Zwick showed that,
for any positive integer $k$ (possibly depending on the input size), 
an undirected graph with $n$ vertices and $m$ edges can be preprocessed in time $O(mn^{1/k})$ 
to obtain an oracle with multiplicative stretch $2k{-}1$, 
space\footnote{%
  The space of the data structures is measured in the number of machine words on $O(\log n)$ bits.
} $O(kn^{1+1/k})$, and query time $O(k)$. 
Subsequent works \cite{Chechik14,Chechik15,wulff2013approximate}
further improved the space of the construction to $O(n^{1+1/k})$ and the query time to $O(1)$.

In the case of $k=2$, this results in a $3$-approximate oracle taking subquadratic space $O(n^{3/2})$.
A simple information theoretic lower bound using bipartite graphs \cite{ThorupZ05}
shows that distance oracles with a purely multiplicative stretch \emph{below} $3$ 
require space that is at least quadratic in $n$.
P\v{a}tra\c{s}cu and Roditty~\cite{PatrascuRoditty14BeyondThorupZwick} 
were arguably the first to introduce an additive stretch to 
simultaneously reduce the space and the multiplicative stretch in general dense graphs.
They proposed a distance oracle 
for unweighted graphs with stretch $(2,1)$ that takes $O(n^{5/3})$ space,
has a a constant query time and can be constructed in time $O(mn^{2/3})$.
They also showed that DOs with multiplicative stretch $\alpha \le 2$ 
and constant query time require $\Omega(n^2)$ space,
assuming a conjecture on the hardness of set intersection queries.

Agarwal and Godfrey~\cite{AgarwalGodfrey13DOsStretchLessThan2} 
studied $(1+\varepsilon,O(1))$-stretch DOs for sparse graphs.
However, when transferred to dense graphs the space of their construction becomes $\Omega(n^2)$
and the query time is $\Omega(n)$.
To the best of our knowledge, no distance oracles have been constructed that simultaneously have
subquadratic space and a multiplicative stretch better than 2 for general undirected graphs.
If we want to reduce $\alpha$ while retaining low space,
we necessarily have to introduce some additive stretch $\beta$
and the query time must rise beyond constant. 
This raises the following natural question.

\pagebreak

\noindent
\textbf{Question.}
\emph{Is there a distance oracle with a $1+\varepsilon$ multiplicative and constant additive stretch that takes subquadratic space and has a query time that is sublinear in $n$?}
\vspace*{.75em}

We provide an affirmative answer by presenting the first oracle with stretch $(1{+}\varepsilon,O(1))$ for general graphs. 
The construction is surprisingly simple, has subquadratic space, and sublinear query time, 
improving over Agarwal and Godfrey's work for dense graphs.

\begin{restatable}{theorem}{distanceoracle}
\label{thm:distance_oracle}
	Let $W$ be a non-negative real number,
	and $G$ an undirected graph with $n$ vertices and edge weights 
	in a $\textup{\poly}(n)$-sized subset of $[0,W]$.
	For every positive integer $K \le \sqrt{n}$ and any $\varepsilon > 0$,
	there exists a path-reporting distance oracle for $G$
	that has stretch $(1{+}\varepsilon, 2W)$,
	space $\Otilde(n^2/K)$,
	and query time $O(K^{\lceil 1/\varepsilon \rceil})$
	for the distance and an additional $O(1)$ time per reported edge.
	The data structure can be constructed in APSP time.
\end{restatable}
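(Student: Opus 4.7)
The plan is a Thorup--Zwick--style single-level oracle augmented with a bidirectional ball exploration at query time. I would sample each vertex into a pivot set $B$ independently with probability $\Theta((\log n)/K)$ and store, for every pivot $b \in B$, the full single-source shortest-path tree from $b$; this costs $\Otilde(n^2/K)$ space. For every vertex $v$, let $p(v)$ be its nearest pivot and define the bunch $B(v) = \{u \in V : d(v,u) < d(v,p(v))\} \cup \{p(v)\}$, which has size $O(K \log n)$ w.h.p.\ by standard sampling arguments. Inside each bunch I would also store predecessor pointers so that any shortest walk is reportable edge by edge in constant time. The bunch storage is $\Otilde(nK) \le \Otilde(n^{3/2})$ and is therefore absorbed by the pivot tables whenever $K \le \sqrt{n}$. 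The entire structure can be extracted from any APSP computation in the claimed preprocessing time.

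On a query $(s,t)$ I would set $\ell = \lceil 1/\varepsilon \rceil$ and perform a bidirectional BFS through bunches. Starting from $S_0 = \{s\}$, expand $S_{i+1} = \bigcup_{v \in S_i} B(v)$ for $\ell$ rounds while maintaining a tentative walk length via the relaxation $\widetilde d(s,u) \gets \widetilde d(s,v) + d(v,u)$ whenever $u \in B(v)$. An analogous set $T$ and function $\widetilde d(\cdot, t)$ are built from $t$. The oracle returns the minimum of three candidates: $\widetilde d(s,v) + d(v,p(v)) + d(p(v),t)$ over $v \in S$; the symmetric expression over $u \in T$; and $\widetilde d(s,v) + \widetilde d(v,t)$ whenever $v \in S \cap T$. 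Since $|S_{i+1}| \le |S_i|\cdot O(K \log n)$, the query time is $O(K^{\lceil 1/\varepsilon \rceil})$ up to polylog factors, and the witnessing path is recovered by concatenating the chosen in-bunch predecessor chains with the predecessor chain stored at the selected pivot.

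For the stretch, fix a shortest $s$--$t$ path $\pi$ and build canonical sequences $v_0 = s, v_1, \ldots, v_\ell$ and $u_0 = t, u_1, \ldots, u_\ell$ on $\pi$, where $v_{i+1}$ (resp.\ $u_{j+1}$) is the \emph{last} vertex of $\pi$ still contained in $B(v_i)$ (resp.\ $B(u_j)$). The core inequality is $d(v_i, v_{i+1}) \ge d(v_i, p(v_i)) - W$, because the edge of $\pi$ leaving $B(v_i)$ at $v_{i+1}$ has weight at most $W$ while its far endpoint is at distance $\ge d(v_i, p(v_i))$ from $v_i$; telescoping along $\pi$ gives $\sum_{i<\ell} d(v_i,p(v_i)) \le d(s,v_\ell) + \ell W$ and an analogous bound on the $t$ side. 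In the non-overlapping case $d(s,v_\ell) + d(u_\ell,t) \le d(s,t)$, the $2\ell$ summands combine to at most $d(s,t) + 2\ell W$, so some term is bounded by $d(s,t)/(2\ell) + W$; the triangle inequality then yields a pivot bail-out of at most $d(s,t) + 2\bigl(d(s,t)/(2\ell) + W\bigr) \le (1 + 1/\ell)\, d(s,t) + 2W \le (1+\varepsilon)\, d(s,t) + 2W$. In the overlapping case, picking the largest $i^\star$ with $d(s,v_{i^\star}) \le d(s,u_\ell)$, the subpath of $\pi$ from $v_{i^\star}$ to $u_\ell$ has length at most $d(v_{i^\star}, v_{i^\star+1}) < d(v_{i^\star}, p(v_{i^\star}))$, hence $u_\ell \in B(v_{i^\star}) \subseteq S$; thus $u_\ell \in S \cap T$ and the third candidate returns the exact distance.

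The main obstacle is the bidirectional analysis just outlined: a one-sided $\ell$-hop exploration would lose a factor of two in the exponent of $K$, so obtaining query time $K^{\lceil 1/\varepsilon \rceil}$ hinges on a clean dichotomy between the telescoped pivot bound and an automatic meet-in-the-middle when the two searches collide on $\pi$. Everything else---the high-probability bound on bunch sizes, extracting the oracle from an APSP run, and $O(1)$ per-edge path reporting---follows standard techniques.
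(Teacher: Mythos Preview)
Your construction is correct and is essentially the paper's algorithm: sample pivots at rate $\Theta(\log n/K)$, store all pivot-to-vertex distances, and answer a query by a depth-$\lceil 1/\varepsilon\rceil$ bidirectional exploration through local neighborhoods, bailing out via the nearest pivot of some explored vertex. The only structural difference is that you use Thorup--Zwick bunches $B(v)=\{u:d(v,u)<d(v,p(v))\}\cup\{p(v)\}$ where the paper uses the fixed-size list $K[v]$ of the $K$ closest vertices; since with this sampling rate $p(v)\in K[v]$ w.h.p., the two are interchangeable.

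Your stretch analysis, however, is genuinely different in presentation. The paper sets $r=\tfrac{\varepsilon}{2}d(s,t)+W$ and argues a density dichotomy: either every vertex of the shortest path $\pi$ has $\ball(v,r)\subseteq K[v]$, in which case each hop in the auxiliary graph advances by more than $\varepsilon d(s,t)/2$ along $\pi$ and the two searches provably meet on $\pi$; or some $v\in V(\pi)$ has a dense $r$-ball, whence $d(v,p(v))\le r$ and the pivot bail-out already gives $(1+\varepsilon)d(s,t)+2W$. Your telescoping argument $\sum_{i<\ell}d(v_i,p(v_i))\le d(s,v_\ell)+\ell W$ combined with the meet-in-the-middle case analysis reaches the same bound by averaging rather than by isolating a single high-density vertex; it is a pleasant alternative and avoids introducing the set $V_r$.

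One small correction: you quote $|B(v)|=O(K\log n)$ and hence a polylog overhead in the query time, but with sampling rate $C\log n/K$ one actually has $|B(v)|\le K$ w.h.p.\ (some one of the $K$ nearest vertices is a pivot), so your exploration already achieves the stated $O(K^{\lceil 1/\varepsilon\rceil})$ bound without the polylog factor. This matters because the theorem allows non-constant $\varepsilon$, where $(\log n)^{\lceil 1/\varepsilon\rceil}$ would not be polylogarithmic.
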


The restriction of the edge weights to a polynomial-sized subset of $[0,W]$ is to ensure
that any graph distance can be encoded in a constant number of $O(\log n)$-bit words.
We remark that our construction in fact guarantees a stretch of $(1{+}\varepsilon, 2w_{s,t})$ where $w_{s,t}$ is the maximum edge weight along a shortest path from $s$ to $t$ in $G$.
The stretch thus depend locally on the queried vertices rather than the global maximum edge weight. 

One cannot reduce the additive stretch in \Cref{thm:distance_oracle} to $1$ (if $\varepsilon < 2$) 
even in unweighted graphs as the unconditional lower bound for bipartite graphs \cite{ThorupZ05}
stated earlier rules out any data structure that can distinguish between 
distances $1$ and $3$ in subquadratic space.

By setting $\varepsilon = 1/\ell$ and $K^\ell = n^c$, we get a trade-off between 
stretch space and time.

\begin{corollary}
\label{cor:distance_oracle}
	For every positive integer $t$ and any $0 < c \le \ell/2$,
	there exists a distance oracle for undirected graphs with stretch $(1{+}\frac{1}{\ell}, 2W)$,
	space $\Otilde(n^{2-\frac{c}{\ell}})$,
	and query time $O(n^c)$.
\end{corollary}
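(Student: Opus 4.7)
The plan is to derive \Cref{cor:distance_oracle} as an immediate instantiation of \Cref{thm:distance_oracle} with the parameter choices $\varepsilon = 1/\ell$ and $K = \lceil n^{c/\ell} \rceil$. Since the corollary is essentially a rephrasing of the theorem under this substitution, I expect the argument to be a short bookkeeping exercise rather than requiring any new ideas.

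First I would verify that the hypotheses of \Cref{thm:distance_oracle} are met. The theorem requires $K$ to be a positive integer with $K \le \sqrt{n}$ and $\varepsilon > 0$. Taking $\varepsilon = 1/\ell > 0$ is immediate. For $K$, the bound $K \le \sqrt{n}$ translates (up to the $+1$ from the ceiling, which is absorbed for $n$ large enough) to $n^{c/\ell} \le n^{1/2}$, i.e.\ $c/\ell \le 1/2$, which is precisely the corollary's assumption $c \le \ell/2$.

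Next I would read off each parameter of the resulting oracle. The stretch $(1+\varepsilon, 2W) = (1+\tfrac{1}{\ell}, 2W)$ matches the claim directly. The space $\Otilde(n^2/K) = \Otilde(n^{2-c/\ell})$ matches after absorbing the rounding of $K$ into the $\Otilde$-notation. For the query time, since $\ell$ is a positive integer we have $\lceil 1/\varepsilon \rceil = \lceil \ell \rceil = \ell$, so $O(K^{\lceil 1/\varepsilon \rceil}) = O(K^\ell) = O(n^c)$, again up to a constant factor coming from the ceiling.

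The only subtlety — and scarcely an obstacle — is that $n^{c/\ell}$ need not be an integer, so one has to fix $K = \lceil n^{c/\ell}\rceil$ and check that the resulting multiplicative slack of at most $(1+1/n^{c/\ell})^\ell = O(1)$ in the space and the query time is swallowed by the $\Otilde$ and $O$ notation respectively. No other step of the reduction presents any difficulty, and the construction time inherits the APSP bound from the theorem.
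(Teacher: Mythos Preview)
Your proposal is correct and follows exactly the paper's own derivation: the paper states the corollary is obtained ``by setting $\varepsilon = 1/\ell$ and $K^\ell = n^c$,'' which is precisely your substitution, and your verification of the constraint $K \le \sqrt{n}$ and the parameter read-offs is the intended bookkeeping.
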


An extension of our construction allows to trade a higher stretch for a lower space.
Namely, we present a family of DOs with multiplicative stretch of $2k{-}1+\varepsilon$ and $o(n^{1+1/k})$ space.

\begin{restatable}{theorem}{hierarchydo}
\label{thm:hierarchy_DO}
	Let $W$ be a non-negative real number,
	and $G$ an undirected graph with $n$ vertices and edge weights 
	in a $\textup{\poly}(n)$-sized subset of $[0,W]$.
	For all positive integers $k$ and $K$ with $K = O(n^{1/(2k+1)})$, 
	and every $\varepsilon > 0$,
	there exits a distance oracle for $G$ that
	has stretch $(2k{-}1{+}\varepsilon, 4kW)$,
	space $O((\frac{n}{K})^{1+1/k} \, \log^{1+1/k} n)$,
	and query time $O(K^{2\lceil 4k/\varepsilon \rceil})$.
	The data structure can be constructed in APSP time.
\end{restatable}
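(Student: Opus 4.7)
The plan is to extend the $k{=}1$ construction of Theorem~\ref{thm:distance_oracle} to a $k$-level Thorup--Zwick-style hierarchy, so that each additional level trades two units of multiplicative stretch for a factor of $n^{1/k}$ in space, while the $(1{+}\varepsilon,2W)$ slack is amortized across levels.

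First I would sample a chain $V = A_0 \supseteq A_1 \supseteq \cdots \supseteq A_{k-1}$, where $A_i$ is a uniform Bernoulli subsample of $A_{i-1}$ with probability $q = \Theta((K/n)^{1/k})$, so that $|A_i| = \widetilde{\Theta}(nq^i)$ with high probability. For every $v\in V$ and every level $i$, store the closest pivot $p_i(v)\in A_i$ and the distance $\delta_i(v) = d(v,A_i)$; additionally, for $v\in A_i$, store the bunch $B_i(v) = \{w\in A_i\setminus A_{i+1} : d(v,w) < \delta_{i+1}(v)\}$, which has expected size $O(1/q) = O((n/K)^{1/k})$ and contributes $\Otilde((n/K)^{1+1/k})$ to total space. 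The condition $K = O(n^{1/(2k+1)})$ ensures that the standard Chernoff concentration of bunch sizes kicks in.

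Second, I would stitch the levels together using the sub-oracle of Theorem~\ref{thm:distance_oracle} with parameter $\varepsilon' = \varepsilon/(4k)$: whenever the query algorithm needs an approximate distance between a pivot and a bunch element, it invokes the sub-oracle, paying $(1{+}\varepsilon', 2W)$ stretch and $K^{\lceil 1/\varepsilon'\rceil} = K^{\lceil 4k/\varepsilon\rceil}$ time per call. The query procedure is the standard Thorup--Zwick walk: for $i = 0, 1, \ldots, k{-}1$, check whether $p_i(s) \in B_i(t)$, and if so return $\delta_i(s) + \widehat{d}(p_i(s), t)$. Since the walk may traverse all $k$ levels and each level invokes the sub-oracle up to twice (once from $s$'s side and once from $t$'s side), the overall query time is $O(K^{2\lceil 4k/\varepsilon\rceil})$. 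The stretch analysis combines the classical Thorup--Zwick calculation (which yields the $2k{-}1$ multiplicative factor under exact distances) with a telescoping bound on the $(1{+}\varepsilon', 2W)$ errors introduced at each of the $k$ levels: the multiplicative slacks sum to at most $\varepsilon\cdot d(s,t)$ while the additive slacks sum to at most $4kW$, delivering the claimed $(2k{-}1{+}\varepsilon,\, 4kW)$ stretch.

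\textbf{Main obstacle.} The main difficulty is to make the multiplicative slack \emph{add} rather than \emph{compound} across levels; a naive analysis yields $(2k{-}1)(1{+}\varepsilon')^k$, which forces $\varepsilon'$ to shrink polynomially in $k$ and blows up the query time. The resolution should be to bound each level's error against the true distance $d(s,t)$ rather than against the running estimate, exploiting the local $2w_{s,t}$ refinement noted after Theorem~\ref{thm:distance_oracle} together with the monotonicity $\delta_i(s) \le 2\,d(s,t)$ for pivots encountered by the walk. A secondary concern is derandomizing the hierarchical sampling while preserving both the bunch-size tail bound and the stretch guarantee; because the sampling is combinatorially independent of the edge weights, this should follow by standard hitting-set techniques.
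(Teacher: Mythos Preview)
Your proposal has a genuine space gap. You invoke the oracle of Theorem~\ref{thm:distance_oracle} as a sub-oracle, but that oracle stores pivot-to-all-vertex distances and therefore takes $\Otilde(n^{2}/K)$ space on its own; including it destroys the $(n/K)^{1+1/k}$ target as soon as $k\ge 2$. Even without the sub-oracle, your hierarchy starts at $A_0=V$, so level~$0$ alone stores $n$ bunches of expected size $(n/K)^{1/k}$, i.e.\ $n^{1+1/k}/K^{1/k}$ words, which is a factor $K$ above the claimed bound. A Thorup--Zwick hierarchy rooted at all $n$ vertices cannot go below $\Theta(n^{1+1/k})$ space regardless of the sampling rate; the $K$-saving you need has to come from restricting the hierarchy to a set of size $\Otilde(n/K)$.

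The paper's route is essentially the \emph{reverse} of yours. It keeps the single flat pivot set $B$ of size $\Otilde(n/K)$ and the $K[v]$ lists from Theorem~\ref{thm:distance_oracle}, but \emph{discards} the expensive pivot-to-all-vertex table. In its place it stores a $(2k{-}1)$-stretch Thorup--Zwick oracle restricted to $B\times B$ (Chechik's version, \Cref{thm:lean_ThorupZwick}), of size $O(|B|^{1+1/k})=\Otilde((n/K)^{1+1/k})$. At query time it runs the bidirectional BFS in the auxiliary graph $H$ to depth $\lceil 4k/\varepsilon\rceil$ from both $s$ and $t$; by Lemma~\ref{lem:property-H} applied with $\delta=\varepsilon/(2k)$, this finds vertices $v_s,v_t$ on the shortest path whose nearest pivots satisfy $d(v_s,p(v_s)),\,d(v_t,p(v_t))\le \tfrac{\delta}{2}\,d(s,t)+W$. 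The answer is $\min_{u,v} d(s,p(u))+\widehat D(p(u),p(v))+d(p(v),t)$ over all pairs $(u,v)\in A_{4k/\varepsilon}(s)\times A_{4k/\varepsilon}(t)$. The $(2k{-}1)$ factor hits only the middle term, while the two detours through $p(v_s),p(v_t)$ each get amplified by $2k$ in the triangle inequality, contributing $4k(\tfrac{\delta}{2}d+W)=\varepsilon\,d+4kW$. The exponent $2$ in $K^{2\lceil 4k/\varepsilon\rceil}$ comes from enumerating the Cartesian product of the two BFS sets, not from calling any sub-oracle twice.
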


For $k=1$, the oracle in \Cref{thm:hierarchy_DO} has the same multiplicative stretch of $1+\varepsilon$
as the one in \Cref{thm:distance_oracle} and a better space of $\Otilde(n^2/K^2)$,
but the additive stretch of $4W$ is larger and so is the query time.
We obtain the following corollary for general $k$, $\varepsilon = 1/\ell$, and $K^{8k\ell} = n^{c}$.

\begin{corollary}
\label{cor:2k-1+eps-oracle}
	For all positive integers $k$ and $\ell$, and any $0 < c \le (4{-}\frac{4}{2k+1}) \, \ell$,
	there exists a distance oracle with stretch $(2k{-}1{+}\frac{1}{\ell},4kW)$,
	space $\Otilde(n^{1+\frac{1}{k}(1-\frac{c}{8\ell})})$, and query time $O(n^{c})$.
\end{corollary}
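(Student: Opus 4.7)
The plan is to derive Corollary~\ref{cor:2k-1+eps-oracle} as a direct specialization of Theorem~\ref{thm:hierarchy_DO} by the parameter choice already hinted at in the excerpt, namely $\varepsilon = 1/\ell$ and $K = n^{c/(8k\ell)}$. Since $k$ and $\ell$ are positive integers, the ceiling $\lceil 4k/\varepsilon \rceil = 4k\ell$ is exact, so the exponent $2\lceil 4k/\varepsilon \rceil = 8k\ell$ will combine cleanly with the chosen $K$.

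I would proceed in three short steps. First, I verify that the parameters fall within the admissible range of Theorem~\ref{thm:hierarchy_DO}. The condition $K = O(n^{1/(2k+1)})$ translates into $c/(8k\ell) \le 1/(2k+1)$, equivalently $c \le 8k\ell/(2k+1) = (4 - \tfrac{4}{2k+1})\ell$, which is exactly the range assumed in the corollary. Second, I read off the stretch: substituting $\varepsilon = 1/\ell$ into $(2k{-}1{+}\varepsilon, 4kW)$ yields the claimed stretch $(2k{-}1{+}\tfrac{1}{\ell}, 4kW)$. Third, I compute the query time: $O(K^{2\lceil 4k/\varepsilon\rceil}) = O(K^{8k\ell}) = O(n^{c})$.

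The only nontrivial piece is the space bound. Plugging $K = n^{c/(8k\ell)}$ into the $\Otilde((n/K)^{1+1/k})$ bound of Theorem~\ref{thm:hierarchy_DO} gives
\[
\Otilde\bigl(n^{(1+1/k)(1-c/(8k\ell))}\bigr)
= \Otilde\bigl(n^{1 + 1/k - c/(8k\ell) - c/(8k^{2}\ell)}\bigr).
\]
Dropping the (non-negative) term $c/(8k^{2}\ell)$ in the exponent only makes the bound larger, so this is at most
\[
\Otilde\bigl(n^{1 + 1/k - c/(8k\ell)}\bigr) = \Otilde\bigl(n^{1 + \frac{1}{k}(1 - \frac{c}{8\ell})}\bigr),
\]
which matches the corollary.

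There is no real obstacle here: the statement is a clean rescaling of Theorem~\ref{thm:hierarchy_DO}, and the algebra works out precisely because the exponent $2\lceil 4k/\varepsilon\rceil$ in the query time is $8k\ell$, which is exactly the exponent we equate to $n^{c}$ in the definition of $K$. The only point that requires a brief remark is the slack in the space exponent noted above, which is what turns the exact expression $(1+1/k)(1-c/(8k\ell))$ into the slightly cleaner form $1 + \tfrac{1}{k}(1-\tfrac{c}{8\ell})$ stated in the corollary.
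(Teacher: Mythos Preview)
Your proposal is correct and follows exactly the approach indicated in the paper: the corollary is obtained by substituting $\varepsilon = 1/\ell$ and $K = n^{c/(8k\ell)}$ into Theorem~\ref{thm:hierarchy_DO}, and the paper itself states no more than this parameter choice. Your observation about the slack $c/(8k^{2}\ell)$ in the space exponent is also accurate; the corollary simply records the slightly weaker but cleaner bound.
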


Probably closest to hierarchy we present in \Cref{thm:hierarchy_DO}
is the distance labeling scheme of Abraham and Gavoille~\cite{Abraham11AffineStretch}.
Seen as a DO for unweighted graphs, for any integer $k \ge 2$,
it has a stretch of $(2k{-}2,1)$
space $\Otilde(n^{1+\frac{2}{2k-1}})$, and query time $O(k)$.

\subsection{Distance Sensitivity Oracles}

Most of the proposed distance sensitivity oracles that treat the sensitivity $f$ 
(the number of tolerated edge failures) as a parameter
require $\Omega(n^2)$ space, have a stretch depending on $f$, or an $\Omega(n)$ query time. 
See \Cref{subsec:related_work} for more details.
Bilò, Chechik, Choudhary, Cohen, Friedrich, Krogmann, and Schirneck~\cite{Bilo24ApproxDSOSubquadraticTheoretiCS} were the first to introduce an $f$-DSO with subquadratic space,
a constant multiplicative stretch of $3+\epsilon$ (for any $f$),
and a query time that can be made an arbitrarily small polynomial.
More precisely, for any unweighted graph $G$ with unique shortest paths,
every integer constant $f \ge 2$, any $0 < \gamma < \sfrac{1}{2}$, and $\varepsilon > 0$,
they devised an $f$-DSO for $G$ with stretch $3{+}\eps$, space 
  $\Otilde(n^{2-\frac{\gamma}{f+1}}) \,{\cdot}\, O(\log n/\varepsilon)^{f+2}$,
query time $O(n^{\gamma}/\varepsilon^2)$,
  and preprocessing time
  $\Otilde(mn^{2-\frac{\gamma}{f+1}})
  	\cdot O(\log n/\varepsilon)^{f+1}$.
Even more than in the case of static distance oracles, a multiplicative stretch better than $3$
(let alone close to $1$) remains a barrier for subquadratic-space \mbox{$f$-DSOs}.
We explore whether the introduction of a small additive stretch can help here as well.
\vspace*{.75em}

\noindent
\textbf{Question.}
\emph{Is there a distance sensitivity oracle for general sensitivity $f$ 
with a $1+\varepsilon$ multiplicative stretch, possibly at the expense of a constant additive stretch,
that takes subquadratic space and has a query time that is sublinear in $n$?}
\vspace*{.75em}

We also answer this question affirmatively in this paper by presenting an $f$-DSO with stretch
$(1{+}\epsilon, 2)$ while still having subquadratic space and small polynomial query time.

\begin{restatable}{theorem}{chained}
\label{thm:chained}
	Let $\ell$ be a positive integer constant
	and $G$ be an undirected and unweighted graph with $n$ vertices and $m$ edges
	and unique shortest paths.
	For any $0 < \gamma \le (\ell{+}1)/2$, sensitivity $2 \,{\le}\, f \,{=}\, o(\log(n)/\log\log n)$, and 
	approximation parameter $\varepsilon = \omega(\sqrt{\log(n)}/n^{\frac{\gamma}{2(\ell+1)(f+1)}})$,
	there exists an $f$-edge fault-tolerant distance sensitivity oracle for $G$ 
	that has
	\begin{itemize}
		\item stretch $((1{+}\frac{1}{\ell})(1{+}\varepsilon), \nwspace 2)$,\vspace*{-.25em}
		\item space $n^{2- \frac{\gamma}{(\ell+1)(f+1)} + o(1)}/\varepsilon^{f+2}$,\vspace*{-.25em}
		\item query time $O(n^\gamma/\varepsilon^2)$,
		\item preprocessing time $n^{2+\gamma + o(1)} 
			+ mn^{2- \frac{\gamma}{(\ell+1)(f+1)} + o(1)}/\varepsilon^{f+1}$.
	\end{itemize}
\end{restatable}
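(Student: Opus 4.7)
The plan is to invoke the DO-to-DSO framework announced earlier (the paper's second main contribution) with the distance oracle of \Cref{thm:distance_oracle} as the base. Since $G$ is unweighted we take $W=1$ in \Cref{thm:distance_oracle}, set its approximation to $1/\ell$, and leave the parameter $K$ free for now; this gives a base DO of stretch $(1+1/\ell,\nwspace 2)$, space $\Otilde(n^2/K)$, and query time $O(K^{\ell})$. The framework converts any $(\alpha,\beta)$-stretch DO into an $((1+\varepsilon)\alpha,\beta)$-stretch $f$-DSO, so the target stretch $((1+1/\ell)(1+\varepsilon),\nwspace 2)$ is automatic. What remains is to set $K$ and the framework's internal parameters consistently with the claimed space, query time, and preprocessing cost.

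First I would fix $K^{\ell}=n^{\gamma}$ so that each call to the base DO costs $O(n^{\gamma})$. The framework then proceeds in the style of previous subquadratic-space DSOs: sample a pivot set $B$ with probability $p$, and for each pivot $b\in B$ store, for every ``relevant'' failure pattern of $f$ edges lying in a suitable BFS ball around $b$, a compressed fault-tolerant distance structure from $b$ to all vertices in $G{-}F$. At query $(s,t,F)$, a candidate replacement path is decomposed into at most $f{+}1$ consecutive segments separated by pivots; each segment is answered either by a lookup into a pivot FT-structure (when close to a failure) or by one call to the base DO on an appropriate subgraph. Repeating this across $O(1/\varepsilon^{2})$ pivot guesses gives total query time $O(n^{\gamma}/\varepsilon^{2})$.

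Setting the pivot probability to $p=n^{-\gamma/((\ell+1)(f+1))+o(1)}$ yields $|B|\cdot (n/\varepsilon^{f+2})\cdot n^{o(1)} = n^{2-\gamma/((\ell+1)(f+1))+o(1)}/\varepsilon^{f+2}$, matching the claimed space; the preprocessing cost splits correspondingly into an APSP-time term for the base DO plus $mn^{2-\gamma/((\ell+1)(f+1))+o(1)}/\varepsilon^{f+1}$ for building the pivot structures. The lower bound $\varepsilon=\omega(\sqrt{\log n}/n^{\gamma/(2(\ell+1)(f+1))})$ is precisely what a Chernoff argument needs in order to guarantee with high probability that $B$ meets every long enough replacement path close to each failure, and is also what enables derandomization via explicit small hitting sets. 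The stretch analysis hinges on two points: the $f{+}1$ segments each incur a multiplicative overhead $1+\varepsilon/(f{+}1)$, whose product stays below $1+\varepsilon$ because $f=o(\log n/\log\log n)$; and only the single segment handled by the base DO contributes additively, so the slack $2$ from \Cref{thm:distance_oracle} is not amplified.

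The main obstacle will be the simultaneous derandomization and parameter bookkeeping. One must replace the random pivot set \emph{and} the random family of precomputed failure patterns by explicit hitting sets of essentially the same sizes, while keeping all of $\ell$, $f$, $\gamma$, and $\varepsilon$ aligned in the exponents so that space and preprocessing end up exactly as stated. The Bilò et al.~STOC~2023 construction supplies the template for this derandomization; the genuine novelty here is that the base oracle is no longer the Thorup--Zwick $(3,0)$-DO but the $(1+1/\ell,2)$-DO of \Cref{thm:distance_oracle}, so the entire chaining argument — and in particular the bound showing that the additive term does not blow up across the $f{+}1$ segments — must be re-examined to carry through the new stretch.
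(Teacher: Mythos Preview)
Your top-level plan is right: instantiate the distance oracle of \Cref{thm:distance_oracle} (with $W=1$, $\varepsilon=1/\ell$) and feed it through the paper's DO-to-DSO pipeline. But the proof in the paper is not a fresh pivot-sampling argument of the kind you sketch; it is a parameter calculation that composes two already-proved black boxes, \Cref{thm:short_distance_DSO} (DO $\to$ $f$-DSO$^{\le L}$) and \Cref{thm:general_DSO} ($f$-DSO$^{\le L}\to f$-DSO), and your sketch misses the structure of that composition in ways that make your parameter choices and your stretch argument incorrect.

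Concretely: (1) The intermediate hop-short oracle and its cutoff $L$ are essential and absent from your proposal. The paper takes $c=\gamma\ell/(\ell{+}1)$ in \Cref{cor:distance_oracle}, so the base DO has query time $O(n^{\gamma\ell/(\ell+1)})$, not $O(n^{\gamma})$; the final query time becomes $O(n^{\gamma}/\varepsilon^{2})$ only because \Cref{thm:general_DSO} multiplies by an $L^{f-1}$ factor with $L=n^{\gamma/((\ell+1)(f+1))}$. Your choice $K^{\ell}=n^{\gamma}$ makes the base DO already cost $n^{\gamma}$, and then the unavoidable polynomial blowup from the framework would push the query time strictly above $n^{\gamma}$. (2) The condition $\varepsilon=\omega(\sqrt{\log n}/n^{\gamma/(2(\ell+1)(f+1))})$ is \emph{not} there for a Chernoff hitting argument; it is exactly the hypothesis $\beta=o(\varepsilon^{2}L/(f^{3}\log n))$ of \Cref{thm:general_DSO} with $\beta=2$ and the above $L$. (3) Your claim that ``only the single segment handled by the base DO contributes additively'' is false in this framework. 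The additive term \emph{does} accumulate---each FT-tree answer carries an additive $O(f\log(n)/\varepsilon)\cdot\beta$ and the auxiliary graph $H$ has $O(f^{2})$ edges---and the paper's \Cref{lem:stretch_general_DSO} controls this by an induction that charges the accumulated additive error to the multiplicative part whenever the relevant replacement path is hop-long; this is precisely where the $\beta$-versus-$\varepsilon$ constraint is used.

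In short, the correct proof is: apply \Cref{cor:distance_oracle} with $c=\gamma\ell/(\ell{+}1)$, then \Cref{thm:short_distance_DSO} with $L=n^{\gamma/((\ell+1)(f+1))}$, then \Cref{thm:general_DSO}, and check arithmetic. The ``framework'' you describe (one pivot set, $f{+}1$ segments, product of $1+\varepsilon/(f{+}1)$ factors) is not the one the paper builds, and re-deriving the result along those lines would require reproving large parts of \Cref{sec:framework_DSO_short,sec:framework_DSO}.
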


We observe that the additive stretch of $\beta=2$ is necessary due to the unconditional lower bound of $\Omega(n^2)$ on the size of every distance oracle with a purely multiplicative stretch of $\alpha <3$, as discussed in~\Cref{sec:intro_DO}. The assumption of unique shortest paths can be achieved, for example, by perturbing the edge weights with random small values. This means that an unweighted graph $G$ becomes weighted and its edge weights are very close to 1. As an alternative, we can compute, in time $O(mn + n^2 \log^2 n)$, a set of unique shortest paths via \emph{lexicographic pertubation}~\cite{CaChEr13}. 

Our main technique to obtaain the new distance sensitivity oracle
is to develop a reduction that, given a path-reporting DO with stretch $(\alpha, \beta)$,
constructs a $((1{+}\epsilon)\alpha, \beta)$-stretch $f$-DSO.
Crucially, the reduction results in a subquadratic-space $f$-DSO 
provided that the initial DO also takes only subquadratic space.
Although the problem of designing static as well as fault-tolerant distance (sensitivity) oracles has been studied extensively in the past two decades, there has been no substantial progress to obtain black-box conversions from compact DOs to compact $f$-DSOs.
The work by Bilò et al.~\cite{Bilo24ApproxDSOSubquadraticTheoretiCS} comes close,
but their construction of the $f$-DSO is entangled with the inner workings
of the DO of Thorup and Zwick~\cite{ThorupZ05}.
Also, their analysis relies heavily on the input distance oracle having
a purely multiplicative stretch of $3$.
In contrast, we develop algorithms that can work with any distance oracle
as long as it is able to report an approximate shortest path that adheres to the stretch bound.
Our analysis is able to incorporate any multiplicative stretch $\alpha \ge 1$ 
as well as additive stretch $\beta$ that satisfies very mild technical assumptions.
(See \Cref{thm:general_DSO} for the precise statement.)
Plugging in the distance oracle with stretch $(1+\frac{1}{\ell}, 2)$ from \Cref{cor:distance_oracle}
then allows us to achieve the $((1+\frac{1}{\ell})(1+\eps), 2)$-stretch $f$-DSO with subquadratic space 
and small polynomial query time from \Cref{thm:chained}. 

\paragraph{Emulating Searches in Fault-Tolerant Trees.} 
Our transformation that makes distance oracles fault tolerant has two major steps.
In the first one, we only treat \emph{hop-short} replacement paths, these are shortest paths in $G{-}F$
whose number of edges is bounded by some cut-off parameter $L$.
We transform any $(\alpha,\beta)$-stretch distance oracle into an $f$-distance sensitivity oracle
for hop-short paths ($f$-DSO$^{\leq L}$) with the same stretch.
The second step then combines the solutions for hop-short paths 
into a general distance sensitivity oracle ($f$-DSO).

For the the first step, we develop \emph{fault-tolerant tree oracles},
which are a new way to retrieve hop-short replacement paths from a previously known data structure called fault-tolerant trees (FT-trees), but without actually storing those trees.
FT-trees were originally introduced by Chechik, Cohen, Fiat, and Kaplan~\cite{ChCoFiKa17}.
There is an FT-tree $FT(s,t)$ required for every pair of vertices and a query traverses along a root-to-leaf path in the respective tree, 
until a shortest $s$-$t$-path in $G-F$ is found.
This solution uses super-quadratic $\Omega(n^2 L^f)$ space in total and is thus too large for our purpose.
We devise a technique to emulate a search in an FT-tree without access to the tree.
We use those searches to generate a carefully chosen family of subgraphs of $G$
and apply the input distance oracle to each of them.
This ensures that any query $(s,t,F)$ that satisfies $d(s,t,F)\leq L$, is answered with an $(\alpha, \beta)$-approximate path.

\begin{restatable}{theorem}{shortdistanceDSO}
\label{thm:short_distance_DSO}
Let $G$ be an unweighted (possibly directed) graph, with $n$ vertices, 
and let $f$ and $L$ be positive integer parameters possibly depending on $n$.
Assume access to a path-reporting distance oracle that, on any spanning subgraph of $G$, 
has stretch $(\alpha,\beta)$, takes space $\mathsf{S}$, query time $\mathsf{Q}$, 
and preprocessing time $\mathsf{T}$.
Then, there is an $f$-edge fault-tolerant distance sensitivity oracle for replacement paths in $G$
with at most $L$ edges that has 
\begin{itemize}
	\item stretch $(\alpha,\beta)$,\vspace*{-.25em}
	\item space $O(fL \log n)^f \cdot \mathsf{S}$,\vspace*{-.25em}
	\item query time $\Otilde(f \cdot (\mathsf{Q} + \alpha L + \beta + f^2 L))$,\vspace*{-.25em}
	\item preprocessing time $\Otilde(n^2 (\alpha L + \beta)^f) \cdot ( O(fL \log n)^f + \mathsf{Q} )
				+ O(fL \log n)^f \cdot \mathsf{T}$.
\end{itemize}
\end{restatable}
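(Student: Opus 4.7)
The plan is to simulate a traversal of a fault-tolerant tree (FT-tree) of Chechik et al.\ \emph{without ever storing it}, routing every distance request through the input distance oracle on a precomputed family $\mathcal{F}$ of subgraphs of $G$. Recall that $FT(s,t)$ has depth at most $f$: each node $v$ at depth $i$ carries a failure set $F_v$ of $i$ edges, stores an $(\alpha,\beta)$-approximate shortest $s$-$t$-path $P_v$ in $G-F_v$, and has one child $v_e$ for every $e\in P_v$, with $F_{v_e}=F_v\cup\{e\}$. A query $(s,t,F)$ descends the tree by returning $P_v$ whenever $P_v\cap F=\emptyset$ and otherwise moving to $v_e$ for some $e\in P_v\cap F$; after at most $|F|\leq f$ steps we reach a leaf whose stored path, being an $(\alpha,\beta)$-approximation in $G-F_v\supseteq G-F$, gives the claimed stretch. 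Since we only need to handle $d(s,t,F)\leq L$, each $|P_v|\leq \alpha L+\beta$.

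The technical core is constructing $\mathcal{F}$ so that every $G-F_v$ the emulation might visit belongs to $\mathcal{F}$, while $|\mathcal{F}|=O(fL\log n)^f$. I would build $\mathcal{F}$ level by level: set $\mathcal{F}_0=\{G\}$ and, for $i=1,\dots,f$, let $\mathcal{F}_i=\{H-\{e\}:H\in\mathcal{F}_{i-1},\;e\in S_H\}$, where $S_H$ is a hitting set of size $O(fL\log n)$ for all $(\alpha L+\beta)$-hop paths appearing between the $n^2$ vertex pairs in $H$. Such an $S_H$ exists by sampling each edge with probability $\Theta((\log n)/L)$ and a union bound, and can be derandomised by a greedy set-cover argument compatible with the deterministic construction promised in the theorem. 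The input DO is installed on each member of $\mathcal{F}$, giving the stated space $O(fL\log n)^f\cdot\mathsf{S}$; preprocessing splits into the $n^2$ pair enumerations performed at each of the $f$ expansion rounds, contributing the $\Otilde(n^2(\alpha L+\beta)^f)\cdot(O(fL\log n)^f+\mathsf{Q})$ factor, and the $|\mathcal{F}|$ many DO constructions, contributing $O(fL\log n)^f\cdot\mathsf{T}$.

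At query time I would start with $F_v=\emptyset$ and iterate: locate $H=G-F_v$ in $\mathcal{F}$ via a hash on the canonical encoding of $F_v$, call the DO on $H$ to obtain $P_v$ in time $\mathsf{Q}+O(\alpha L+\beta)$, scan $P_v$ against $F$ stored as a hash table, and either return $P_v$ or update $F_v\leftarrow F_v\cup\{e\}$ for some $e\in P_v\cap F$ and continue. Running for at most $f$ rounds with a $O(f^2L)$ bookkeeping overhead gives the stated $\Otilde(f\cdot(\mathsf{Q}+\alpha L+\beta+f^2L))$ query time.

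The main obstacle is the \emph{consistency} between the preprocessing and the query: the child subgraph $H-\{e\}$ produced during a query must actually lie in $\mathcal{F}$. This forces the input DO to be deterministic (or canonically made so by a tie-break on vertex identifiers) so that the same $P_v$ is produced in both phases, and forces the query to pick $e$ from the marked set $S_H$ on which $\mathcal{F}$ was expanded. The natural convention is to return the first edge of $P_v\cap F$ that also lies in $S_H$; the heart of the argument is verifying that such an edge always exists, which is exactly what the hitting-set property of $S_H$ enforces, while simultaneously keeping $|S_H|$ small enough that the final $|\mathcal{F}|$ bound and hence the subquadratic space promise are preserved.
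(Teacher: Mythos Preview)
Your proposal has a genuine gap at exactly the place you flag as ``the heart of the argument''. The hitting-set property of $S_H$ guarantees $P_v \cap S_H \neq \emptyset$ for every approximate path $P_v$ the DO outputs in $H$; it does \emph{not} guarantee $P_v \cap F \cap S_H \neq \emptyset$ whenever $P_v \cap F \neq \emptyset$. A simple counterexample: take $P_v = (e_1, e_2)$ with $S_H \cap P_v = \{e_2\}$ and $F = \{e_1\}$. Then the query must descend along $e_1$, but $G-F_v-\{e_1\}$ was never added to $\mathcal{F}$ because $e_1 \notin S_H$. No small edge-hitting set can fix this: since $P_v \cap F$ can be a single arbitrary edge of $P_v$, forcing $S_H$ to contain it would require $S_H$ to contain every edge of every $P_v$, blowing up $|S_H|$ far beyond $O(fL\log n)$.

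The paper sidesteps this obstacle by abandoning the idea that $G - F_v$ itself must lie in the family. Instead, for each level $i$ it invokes the deterministic \emph{hit-and-miss} construction of Karthik and Parter: given the list of pairs $(P'_x, F_x)$ over all level-$i$ FT-tree nodes (where $P'_x$ is an exact shortest $s$-$t$-path in $G-F_x$), it produces $O(fL\log n)^i$ subgraphs $\mathcal{G}_i$ such that for every pair some graph $H_x \in \mathcal{G}_i$ contains $P'_x$ and excludes all of $F_x$. Crucially, a single $H_x$ may serve many different pairs, and $H_x$ is generally \emph{not} $G - F_x$. At query time, given the current $F_x$, an auxiliary structure based on the error-correcting-code machinery identifies $O(f\log n)$ graphs in $\mathcal{G}_i$ that exclude $F_x$; the representative DO $\mathcal{O}_x$ is the one among these returning the minimum distance (ties broken by a fixed ID). Determinism of the DOs then ensures the path $P_x$ recovered at query time is identical to the one used to branch the FT-tree during preprocessing, so any edge $e \in P_x \cap F$ is a legitimate child---no restriction to a hitting set is needed.
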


\noindent
Note that the behavior of $f$-DSOs (hop-short or general) are usually only discussed for valid queries,
that is, triples $(s,t,F)$ where the edges in $F$ are actually present in $G$.
\emph{Checking} for validity requires $\Omega(n^2)$ space,
which would make subquadratic-space $f$-DSO impossible.
Our query algorithm never uses the assumption $F \subseteq E$
and the data structure can be queried with any triplet $(s,t,F)$ 
where $F \subseteq \binom{V}{2}$ is a set of at most $f$ pairs of vertices.
In this case it returns the approximate distance for the valid query $(s,t, F \cap E)$.\footnote{
	See \Cref{lem:correctness_query_short-distance_DSO,lem:weaker_guarantee} for the technical details.
}

\paragraph{General Distance Sensitivity Oracles.} 
Bilò et al.~\cite{Bilo24ApproxDSOSubquadraticTheoretiCS} described how to apply an $f$-DSO$^{\le L}$
(with the restriction to hop-short paths)
to construct a general distance sensitivity oracle without the constraint.
They used two sets of pivots, which are vertices at with the hop-short solutions are combined.
However, their approach only works for $f$-DSO$^{\le L}$ with purely multiplicative stretch of $3$,
resulting in an $f$-DSO with multiplicative stretch $3+\varepsilon$.
We generalize this by introducing a new data structure called \emph{pivot trees}, as well as pinpointing the places in their construction that need to be adapted to accompany multiplicative stretch $\alpha \neq 3$ an additive stretch $\beta > 0$.
The new pivot trees allow us to quickly find relevant pivots in $G{-}F$ 
that are close enough to the endpoints $s$ and $t$ in the query. 
An additional difference to \cite{Bilo24ApproxDSOSubquadraticTheoretiCS} is a more involved analysis of the stretch.
The issue is that the additive part $\beta$ accumulates while the answers of the $f$-DSO$^{\le L}$
are aggregated.
Fortunately, this happens only if the paths in questions are hop-long,
that is, if they have more than $L$ edges.
This allows us to introduce an inductive argument controlling the stretch accumulation
and charge the overhead to the multiplicative part instead.
As a result, we are able to turn a $f$-DSO$^{\le L}$ with stretch $(\alpha, \beta)$ 
into a general $f$-DSO that has stretch $(\alpha (1{+}\epsilon), \beta)$,
all this while keeping the total space subquadratic.

To simplify the presentation, we first provide a randomized solution in \Cref{sec:framework_DSO} and then show how to 
derandomize it in \Cref{sec:derandomization}.
For the randomized construction, the guarantees hold \emph{with high probability} (w.h.p.),
which we define as with probability at least $1- n^{-c}$ for some constant $c >0$.
In fact, $c$ can be made arbitrarily large without affecting the asymptotics.
We later show how to derandomize the oracle without any loss in its features,
under very mild assumption on the parameters $f$ and $L$.
The main source of randomization is the creation of the pivot sets.
The aforementioned pivot trees not only allow us to get better bounds for the randomized data structure 
but even help with derandomizing it. 

To the best of our knowledge, we provide the first deterministic $f$-DSO with subquadratic space, near-additive stretch, and sublinear query time.

\begin{restatable}{theorem}{generaldso}
\label{thm:general_DSO}
	Let $G$ be an undirected, unweighted graph with $n$ vertices, $m$ edges,
	and unique shortest paths.
	Let $f$ and $L$ be positive integer parameters possibly depending on $n$ and $m$
	such that $2 \le f \le L \le n$ as well as $L = \omega(\log n)$.
	Assume access to an $f$-edge fault-tolerant distance sensitivity oracle for replacement paths in $G$
with at most $L$ edges,
	that has stretch $(\alpha, \beta)$, space $\mathsf{S}_L$, query time $\mathsf{Q}_L$,
	and preprocessing time $\mathsf{T}_L$.
    Then, for every
	$\varepsilon = \varepsilon(n, m, f, L) > 0$,
	and $\beta = o\!\left(\frac{\varepsilon^2 L}{f^3 \log n}\right)$,
	there is a randomized (general) $f$-edge fault-tolerant distance sensitivity oracle for $G$ 
	that with high probability has
	\begin{itemize}
	\item stretch $(\alpha (1{+}\varepsilon), \beta)$,\vspace*{-.25em}
	\item query time $\Otilde(f^5 L^{f-1} (\mathsf{Q}_L+f)/\varepsilon^2)$,\vspace*{-.25em}
	\item preprocessing time 
		$\mathsf{T}_L  + O(L^{3f}n) + \Otilde(f^2 mn^2/L) \cdot O(\log n/\varepsilon)^{f+1}$.
	\end{itemize}
	The space of the data structure is w.h.p.\
	\begin{equation*}
		\mathsf{S}_L + \Otilde(fL^{2f-1} n)
		+ \Otilde\!\left(f^2 \frac{n^2}{L}\right) \cdot
			O\!\left(\frac{\log n}{\varepsilon}\right)^{f+2}.
	\end{equation*}
	If additionally $f \ge 4$ and $L = \Otilde(\sqrt{f^3m/\varepsilon})$,
	the data structure can be made deterministic with the same
	stretch, query time, preprocessing time, and space.
\end{restatable}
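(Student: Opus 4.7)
The plan is to reduce the general $f$-DSO to the hop-short version from \Cref{thm:short_distance_DSO} by chaining together at most $\Otilde(1/\varepsilon)$ hop-short pieces, using pivot vertices as the gluing points. I first sample a set $B \subseteq V$ of size $\Otilde(n/L)$ uniformly at random so that, w.h.p., every shortest path in $G{-}F$ of length at least $L$ contains a pivot, for every $F$ of size $\le f$ (this can be argued by a standard union bound over all such paths, of which there are at most $n^{O(f)} L^{f}$ relevant ones after applying FT-trees of depth $f$). For each pivot $b \in B$ I precompute and store a small data structure that allows finding, given $s$ and $F$, a pivot in $B$ that is close to $s$ in $G{-}F$ and lies on a shortest path. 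This is where the new \emph{pivot trees} come in: for every vertex $v$ I maintain a tree of depth $f$ whose root-to-leaf traversal, guided by $F$, outputs an approximately nearest pivot in $G{-}F$, in time $\Otilde(f)$ per traversal step. This replaces the more ad hoc pivot look-up in \cite{Bilo24ApproxDSOSubquadraticTheoretiCS} and, crucially, is what permits derandomization later.

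Given a query $(s,t,F)$, the query algorithm first asks the hop-short oracle for $\widehat{d}^{\le L}(s,t,F)$; if the true distance is $\le L$, the stretch bound follows immediately. Otherwise, I use the pivot tree at $s$ to locate a pivot $b$ that is at hop-distance at most roughly $\varepsilon L$ from $s$ on a near-shortest replacement path, answer the hop-short query $(s,b,F)$, then recurse at $b$ to find the next pivot, and so on, stitching the $\le 1/\varepsilon$ segments together until we reach $t$ via a final hop-short query. Each segment has at least $\Omega(\varepsilon L)$ edges (otherwise the next pivot would already have been $t$ or a vertex past $t$), so the number of segments is $O(1/\varepsilon)$, and each takes $\mathsf{Q}_L + \Otilde(f)$ time, giving the claimed query time after multiplying by the $\Otilde(f^{O(1)} L^{f-1})$ overhead of the pivot look-ups.

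The main obstacle, and the most delicate part of the argument, is the stretch analysis, because additive error $\beta$ naively accumulates over the $\Theta(1/\varepsilon)$ stitched segments to $\Theta(\beta/\varepsilon)$, which is unacceptable. I handle this by an inductive argument on the number of hop-long segments: each time we invoke the hop-short oracle on a subpath of length $\Omega(\varepsilon L)$, the additive $\beta$ can be re-expressed as a multiplicative overhead of $1 + O(\beta/(\varepsilon L))$. Under the hypothesis $\beta = o(\varepsilon^{2} L / (f^{3} \log n))$, summing these multiplicative overheads along the $O(1/\varepsilon)$ segments still yields a factor of $1 + O(\varepsilon)$, which is absorbed into $\alpha(1{+}\varepsilon)$. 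The single \emph{final} hop-short call (which may traverse the last $\le L$ edges of the path) contributes the only surviving $+\beta$ in the statement. The same inductive bookkeeping also governs the failure probability of the pivot sampling, which is why it stays polynomially small even after a union bound over all $\binom{m}{\le f}$ failure sets.

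For the space and preprocessing bounds I just sum the three contributions: the hop-short oracle itself ($\mathsf{S}_L$), the $n$ pivot trees of depth $f$ storing $\Otilde(L^{2f-1})$ entries each, and an $\Otilde(n^2/L)$-entry table indexed over pivots with $O(\log n/\varepsilon)^{f+2}$ overhead per entry for the failure patterns, matching the bound stated in the theorem. Finally, derandomization: the one randomized ingredient is the set $B$. Under the extra assumption $f \ge 4$ and $L = \Otilde(\sqrt{f^{3} m / \varepsilon})$, the number of relevant replacement paths of length $\ge L$ is at most $n^{2}(\alpha L + \beta)^{f}/L \le m \cdot \polylog(n)$, so an $L$-hitting set of size $\Otilde(n/L)$ can be constructed deterministically in the claimed preprocessing time by the standard greedy set-cover derandomization (or, equivalently, by the method of conditional expectations applied path-by-path through the pivot trees). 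Plugging this deterministic $B$ into the construction yields the deterministic variant with the same parameters, which completes the proof.
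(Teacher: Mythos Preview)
Your proposal has a genuine structural gap. The ``walk along the replacement path via pivots'' scheme you describe presumes you can, at each step, locate a pivot $b$ that lies on (or near) the actual replacement path $P(s,t,F)$ and is roughly $\varepsilon L$ hops further along it. But you never know $P(s,t,F)$; the pivot tree you invoke only returns a pivot close to $s$ in $G{-}F$, with no relation to the direction of $t$. So after hopping to $b$ you have no guarantee that $d(b,t,F)$ is smaller than $d(s,t,F)$ by $\Omega(\varepsilon L)$, and the recursion need not terminate in $O(1/\varepsilon)$ steps. The same issue breaks your stretch argument: the ``segments'' you stitch are not subpaths of any single near-shortest $s$--$t$ path.

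The paper's construction is quite different. It builds a \emph{weighted auxiliary graph} $H$ on the $O(f)$ vertices $V(F)\cup\{s,t\}$ and returns $d_H(s,t)$. The edge weights of $H$ are computed from two ingredients that your sketch omits entirely: FT-trees \emph{with granularity} $\lambda=\Theta(\varepsilon L)$ stored for pairs in a very sparse pivot set $B_1$ ($|B_1|=\Otilde(fn/L^f)$), and FT-trees \emph{without granularity} for pairs in $B_2\times V$ with a denser $B_2$ ($|B_2|=\Otilde(fn/\lambda)$). The pivot trees are used only to decide, for each $u\in V(H)$, whether $\ball_{G-F}(u,\lambda/2)$ is dense (contains a $B_1$-pivot) or sparse (then one enumerates the $B_2$-pivots in it). Correctness hinges on the trapezoid machinery of Chechik et~al.: if $P(u,v,F)$ is ``close'' to some failure, \Cref{lem:xyz-lemma} produces a vertex $z\in V(F)$ through which one can reroute, and this is exactly why restricting $H$ to failure endpoints suffices. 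The induction that controls the additive blow-up is over the $O(f^2)$ pairs of $V(H)$ ordered by replacement distance, not over $O(1/\varepsilon)$ path segments; the $(X{+}1)\beta$ terms (with $X=\Theta(f\log(n)/\varepsilon)$) accumulate $\rank(u,v)\le O(f^2)$ times and are finally absorbed into the multiplicative part using $d(s,t,F)\ge L$. None of this structure appears in your outline, and the third space term $\Otilde(f^2 n^2/L)\cdot O(\log n/\varepsilon)^{f+2}$ is precisely the cost of the $B_2\times V$ FT-trees, not a generic ``table of failure patterns''.
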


If $\Otilde(fL^{2f}) = \Otilde(n)$, the space in \Cref{thm:general_DSO} simplifies to
$\mathsf{S}_L + \Otilde(f^2 n^2/L) \cdot O(\log n/\varepsilon)^{f+2}$.
That means our construction has subquadratic space as long as the space requirement $\mathsf{S}_L$
of the input $f$-DSO for short hop-distances is subquadratic.
We would like to add some context to the restriction on the additive stretch $\beta$.
Assume for simplicity that $f$ is a constant.
In most cases in the literature where an $f$-DSO$^{\leq L}$ is used to build an $f$-DSO,
$L$ is of order $n^{\Theta(1/f)}$.
We also use such a value in this work.
In this case, our construction supports an additive stretch that can be as large as a small polynomial,
as long as it is asymptotically smaller than $\varepsilon^2 \nwspace n^{O(1/f)}$.
Conversely, for a fixed $\beta$, the restriction can be interpreted as bounding how fast
the approximation parameter $\varepsilon$ can approach $0$ (we do not require $\varepsilon$ to be constant).
In \Cref{thm:chained}, we have $\beta = 2$,
hence $\varepsilon = \omega(\sqrt{\log(n)}/n^{O(1/f)})$.
Finally, the derandomization requires $f \ge 4$ and $L = \Otilde(\sqrt{f^3 m/\varepsilon})$.
Even in the unfavorable case in which both $f$ and $\varepsilon$ are constants,
this allows for a cut-off parameter up to $\Otilde(\sqrt{m})$.
In very dense graphs, this is no restriction at all.

\subsection{Related Work}
\label{subsec:related_work}

\paragraph{Distance Oracles.}
Thorup and Zwick~\cite{ThorupZ05} showed that, for any positive integer $k$, 
any DO for undirected graphs with a multiplicative stretch strictly less than $2k + 1$ must take $\Omega(n^{1+1/k})$ bits off space,
assuming the Erd\H{o}s girth conjecture \cite{Erd64extremal}.
The lower bound only applies
to graphs that are sufficiently dense and to queries that involve pairs of neighboring vertices,
leading to several attempts to bypass it in different settings.
For example, there is a line of work on improved distance oracles for sparse graphs.
Porat and Roditty~\cite{PoratR11} showed that 
for unweighted graphs and any $\varepsilon>0$, one can construct a DO 
with multiplicative stretch $1{+}\varepsilon$ 
and query time $\widetilde O(m^{1-\frac{\varepsilon}{4+2\varepsilon}})$.
The space of the data structure is $O(nm^{1-\frac{\varepsilon}{4+2\varepsilon}})$,
which is subquadratic for $m = o(n^{1+\frac{\varepsilon}{4+\varepsilon}})$.
P\v{a}tra\c{s}cu, Roditty, and Thorup~\cite{Patrascu12NewInfinity}
obtained a series of DOs with fractional multiplicative stretches for sparse graphs.
For general dense graphs, P\v{a}tra\c{s}cu and Roditty~\cite{PatrascuRoditty14BeyondThorupZwick} 
devised a distance oracle  for unweighted graphs with stretch $(2,1)$ that has $O(1)$ query time, $O(n^{5/3})$ space, 
and can be constructed in time $O(mn^{2/3})$.
They also showed that $(\alpha,\beta)$-approximate DOs with $2\alpha+\beta < 4$
require $\Omega(n^2)$ space,
assuming conjecture on the hardness of set intersection queries.
Compared to the P\v{a}tra\c{s}cu and Roditty upper bound~\cite{PatrascuRoditty14BeyondThorupZwick},
Baswana, Goyal, and Sen~\cite{Baswana09Nearly2ApproximateJournal}
marginally increased the stretch to $(2,3)$ and space to $\Otilde(n^{5/3})$
in order to reduce the preprocessing time to $\Otilde(n^2)$.
The stretch was later reset again to $(2,1)$ by Sommer~\cite{sommer2016all}, 
keeping the improved time complexity.
A successive work by Knudsen~\cite{Knudsen17AdditiveSpanner} removed all additional poly-logarithmic factors in both the construction time and space.
	
Agarwal and Godfrey~\cite{AgarwalGodfrey13DOsStretchLessThan2,Agarwal14SpaceStretchTimeTradeoffDOs} 
investigated the possibility of constructing a distance oracle with a stretch less than 2,
albeit at the expense of slower query times.
They showed that, for any positive integer $\ell$ and any real number $c \in (0,1]$, it is possible to design a DO of size $O(m + n^{2-c})$
and multiplicative stretch $1 + \frac{1}{\ell}$.
The query time is $O((n^c\mu)^{\ell})$, where $\mu = \frac{2m}{n}$ is the average degree of the graph.
Furthermore, they also showed that the query time can be reduced to $O((n^c+\mu)^{2\ell-1})$ at the cost of a small additive stretch  $\frac{2\ell-1}{t}\, W$, with $W$being the maximum edge weight.
Though the constructions in~\cite{AgarwalGodfrey13DOsStretchLessThan2,Agarwal14SpaceStretchTimeTradeoffDOs} have a multiplicative stretch better than 2, 
their DOs have two main drawbacks.
The subquadratic space only holds for sparse graphs, 
and, while they achieve a very low stretch, the query time is super-linear for dense graphs.

Akav and Roditty~\cite{akav2020almost} proposed, for any $\varepsilon \in (0,\frac{1}{2})$,
an $O(m+n^{2-\Omega(\epsilon)})$-time algorithm that computes a DO with stretch $(2+\epsilon, 5)$ and $O(n^{11/6})$ space,
thus breaking the quadratic time barrier for multiplicative stretch below $3$.
Chechik and Zhang~\cite{ChechikZ:22} improved this by offering
both a DO with stretch $(2,3)$ that can be built in $\Otilde(m + n^{1.987})$ time
and a DO with stretch $(2+\varepsilon, c(\varepsilon))$ that can be built in $O(m + n^{\frac{5}{3}-\varepsilon})$ time,
where $c(\varepsilon)$ is exponential in $1/\varepsilon$.
Both data structures have space $\Otilde(n^{5/3})$ and a constant query time.

\paragraph{Distance Sensitivity Oracles.}
Most of the work on distance sensitivity oracles is about handling a very small number $f \in \{1,2\}$ of failures~\cite{DemetrescuT02, BeKa08,  DeThChRa08, DuanP09a, BeKa09, BaswanaK13, ChCo20, BCFS21SingleSourceDSO_ESA,   GrandoniVWilliamsFasterRPandDSO_journal, GuRen21, Ren22Improved}. Here, we focus on related work with sensitivity $f\geq 3$ as this is the setting of the second problem in this paper. 
In their seminal work, Weimann and Yuster~\cite{WY13} designed a randomized $f$-DSO for exact distances 
introducing a size-time trade-off that is controlled by a parameter $\alpha \in (0,1)$.
More precisely, their oracle w.h.p.\ has space  $\Otilde(n^{3-\alpha})$, query time of $\Otilde(n^{2-2(1-\alpha)/f})$,
and can be built in time $\Otilde(mn^{2-\alpha})$.
Van den Brand and Saranurak \cite{BrSa19} and Karczmarz and Sankowski \cite{KarczmarzS23} presented $f$-DSOs using algebraic algorithms.
However, their space requirement is at least quadratic and their query time is at least linear.
Duan and Ren~\cite{DuRe22} provided an alternative $f$-DSO for exact distances with $O(fn^4)$ space, $f^{O(f)}$ query time,
that is, constant whenever $f$ is a constant.
The preprocessing time for building their oracle is exponential in $f$, namely, $n^{\Omega(f)}$. 
Recently, Dey and Gupta \cite{dey2024nearly} developed an $f$-DSO for undirected graphs
where each edge has an integral weight from $\{1\ldots W \}$ with $O((c f \log(nW))^{O(f^2)})$ query time,
where $c>1$ is some constant.
It has near-quadratic space $O(f^4 n^2 \log^2 (nW))$.
A drawback of their oracles is again the preprocessing time of $\Omega(n^f)$,
and their oracle requires at least $\Omega(n^2)$ space.

When allowing approximation, the $f$-DSO of Chechik et al.~\cite{ChCoFiKa17}
guarantees a multiplicative stretch of $1+\eps$ with a space requirement of $O(n^{2+o(1)}\log W)$,
where $\eps >0$ $W$ is constant and $W$ is the weight of the heaviest edge.
Their oracle can handle up to $f = o(\log n / \log \log n)$ failures,
has a query time of $O(f^5\log n \log \log W)$, and  can be build in $O(n^{5+o(1)} (\log W)/\varepsilon^f)$ time.
In fact, the preprocessing time has recently been reduced to $O(mn^{2+o(1)}/\varepsilon^f)$~\cite{Bilo24ApproxDSOSubquadraticTheoretiCS}

Besides the general Thorup-Zwick bound~\cite{ThorupZ05} that assumes the girth conjecture,
they also showed \emph{unconditionally} that for undirected graphs with a multiplicative stretch better than $3$ must take $\Omega(n^2)$ bits of space.
This of course also applies in the presence of failures and is the reason why all the above $f$-DSOs have at least quadratic space. 
Like for distance oracles, there has been a line of work focusing on the design of $f$-DSOs with subquadratic space, sidestepping the bound.
These oracles must have stretch $(\alpha,\beta)$ with $\alpha+\beta\geq 3$ since a stretch $(\alpha, \beta)$ 
can also be stated as $(\alpha+\beta,0)$.
Chechik, Langberg, Peleg, and Roditty~\cite{CLPR12} designed an $f$-DSO that, for any integer parameter $k\geq 1$,  has space of $O(fkn^{1+1/k}\log(nW))$, query time $\Otilde(|F| \log\log d_{G-F}(s,t))$, and guarantees a multiplicative stretch of $(8k-2)(f+1)$.
Note that the stretch depends on the sensitivity parameter $f$. 
Recently, two $f$-DSOs with subquadratic space and stretch independent from $f$ have been developed.
The first one is by Bilò, Choudhary, Cohen, Friedrich, Krogmann, and Schirneck~\cite{Bilo23CompactDO} that can handle up to $f = o(\log n/\log\log n)$ edge failures and, for every integer $k \ge 2$, guarantees a stretch of $2k-1$.
The size and the query time depend on some trade-off parameter $\alpha \in (0,1/k)$.
They are equal to $kn^{1+\alpha + \frac{1}{k} +o(1)}$ and $\Otilde(n^{1 + \frac{1}{k} -\frac{\alpha}{k(f+1)}})$, respectively. 
The second $f$-DSO by Bilò et al.~\cite{Bilo24ApproxDSOSubquadraticTheoretiCS} works for unweighted graphs $G$ with unique shortest paths.
For every constants $f \ge 2$,  $0 < \gamma < \sfrac{1}{2}$, and any $\varepsilon > 0$ (possibly depending on $m$, $n$, and $f$),
their oracle has stretch $3 + \eps$,
space   $\Otilde(n^{2-\frac{\gamma}{f+1}}) \,{\cdot}\, O(\log n/\varepsilon)^{f+2}$,
query time $O(n^{\gamma}/\varepsilon^2)$,
  and preprocessing time
  $\Otilde(mn^{2-\frac{\gamma}{f+1}})
  	\cdot O(\log n/\varepsilon)^{f+1}$.

\subsection{Organization of the Paper}
\label{subsec:outline}
The next section provides an overview of the techniques we use to prove our main theorems. 
In \Cref{sec:prelims}, we recall the central definitions.
\Cref{sec:DOs} treats the distance oracles.
This includes the $(1{+}\varepsilon, 2W)$-stretch DO, 
as well as a hierarchy of DOs that trade a smaller size for a higher stretch.
It follows a two-step black-box reduction turning a (static) DO 
into a fault-tolerant distance sensitivity oracle.
The first step, in \Cref{sec:framework_DSO_short}, 
constructs an $f$-DSO$^{\leq L}$ for hop-short distances as an intermediate data structure.
Then, in \Cref{sec:framework_DSO}, the general $f$-DSOs is computed. 
The results in \Cref{sec:DOs,sec:framework_DSO} are initially phrased as random algorithms,
\Cref{sec:derandomization} derandomizes them.
The paper is concluded in \Cref{sec:chained}, 
where we combine the DO and the reduction to prove \Cref{thm:chained}.

\section{Technical Overview}
\label{sec:overview}

We now give a more detailed overview how we achieve the results stated above.
The sections corresponds to the two main parts of the paper.
The first is about our new construction of distance oracles.
The second part describes the framework that turns (static)
DOs into distance \emph{sensitivity} oracles.
This consists of two steps:
first obtain an $f$-DSO$^{\leq L}$ and then use if for the general $f$-DSO.
In the third part of the overview, we briefly sketch our derandomization approach.

\subsection{Improved Distance Oracles}
\label{subsec:overview_DO}

Our distance oracles introduce a small additive stretch
in order to make the multiplicative stretch arbitrarily small 
while, at the same time, keep the space subquadratic.
We assume the input graph to be undirected,
for it is known that subquadratic-space DOs are impossible for digraphs~\cite{ThorupZ05}.
The edges may have non-negative weights from a domain of polynomial size\footnote{%
	The domain size is such that any graph distance
	can be encoded in a constant number of machine words.
}
with maximum weight $W$.
A common pattern in the design of distance oracles is to designate a subset 
(or hierarchy of subsets) of vertices
called \emph{centers}~\cite{ThorupZ05}, \emph{landmark vertices}~\cite{AgarwalGodfrey13DOsStretchLessThan2} or \emph{pivots}~\cite{Chechik14,Chechik15}.
The data structure stores, for each vertex $v$ in the graph,
the distance from $v$ to all pivots.
Also, $v$ knows its closest pivot $p(v)$.
When given two query vertices $s,t \in V$, the oracle first checks whether $s$ and $t$
are sufficiently ``close'' to work out the exact distance $d(s,t)$.
The definition of ``close'' varies among the different constructions in the literature.
If $s$ and $t$ are instead ``far'' from each other, the estimate $d(s,p(s))+d(p(s),t)$ is returned,
which is at most $d(s,t) + 2 \nwspace d(s,p(s))$.
Since $s$ and $t$ are ``far'' compared to $d(s,p(s))$, 
the estimate has a good stretch.
This observation, of course, is in no way confined to the vertex $s$.
For any vertex $v$ on a shortest $s$-$t$-path,
$d(s,p(v))+d(p(v),t)$ incurs an error of at most $2 \nwspace d(v,p(v))$.
This gives some freedom on how much storage space and query time
one is willing to spend
on finding such a $v$ with a small distance to its closest pivot.  

Our twist to that method is to look at the vicinity of a vertex not in terms of a fixed radius,
but by an absolute bound on the number of considered vertices.
Namely, we define a cut-off value $K$ and store, for each vertex $v$, the $K$ nearest vertices,
regardless of the actual distance to $v$.
Choosing $\Otilde(n/K)$ pivots
ensures that every vertex has a pivot in its $K$-vicinity.
Storing the distances from every vertex to every pivot takes $\Otilde(n^2/K)$ space,
so $K$ is our saving over quadratic space.

One of two things can happen when searching along the shortest path from $s$ to $t$ 
for a suitable vertex $v$.
If all vertices in the list $K[v]$ have a small graph distance to $v$,
also the closest pivot $p(v)$ must be nearby.
Otherwise, there are elements in $K[v]$ that have a large graph distance,
which we can use to skip ahead in the path.
This sets up a win-win strategy.
Consider the auxiliary graph $H$ on the same vertex set as $G$
in which any vertex $v$ has an edge to each member of $K[v]$.
Given a query $(s,t)$ and an approximation parameter $\varepsilon > 0$,
we conduct a bidirectional breath-first search
in $H$ starting from both $s$ and $t$, trimming the search at hop-distance $1/\varepsilon$.
The two searches meeting in one or more vertices is our definition of $s$ and $t$ being ``close''.
We can then compute $d(s,t)$ exactly by minimizing $d(s,v)+d(v,t)$ over the intersecting vertices.
Otherwise, we prove that the reason why the searches remained disjoint
was that we could not skip ahead fast enough.
There must have been a vertex $v$ on the shortest $s$-$t$-path
for which all neighbors in $K[v]$ have a small distance to $v$, including $p(v)$.
We take ``small'' to mean at most  $\frac{\varepsilon}{2} \, d(s,t) + W$,
where $W$ is the maximum edge weight.
The sum $d(s,p(v))+ d(p(v),t)$ thus overestimates
the true distance by at most $2 \nwspace d(v,p(v))$, resulting in an $1+\varepsilon$
multiplicative stretch and $2W$ additive.

Spacewise, the bottleneck is to store all the distances between vertices and pivots.
In a second construction, we devise a way to further reduce the space,
at the cost of increasing both the multiplicative and additive stretch.
We are now looking for two vertices $u$ and $v$ on the $s$-$t$-path
that both have small distance to their respective pivots $p(u)$ and $p(v)$.
The portion of the distance between $p(u)$ and $p(v)$ is not stored directly but instead
estimated at query time by another, internal, distance oracle.
Since the inner data structure only needs to answer queries between pivots,
we can get a $2k-1$ multiplicative stretch (for this part) 
with only $\Otilde( (n/K)^{1+1/k})$ space.
This results in a hierarchy of new DOs with ever smaller space.

\subsection{From Hop-Short Distance Oracles to Distance Sensitivity Oracles}
\label{subsec:overview_hop-short}

An $f$-DSO is a data structure 
that receives a query $(s,t,F)$, 
consisting of vertices $s$ and $t$ as well as a set $F \subseteq E$ of at most $f$ edges,
and answers with an approximation of the $s$-$t$-distance $d(s,t,F)$
in the graph $G{-}F$.
Let $L$ be an integer cut-off parameter. 
A query $(s,t,F)$ is \emph{hop-short},
if $s$ and $t$ are joined by a path on at most $L$ edges in $G{-}F$.
An $f$-DSO for hop-short distances ($f$-DSO$^{\leq L}$) 
only guarantees a good stretch for hop-short queries.
Such oracles are used as stepping stones towards
general $f$-DSOs~\cite{AlonChechikCohen19CombinatorialRP,KarthikParter21DeterministicRPC,Ren22Improved,WY13}.

Recently, Bilò et al.~\cite{Bilo24ApproxDSOSubquadraticTheoretiCS}
presented a new approximate $f$-DSO for unweighted graphs with unique shortest paths taking only subquadratic space.
Implicit in their work is a pathway that takes the (static) distance oracle of Thorup and Zwick
with multiplicative stretch $3$
and makes it fault-tolerant increasing the stretch to $3 + \varepsilon$.
The parameter $\varepsilon > 0$ influences
the space, query time, and preprocessing time of the data structure.
The transformation has two major steps.
In the first one, the DO is used to build a hop-short $f$-DSO$^{\leq L}$.
The second step then combines the answer for hop-short paths into
good approximations for arbitrary queries.
Their ad-hoc method is highly tailored towards the DO of Thorup and Zwick~\cite{ThorupZ05}
and does not readily generalize.

We take the same two-step approach
but give an entirely new construction for the hop-short $f$-DSO$^{\leq L}$.
This is necessary to make it compatible with other distance oracles.
Our stand-alone framework
works with any path-reporting DO as a black box.
The input oracle can have an arbitrary multiplicative stretch $\alpha$
and may even have an additive component $\beta$,
requiring new techniques.
The resulting distance sensitivity oracle has stretch $(\alpha (1{+}\varepsilon),\beta)$.
The key feature of the reduction is that, as long as the input DO has subquadratic space, 
so does the $f$-DSO.
\vspace*{.5em}

\noindent
\textbf{Fault tolerance for hop-short paths.}
Naively, an oracle with sensitivity $f$ must be able to handle $O(n^2 \nwspace m^f)$ queries,
one for each pair of vertices and any set $F$ of up to $f$ edge failures.
Clearly, not every failure set is relevant for every pair of vertices.
As a first key tool we use \emph{fault-tolerant trees} (FT-trees) to zero in on the relevant queries.
The were originally introduced by Chechik et al.~\cite{ChCoFiKa17}. 
We combine it with hop-short paths.
There is a tree $FT(s,t)$ for every pair of vertices $s$ and $t$
whose shortest path in the original graph $G$ has at most $L$ edges.
In the root that path is stored.
For any edge $e_1$ along that path, the root has a child node 
which in turn stores a shortest $s$-$t$-path in the graph $G-\{e_1\}$.
This corresponds to failing $e_1$ and looking for a \emph{replacement path} $P(s,t, \{e_1\})$.
The construction is iterated for each child node until depth $f$ is reached.
That means, for any node $x$ at level $k$ of $FT(s,t)$, let $F_x = \{e_1, \ldots, e_k\}$ be the edges associated with the path the root to $x$.
The node $x$ stores a shortest $s$-$t$-path $P_x$ in $G{-}F_x$.
If the shortest $s$-$t$-replacement path has more than $L$ edges, $x$ is made a leaf,
marked by setting $P_x = \perp$.

To handle a query $(s,t,F)$, in any node $x$ starting with the root, 
it is checked whether $E(P_x) \cap F \neq \emptyset$.
If so, the search continues with the first child node associated with an edge in $E(P_x) \cap F$.
Otherwise, there are two cases.
Either $x$ stores some shortest $s$-$t$-path disjoint from $F$ or $P_x = \perp$.
In the first case, the length $|P_x|$ is the desired \emph{replacement distance} $d(s,t,F)$.
In the second, it is enough to report that $d(s,t,F) > L$.
This reduces the number of relevant queries to $O(n^2 L^f)$.

The second key tool is an
$(L,f)$-\emph{replacement path covering} (RPC)~\cite{WY13,KarthikParter21DeterministicRPC}.
This is a family of $\G$ subgraphs of $G$ such that 
for any set $F$ of at most $f$ edges and pair of vertices $s$ and $t$ 
for which there is a replacement path $P(s,t,F)$ of at most $L$ edges,
there exists a subgraph $G_i \in \G$ such that  $E(G_i)\cap F = \emptyset$
and $G_i$ contains $P(s,t,F)$.
RPCs are common in the design of $f$-DSO$^{\leq L}$
because, if one can find $G_i$ quickly,
the replacement distance is just $d_{G_i}(s,t)$.
Recently, Karthik and Parter~\cite{KarthikParter21DeterministicRPC}
gave a construction with $O(f L \log n)^{f+1}$ graphs.
We shave an $(fL\log n)$-factor from that.
In the full version of~\cite{KarthikParter21DeterministicRPC},
they give an algorithm that takes a list $(F_1, P_1), \dots (F_\ell, P_\ell)$ of
pairs of edge sets, with $|F_k| \le f$ and $|P_k| \le L$.
It computes a family of subgraphs that only for pairs $(F_k,P_k)$ from the list 
guarantees some $G_i$ avoiding $F_k$ but containing $P_k$.
We use nodes of the fault-tolerant trees to compute a list that allows us to
cover all relevant queries but get a smaller family 
with only $O(f L \log n)^f$ graphs.
This improvement may be of independent interest.

Unfortunately, in the subquadratic-space regime, 
both the FT-trees as well as the graphs of the replacement-path covering are too large.
The former have size $O(n^2 L^{f+1})$ (each node stores up to $L$ edges),
and the latter $O(m) \cdot O(f L \log n)^f$.
Instead, we build a data structure that \emph{emulates} queries in the FT-trees
without actually storing them.
The graphs in the RPC are replaced by a distance oracles.
When using an $(\alpha, \beta)$-approximate DO, this gives a saving over quadratic space.
The main difficulty is that the emulated query procedure must follow the exact same path as in the actual (discarded) tree $FT(s,t)$ in order to find the correct distance oracle.
\vspace*{.5em}

\noindent
\textbf{From hop-short to general distance sensitivity oracles.}
The second step of the transformation takes the $f$-DSO$^{\le L}$ for hop-short distances
and combines it with techniques to stitch together the answer for hop-long paths
from those for hop-short ones.
This step follows the pathway in~\cite{Bilo24ApproxDSOSubquadraticTheoretiCS} more closely.
The key differences are the introduction of what we call \emph{pivot trees}
as well as a more thorough analysis that is needed to deal with additive stretch.

We reuse the idea of FT-trees but, as before, they are too large to be stored.
Above, we emulated a query without access to the actual trees.
Here, we do store some of the trees but in smaller versions  and not for all pairs of vertices.
Since we must also handle hop-long queries,
any node of an FT-tree may now store a path of up to $n$ edges.
If we were to create a child node for each edge, 
we would end up with a prohibitive space of $O(n^{f+1})$ for a single tree.
Instead, we use larger segments and each child node now corresponds to the failure of a whole segment.
Of course, this has the danger that failing a segment may inadvertently destroy replacement paths
that would still exists in $G-F$, where only the failing edges are removed.
We have to strike a balance between the space reduction of larger segments
and the introduced inaccuracies.
This leads to the concept of granularity.
In an FT-tree with granularity $\lambda$, the first and last $\lambda/2$ edges form their own segments.
Beyond that, we use segments whose size increase exponentially towards the middle of the stored path.
The base of this exponential is $1+ \Theta(\varepsilon)$,
where $\varepsilon > 0$ is the approximation parameter.
A higher granularity means much larger trees but with higher accuracy.

We offset this by building the larger trees only for very few pairs of vertices.
We sample a set $B_1$ of $\Otilde(fn/L^f)$ pivots and build an FT-tree 
with granularity $\lambda = \Theta(\varepsilon L)$ for each pair of vertices in $B_1 \times B_1$.
For some vertex $v$ and failure set $F$, let $\ball_{G-F}(v,\lambda/2)$
be the ball of hop-distance $\lambda/2$ around $v$ in $G{-}F$.
We show that if both $s$ and $t$ have respective pivots 
$p_s \in \ball_{G-F}(s,\lambda/2) \cap B_1$ and $p_t \in \ball_{G-F}(t,\lambda/2) \cap B_1$,
then the FT-tree $FT_{\lambda}(p_s,p_t)$
with granularity $\lambda$ gives a very good estimate of the replacement distance $d(s,t,F)$.
However, since there are so few pivots in $B_1$ this can only be guaranteed
if the balls around $s$ and $t$ are very dense.
We also have to give a fall-back solution in case one of the balls is sparse.
We sample a set $B_2$ of now $\Otilde(fn/L)$ pivots (much more than $B_1$)
and build an FT-tree \emph{without} granularity 
for pairs of vertices in $B_2 \times V$.
These trees are much smaller and we show that together with the $f$-DSO$^{\le L}$ 
they are still sufficiently accurate.

The problem is to quickly find the pivots that are close to $s$ and $t$ in $G{-}F$ at query time.
Simply scanning over $B_1$ and $B_2$ is way too slow.
Instead, we use yet another kind of tree data structure, \emph{pivot trees}.
They are inspired by FT-trees but are not the same.
We have one pivot tree per vertex in $V$ (instead of an FT-tree for every pair).
The tree belonging to $s$ stores in each node $x$
some shortest path $P_x$ starting from $s$ as before,
and each child of node $x$ represents the failure of one edge of $P_x$.
The key difference is that the other endpoint of $P_x$ can vary now,
the path ends in the closest pivot $p_s \in B_1$ of the first type.
Note that in the graph $G{-}F_x$ that pivot may not be the same as in $G{-}(F_x \cup \{e_{k+1}\})$.
We only care about the vicinity $\ball_{G-F}(s,\lambda/2)$,
so the paths $P_x$ have at most $\lambda/2$ edges.
If the closest pivot in $B_1$ is too far away, $\ball_{G-F}(s,\lambda/2)$ must be sparse.
That means, also $\ball_{G-F}(s,\lambda/2) \cap B_2$ is small and it is feasible to store all pivots of the second type.

Unfortunately, this is still not enough.
Due to our sparsification via segments with multiple edges,
the answer of the FT-trees with and with out granularity are only accurate 
if the replacement path $P(s,t,F)$ is ``far away'' from all failures in $F$.
That roughly means that any vertex of $P(s,t,F)$ is more than an $\Theta(\varepsilon)$-multiple 
from any endpoint of some edge in $F$.
If this safety area is free of failures, no segment can accidentally contain an edge of $F$, 
which would disturb the return value of the FT-tree too much.
If, however, there is a failure too close to the path,
Chechik et al.~\cite{ChCoFiKa17} showed that there is always some surrogate target 
$t' \in V(F)$ such that the replacement path $P(s,t',F)$
is indeed far away from all failures
and also not much of a detour compared to going directly from $s$ to $t$.
This causes an $1+\varepsilon_1$ factor in the multiplicative part of the stretch,
where $\varepsilon_1 = \Theta(\varepsilon)$.
We build an auxiliary \emph{weighted} complete graph $H$
on the vertex set $V(F) \cup \{s,t\}$ to exploit this detour structure.
We use the FT-trees and $f$-DSO$^{\le L}$ to compute the weight of all edges in the graph.
The eventual answer of our oracle is the $s$-$t$-distance in $H$.

The main obstacle is to prove that $d_H(s,t)$
is actually an $(\alpha(1{+}\varepsilon), \beta)$-approximation of $d(s,t,F)$. 
This is also the other decisive difference to~\cite{Bilo24ApproxDSOSubquadraticTheoretiCS}
in that we need to handle both the multiplicative and the additive stretch.
Consider an edge $\{u,v\} \in E(H)$.
If $P(u,v,F)$ is hop-short, the $f$-DSO$^{\le L}$ trivially gives an $(\alpha,\beta)$-approximation
of $d(u,v,F)$ which we use to compute the weight $w_H(u,v)$.
If $P(u,v,F)$ has more than $L$ edges and is ``far away'' from all failures,
we show that $w_H(u,v)$ computed by the FT-trees is only an $(\alpha, X\beta)$-approximation.
The blow-up $X = O(f \log (n)/\varepsilon)$ stems from the segments of increasing size.
This is only exacerbated by the fact that the shortest $s$-$t$-path in the weighted graph $H$
has $O(f^2)$ edges, each one contributing their own distortion to the additive stretch.
However, the \emph{additive} blow-up only happens if $P(u,v,F)$ has many edges,
that is, if $d(u,v,F)$ is large.
The idea is to then charge most of the additive stretch to the multiplicative part, increasing
it only by another $1+\varepsilon_2 = 1+ \Theta(\varepsilon)$ factor
which is then combined with the $1+\varepsilon_1$ stemming from the auxiliary graph $H$.
To make this work, we carefully analyze the interplay of the edges in $H$,
using an induction over $E(H)$ in the order of increasing replacement distance $d(u,v,F)$.

\subsection{Derandomization}
\label{subsec:overview_derand}

For the derandomization, we use the framework of critical paths proposed by Alon, Chechik, and Cohen~\cite{AlonChechikCohen19CombinatorialRP}.
It consists of identifying a small set of shortest paths in $G$,
(greedily) compute a deterministic hitting set for them,
and then build the fault-tolerant data structure from there.
The number of paths needs to be small in order to keep the derandomization efficient.
This is the main difficulty in the approach.
Our threshold of efficiency is that making the data structures deterministic
should not increase their preprocessing time by more than a constant factor.

In the context of derandomization, we view paths as mere sets of vertices without any further structure.
This allows us to apply the approach also to other subsets of $V$ in a unified fashion.
Consider the (static) distance oracle for weighted graphs in \Cref{thm:distance_oracle,thm:hierarchy_DO}.
The list $K[v]$,
containing the $K$ vertices closest to $v$,
are key components in the construction. 
It is indeed enough to hit all the $\{K[v]\}_{v \in V}$ to derandomize the DO.
This also incurs hardly any extra work as those lists are compiled anyway in the preprocessing.

The construction of the general $f$-DSO 
from the one restricted to hop-short distances (\Cref{thm:general_DSO})
builds on the pivot sets $B_1$ and $B_2$ whose derandomization is more involved.
Recall that we use $\lambda$ for the granularity of the FT-trees.
During the proof of correctness, it becomes apparent that the pivots of the second type in $B_2$
need to hit the length-$\lambda/2$ prefixes and suffixes of all
concatenations of up to two replacement paths, provided that those concatenations are hop-long 
(have more than $L$ edges).
We use a structural result by Afek, Bremler-Barr, Kaplan, Cohen, and Merritt~\cite{Afek02RestorationbyPathConcatenation_journal} 
that states that replacement paths themselves are concatenations of $O(f)$ shortest paths in the 
original graph $G$.
This allows us to show that computing a deterministic hitting set of all shortest paths in $G$
of length $\Omega(\lambda/f)$ suffices to guarantee the same covering properties as $B_2$.
The set $B_1$, in turn, need to hit all the sets $\ball_{G-F}(u,\lambda/2)$ that are sufficiently dense
(more than $L^f$ vertices).
In principle, a similar approach as for $B_2$ would work
but that would either produce way too many pivots or take much to long.
Instead, we interleave the level-wise construction of the pivot trees with derandomization phases
to find a deterministic stand-in for $B_1$.

\section{Preliminaries}
\label{sec:prelims}

We let $G = (V,E)$ denote the base graph with $n$ vertices and $m$ edges.
We tacitly assume $m \ge n$.
Depending on the setting $G$ is either directed or undirected, may be edge-weighted by non-negative reals or unweighted,
as stated in the respective sections. 
In case $G$ is weighted and undirected, we assume that the weight function $w$
maps into an subset of the interval $[1, W]$ of at most $m$ elements.
This does not lose generality as we can contract (undirected) weight-zero edges 
and scaling all other weights by $1/\min_{e \in E} w(e)$.
Note that any distance in the graph can be encoded in a constant number
of machine words on $O(\log n)$ bits.
This is the reason why none of the asymptotic bounds depend on $W$.

For a graph $H$, which may differ from the input $G$,
we denote by $V(H)$ and $E(H)$ the set of its vertices and edges, respectively.
Let $P = (u_1, \dots, u_i)$ and $Q = (v_1, \dots, v_j)$ be two paths in $H$.
Their \emph{concatenation} is $P \circ Q = (u_1, \dots, u_i, v_1, \dots, v_j)$,
which is well-defined if $u_i = v_1$ or $\{u_i,v_1\} \in E(H)$.
For vertices $u,v \in V(P)$, we let $P[u..v]$ denote the subpath of $P$ from $u$ to $v$.
The \emph{length} of path $P$ is $w(P) = \sum_{e \in E(P)} w(e)$.
We use $|P| = |E(P)|$ for the number of edges.
For $s,t \in V(H)$, the \emph{distance} $d_H(s,t)$ 
is the minimum length of any $s$-$t$-path in $H$;
and $d_H(s,t) =+ \infty$ if no such path exists.
We drop the subscripts for the input graph $G$.

A \emph{spanning} subgraph of a graph $H$
is one with the same vertex set as $H$ but possibly any subset of its edges.
(This should not be confused with a spanner.)
Let $\alpha,\beta$ be two non-negative real numbers with $\alpha \ge 1$.
A \emph{distance oracle} (DO) for $H$ is a data structure
that reports, upon query $(s,t)$, an approximation of the distance $d_H(s,t)$.
It has \emph{stretch} $(\alpha,\beta)$ if the reported value $\widehat{d}_H(s,t)$
satisfies $d_H(s,t) \le \widehat{d}_H(s,t) \le \alpha \cdot d_H(s,t) + \beta$.
We say the stretch is \emph{multiplicative} if $\beta = 0$,
\emph{additive} if $\alpha = 1$
and \emph{near-additive} if $\alpha = 1{+}\varepsilon$
for some parameter $\varepsilon > 0$
that can be made arbitrarily small.

For a set $F \subseteq E$ of edges,
let $G{-}F$ be the graph obtained from $G$ by removing all edges in $F$.
For any two $s,t \in V$, a \emph{replacement path} $P(s,t,F)$ 
is a shortest path from $s$ to $t$ in $G{-}F$. 
Its length $d(s,t,F) = d_{G-F}(s,t)$ is the \emph{replacement distance}.
For a positive integer $f$, an $f$-\emph{edge fault-tolerant distance sensitivity oracle}
($f$-DSO)
answers queries $(s,t,F)$, with $|F| \le f$, with an approximation of the replacement distance $d(s,t,F)$.
The stretch of an $f$-DSO is defined as for DOs.
The maximum number $f$ of supported failures is the \emph{sensitivity}.
Let $L \le n$ be a positive integer parameter.
We also discuss \emph{$f$-DSO for hop-short distances} bounded by $L$ ($f$-DSO$^{\leq L}$).
On query $(s,t,F)$, their estimate $\widehat{d^{\le L}}(s,t,F)$
must always satisfy $\widehat{d^{\le L}}(s,t,F) \ge d(s,t,F)$.
The upper bound of the stretch requirement
$\widehat{d^{\le L}}(s,t,F) \le \alpha \cdot d(s,t,F) + \beta$
only needs if there is a shortest path from $s$ to $t$ in $G{-}F$ with at most $L$ edges.

We measure the space complexity of distance (sensitivity) oracles and all other data structures
in the number of $O(\log n)$-bit words.
The size of the graph $G$ does not count against the space, unless it is stored explicitly.

\section{Distance Oracles for Static Graphs}
\label{sec:DOs}

We now present the construction of our (static) distance oracles in subquadratic space,
trading a small additive stretch for an improved multiplicative one.
We first discuss a near-additive oracle 
with a multiplicative stretch $1+\varepsilon$ and additive stretch $2W$.
Upon request, it also reports the approximate shortest path,
which we use in \Cref{sec:framework_DSO_short}.
It has a space of $\Otilde(n^2/K)$ for some integer parameter $K$.
We then show how to reduce the space to $\Otilde((n/K)^{1+1/k})$ at the expense
of raising the multiplicative stretch to $2k-1+\varepsilon$,
and the additive to $4kW$.

\subsection{Near-Additive Distance Oracle}
\label{subsec:DOs_near-additive}

For convenience, we restate \Cref{thm:distance_oracle} below.
In this section we prove a randomized version in which the features 
of the oracle are only guaranteed with high probability.
The derandomization is deferred to \Cref{subsec:derand_DO}.
We first treat the space, query time, and stretch of the DO.
Subsequently, we describe how to make the oracle path-reporting.

\distanceoracle*

\noindent
\textbf{Preprocessing and space.}
Let $G$ denote the underlying graph.
First, we make a simplifying assumption.
It is easy to compute the connected components of $G$ in time $O(m)$ (a fortiori in APSP time)
and store an $O(n)$-sized table of the component IDs for all vertices.
This allows us to check in constant time whether two vertices have a path between them,
and otherwise correctly answer $d(s,t) = +\infty$.
We can thus assume to only receive queries for vertex pairs in the same component.

\begin{algorithm}[t]
\setstretch{1.33}
\vspace*{.25em}
    $A_{1/\varepsilon}(s) \gets \{v \in V \mid d^{\text{hop}}_H(s,v) \le \lceil 1/\varepsilon \rceil\}$\; 
    $A_{1/\varepsilon}(t) \gets \{v \in V \mid d^{\text{hop}}_H(v,t) \le \lceil 1/\varepsilon \rceil\}$\;  
   $\widehat{d}_1 \gets \min \{d_{1/\varepsilon}(s,v) + d_{1/\varepsilon}(v,t) \mid v \in A_{1/\varepsilon}(s) \cap A_{1/\varepsilon}(t)\}$\; \label{line:d1_minimum}
    $\widehat{d}_2 \gets \min \{d(s,p(v)) + d(p(v),t) \mid v \in A_{1/\varepsilon}(s) \cup A_{1/\varepsilon}(t)\}$\; \label{line:d2_minimum}
    \Return $\min(\widehat{d}_1,\widehat{d}_2)$\; \label{line:global_minimum}
\caption{Query algorithm of the DO in \Cref{thm:distance_oracle} for the query $(s,t)$.
	$d^{\text{hop}}_H$ is the hop-distance in $H$,
	$d_{1/\varepsilon}$ is the minimum length of all
	paths with at most $\lceil 1/\varepsilon \rceil$ hops in $H$,
	and $p(v) \in B$ is the pivot closest to $v$ in $G$.\vspace*{.25em}}
\label{alg:query_DO}
\end{algorithm}

In APSP time, we compute, for every vertex $v \in V$, the list $K[v]$ of its $K$ closest vertices
in $G$ (including $v$ itself) where ties are broken arbitrarily.
Each element of $K[v]$ is annotated with its distance to $v$.
We also sample a set $B$ of vertices, called \emph{pivots}, by including any vertex in $B$
independently with probability $C (\log n)/K$ for a sufficiently large constant $C > 0$.
It is easy to show using Chernoff bounds that with high probability
$B$ has $O(\frac{n}{K} \log n)$ elements.
We use $p(v)$ to denote the pivot closest to $v$.
Our data structure stores, for each vertex $v$, the list $K[v]$, the closest pivot $p(v)$,
and the distance $d(v,p(v))$.
Moreover, for each pivot $p \in B$, it additionally stores the distance from $p$ 
to \emph{every} vertex in $G$.
The data structure takes space $O(|B|n + nK) = O(\frac{n^2}{K} \log n + nK)$,
which is $\Otilde(n^2/K)$ for $K = O(\sqrt{n})$.
\vspace*{.5em}

\noindent
\textbf{Query algorithm.}
We use an auxiliary graph $H$ to simplify the presentation of the query algorithm
and the subsequent reasoning.
The graph $H$ has the same vertex set as $G$ and, for each $v \in V$,
an edge from $v$ to any $v' \in K[v]{\setminus}\{v\}$ whose weight is $d(v,v')$.
The \emph{hop-distance} between two vertices $u,v \in V$ in $H$,
is the minimum number of edges on any $u$-$v$-path.

\Cref{alg:query_DO} summarizes how a query $(s,t) \in V^2$ is processed.
Let $A_{1/\varepsilon}(s), A_{1/\varepsilon}(t)$ be the sets of vertices
with hop-distance at most $\lceil 1/\varepsilon \rceil$ from $s$ and $t$, respectively, 
in the graph $H$.
To compute the set $A_{1/\varepsilon}(s)$ we use a slightly modified breath-first search from $s$ (analogously for $A_{1/\varepsilon}(t)$ starting from $t$).
The modification consists in exploring, in each step with current vertex $v$,
\emph{all} neighbors in $K[v]{\setminus}\{v\}$ 
as long as fewer than $\lceil 1/\varepsilon \rceil$ hops have been made.
In particular, the search revisit vertices that have been encountered earlier.
Each time a vertex $v$ is visited,
its estimate of the distance $d_H(s,v)$ is updated to the minimum over all paths explored so far.
In other words, for each $v \in A_{1/\varepsilon}(s)$
the length of the shortest path from $s$ that uses at most $\lceil 1/\varepsilon \rceil$ edges is computed.
We use $d_{1/\varepsilon}(s,v)$ to denote this distance.
Note that $d_{1/\varepsilon}(s,v)$ may overestimate the true distance $d_H(s,v)$ in $H$,
which in turn overestimates $d(s,v)$ in $G$.
Below, we identify some conditions under which the estimate is accurate.
The sets $A_{1/\varepsilon}(s), A_{1/\varepsilon}(t)$ have $O(K^{\lceil 1/\varepsilon \rceil})$ elements and, with the information stored by the DO,
the modified BFSs in $H$ can be emulated in time $O(K^{\lceil 1/\varepsilon \rceil})$.

Recall that $p(v)$ is the pivot closest to $v$
and that the values $d(s,p(v)), d(p(v),t)$ are stored.
The oracle uses $A_{1/\varepsilon}(s), A_{1/\varepsilon}(t)$ to compute
\begin{equation*}
	\widehat{d_1} = \min_{v \in A_{1/\varepsilon}(s) \cap A_{1/\varepsilon}(t)} 
		d_{1/\varepsilon}(s,v) + d_{1/\varepsilon}(v,t)
		\quad\text{and}\ \quad
	\widehat{d_2} = \min_{v \in A_{1/\varepsilon}(s) \cup A_{1/\varepsilon}(t)} d(s,p(v)) + d(p(v),t).
\end{equation*}
It returns the smaller of the two estimates.
The total query time is $O(K^{\lceil 1/\varepsilon \rceil})$.
\vspace*{.5em}

\noindent
\textbf{Stretch.}
For a radius $r$ and vertex $v \in V$, let $\ball(v,r)$ be the set of all vertices that have distance at most $r$ from $v$.
Define $V_{r} = \{v \in V \mid \ball(v,r) \subseteq K[v] \nwspace \}$, 
that is, the set of all vertices $v$
that have at most $K$ vertices within distance $r$.
Recall that for any query $(s,t)$ to the DO, 
we can assume that there is a path between $s$ and $t$ in $G$.
In the following lemma, 
the additive term $W$ can be replaced by the largest weight $w$ of an edge on the path $P$.
This would reduce the stretch in \Cref{lem:correctness-weighted} to $(1{+}\varepsilon, 2w)$.

\begin{lemma}
\label{lem:property-H}
	Consider two vertices $s,t \in V$ with distance $d(s,t)$ and set 
	$r = \frac{\varepsilon}{2} \nwspace d(s,t)+W$.
	Let $P$ be a shortest path between $s$ and $t$ in $G$. 
	\begin{enumerate}
		\item[(i)] If all vertices of $P$ lie in $V_r$ the hop-distance between $s$ and $t$
			in $H$ is at most $\lceil 2/\varepsilon \rceil$.
		\item[(ii)] If all vertices of $P$ lie in $V_r$, 
			then $V(P) \cap A_{1/\varepsilon}(s) \cap A_{1/\varepsilon}(t)$
			is non-empty.
		\item[(iii)] If $P$ contains a vertex from $V{\setminus}V_r$, 
		then $(V(P)\setminus V_r) \cap A_{1/\varepsilon}(s)$ or
		 $(V(P)\setminus V_r) \cap A_{1/\varepsilon}(t)$ is non-empty.
	\end{enumerate}
\end{lemma}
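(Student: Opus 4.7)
The plan is to derive all three parts from a single greedy walk along $P$ in the auxiliary graph $H$. The crucial consequence of the definition of $V_r$ is that for any vertex $v \in V_r$, every vertex at graph-distance at most $r$ in $G$ lies in $K[v]$ and is therefore an $H$-neighbor of $v$. Moreover, since $r - W = \frac{\varepsilon}{2} \nwspace d(s,t)$, each non-final hop of the walk can be forced to cover at least this much distance.

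For part (i), I traverse $P$ greedily: starting from $u_0 = s$, I set $u_{i+1}$ to be the vertex of $P[u_i..t]$ that is farthest from $u_i$ subject to $d(u_i, u_{i+1}) \le r$. Since $u_i \in V_r$, the edge $\{u_i, u_{i+1}\}$ exists in $H$. Whenever $u_{i+1} \ne t$, the next vertex of $P$ after $u_{i+1}$ has distance more than $r$ from $u_i$, so $d(u_i, u_{i+1}) > r - W = \frac{\varepsilon}{2} \nwspace d(s,t)$. Summing these $h-1$ non-final contributions along $P$ yields $(h-1) \cdot \frac{\varepsilon}{2} \nwspace d(s,t) < d(s,t)$, hence $h \le \lceil 2/\varepsilon \rceil$.

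Part (ii) reuses the sequence $u_0, \dots, u_h$: the vertex $u_j$ with $j = \lceil h/2 \rceil$ lies on $P$ and is reached by $j \le \lceil 1/\varepsilon \rceil$ hops from $s$ and $h - j \le \lceil 1/\varepsilon \rceil$ hops from $t$, so $u_j \in V(P) \cap A_{1/\varepsilon}(s) \cap A_{1/\varepsilon}(t)$. For part (iii), I take $u_s$ (resp.\ $u_t$) to be the vertex of $V(P) \setminus V_r$ closest to $s$ (resp.\ $t$) along $P$. Additivity of $d$ along $P$ shows $\min(d(s,u_s), d(u_t,t)) \le d(s,t)/2$; WLOG the first inequality holds. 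All vertices of $P$ strictly preceding $u_s$ lie in $V_r$, so the same greedy walk from $s$ to $u_s$ along $P$ is valid — the last hop lands on $u_s$ precisely because $u_s \in K[x_{h-1}]$ for the intermediate $x_{h-1} \in V_r$ at distance at most $r$. Repeating the counting argument with $d(s,u_s) \le d(s,t)/2$ in place of $d(s,t)$ gives $(h-1) \cdot \frac{\varepsilon}{2} \nwspace d(s,t) < d(s,t)/2$, hence $h \le \lceil 1/\varepsilon \rceil$, so $u_s \in A_{1/\varepsilon}(s) \cap (V(P) \setminus V_r)$.

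The main obstacle is careful bookkeeping rather than a deep structural insight: I must verify the ceiling inequality $\lceil \lceil 2/\varepsilon \rceil / 2 \rceil \le \lceil 1/\varepsilon \rceil$ (a short case analysis on the fractional part of $1/\varepsilon$), correctly handle degenerate cases ($s = t$, $s$ or $t$ already in $V \setminus V_r$, or $P$ having only one edge), and be precise about which endpoint of a would-be $H$-edge needs to lie in $V_r$ so that the edge indeed exists. Once the greedy construction is in place, everything else reduces to routine arithmetic.
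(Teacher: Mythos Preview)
Your proposal is correct and parts (i) and (iii) are essentially identical to the paper's proof: the same greedy hop sequence along $P$, the same $r-W=\frac{\varepsilon}{2}d(s,t)$ lower bound on each non-final hop, and the same treatment of the first vertex of $V(P)\setminus V_r$ encountered from the closer endpoint.

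For part (ii) you take a slightly different (and arguably cleaner) route. The paper builds \emph{two} greedy sequences $\sigma_s$ and $\sigma_t$, one from each endpoint, lets $x_k$ be the first vertex of $\sigma_s$ past the midpoint of $P$, and then shows that $x_k$ is reachable from $t$ by following $\sigma_t$ to $y_{\ell-1}$ and making one extra hop (using $d(x_k,y_{\ell-1})\le r$). You instead reuse the single sequence $u_0,\dots,u_h$ from part (i), pick its midpoint $u_{\lceil h/2\rceil}$, and traverse the same $H$-edges backwards from $t$; this works because $H$ is undirected. Your version avoids the second sequence and the crossing argument at the cost of the small arithmetic check $\lceil \lceil 2/\varepsilon\rceil/2\rceil\le\lceil 1/\varepsilon\rceil$, which indeed holds by the parity case split you mention. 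Both arguments are short; yours is marginally more economical.
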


\begin{proof}
	Let $d$ abbreviate the distance $d(s,t)$.
	First, consider the case that all vertices of $P$ lie in $V_{r} = V_{\frac{\varepsilon d}{2}+W}$.
	We define a sequence $\sigma_s$ of vertices in $V(P)$,
	namely, a subsequence of $P$ when directed away from $s$.
	Its first vertex is $x_0 = s$;
	each consecutive vertex $x_{i+1}$ is the element of $V(P[x_{i},t]) \cap \ball(x_i,r)$	
	that maximizes $d(x_i,x_{i+1})$.
	That means $x_{i+1}$ comes after $x_i$ on $P$ when going from $s$ to $t$,
	but it has distance at most $\frac{\varepsilon d}{2}+W$ from $x_i$.
	By the assumption $V(P) \subseteq V_r = \{v \in V \mid \ball(v,r) \subseteq K[v] \nwspace \}$,
	the sequence $\sigma_s$ corresponds to an actual sequence of hops in the auxiliary graph $H$.
	Moreover, for each $x_i$, we have
	$d_{1/\varepsilon}(s,x_i) \le \sum_{j=0}^{i-1} d(x_j,x_{j+1}) = d(s,x_i)$.
	So in this case the estimate $d_{1/\varepsilon}(s,x_i)$ is exact.
	
	Observe that consecutive vertices in $\sigma_s$ (except possibly the last pair involving $t$)
	have graph distance in $G$ of more than $\varepsilon d/2$.
	Indeed, if we had $d(x_i,x_{i+1}) \le \varepsilon d/2$,
	then the next vertex $v$ on $P$ that comes after $x_i$
	has a higher distance, but still satisfies
	$d(x_i,v) = d(x_i,x_{i+1}) + w(x_{i+1},v) \le \frac{\varepsilon d}{2}+W$.
	This shows that $\sigma_s$ reaches from $s$ to $t$ in at most $\lceil 2/\varepsilon\rceil$ hops.
	
	Symmetrically, we define the sequence $\sigma_t$ by perceiving $P$ as directed away from $t$.
	The first vertex is $y_0 = t$ and $y_{i+1}$ maximizes $d(y_i,y_{i+1})$
	over $V(P[s,y_{i}]) \cap \ball(y_i,r)$.
	Let $k$ be the smallest index in $\sigma_s$ such that $d(s,x_{k}) \ge d/2$.
	Then, we have $k \le \lceil 1/\varepsilon \rceil$
	by the above observation on the consecutive distances.
	Analogously, let $\ell$ be the minimum index in $\sigma_t$ with $d(t,y_{\ell}) \ge d/2$.
	The predecessor of $y_{\ell}$ thus satisfies $\frac{d}{2} < d(s,y_{\ell-1}) \le \frac{d}{2} + r$
	and therefore $d(x_k,y_{\ell-1}) \le r$.
	This shows that $x_k$ can be reached from $t$ in $H$ via $\ell \le  \lceil 1/\varepsilon \rceil$
	hops by first following $\sigma_t$ until $y_{\ell-1}$ and then hopping to $x_k$.
	In particular, we have $x_k \in V(P) \cap A_{1/\varepsilon}(s) \cap A_{1/\varepsilon}(t)$.
	
	The last part, where the path $P$ contains a vertex from $V{\setminus}V_r$,
	is structurally similar but somewhat simpler.
	Let vertex $z$ be the minimizer of $\min( d(s,z), d(z,t))$ in $V(P){\setminus}V_r$.
	W.l.o.g.\ $z$ is closer to $s$ than to $t$, whence $d(s,z) \le  d/2$.
	We now define the sequence $\sigma_{s}$ as above but let it end in $z$ instead of $t$.
	Note that, by the minimality of $z$, all vertices of $\sigma_{s}$ except for $z$ itself
	are in $V_r$.
	The same argument as above now shows that $\sigma_{s}$ 
	has at most $\lceil 1/\varepsilon \rceil$ vertices, which correspond to hops in $H$.
	That means, $z \in (V(P)\setminus V_r) \cap A_{1/\varepsilon}(s)$.
\end{proof}

To prove the approximation guarantee,
we use the following straightforward application of Chernoff bounds:
with high probability all vertices $v \in V{\setminus}V_r$ satisfy $d(v,p(v)) \le r$.
For if $\ball(v,r)$ contains more than $K$ elements, it also has a pivot w.h.p.

\begin{lemma}
\label{lem:correctness-weighted}
	With high probability over all queries $(s,t) \in V^2$,
	the distance oracle returns an estimate of $d(s,t)$ with stretch $(1+\varepsilon,2W)$.
\end{lemma}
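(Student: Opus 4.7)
The plan is to reduce the lemma to a single high-probability event on the pivot set~$B$ and then perform a short case analysis driven by \Cref{lem:property-H}. The lower bound $\widehat{d}(s,t) \ge d(s,t)$ is immediate: every term in the minima of \cref{line:d1_minimum,line:d2_minimum} of \Cref{alg:query_DO} is either the length of a genuine walk in~$G$ (for $\widehat{d}_1$) or a two-leg path via a pivot (for $\widehat{d}_2$), so the triangle inequality does the job. For the upper bound, I would first fix the clean event $\mathcal{E}$: for every $v\in V$, the ball $\ball(v,r)$ contains a pivot of~$B$ whenever $|\ball(v,r)|>K$. Since each vertex enters $B$ independently with probability $C(\log n)/K$, a standard Chernoff-plus-union argument over the at most $n$ relevant balls makes $\Pr[\mathcal{E}]\ge 1-n^{-c}$ for arbitrary constant~$c$ by choosing~$C$ large enough. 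Crucially, $\mathcal{E}$ is defined without reference to any query, so conditioning on it lets us talk uniformly about all pairs $(s,t)$.

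Now fix a query $(s,t)$, let $d=d(s,t)$, $r=\tfrac{\varepsilon}{2}d+W$, and let $P$ be a shortest $s$-$t$-path. I would split according to whether $V(P)\subseteq V_r$. If yes, \Cref{lem:property-H}(ii) produces a vertex $x_k\in V(P)\cap A_{1/\varepsilon}(s)\cap A_{1/\varepsilon}(t)$, so it suffices to bound $\widehat{d}_1\le d_{1/\varepsilon}(s,x_k)+d_{1/\varepsilon}(x_k,t)$. The construction inside the proof of \Cref{lem:property-H} yields an exact $s$-to-$x_k$ hop path (giving $d_{1/\varepsilon}(s,x_k)=d(s,x_k)$) and a $t$-to-$x_k$ hop path of weight $d(t,y_{\ell-1})+d(y_{\ell-1},x_k)$ with $d(y_{\ell-1},x_k)\le r$. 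If $x_k$ precedes $y_{\ell-1}$ on $P$ the three summands telescope exactly to $d$; if the order is reversed, expanding $d(s,x_k)$ and $d(t,y_{\ell-1})$ along $P$ gives $d+2d(y_{\ell-1},x_k)\le d+2r$. Either way $\widehat{d}_1\le d+2r = (1+\varepsilon)d+2W$.

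In the remaining case, $P$ contains some $u\notin V_r$, so \Cref{lem:property-H}(iii) supplies $z\in (V(P)\setminus V_r)\cap(A_{1/\varepsilon}(s)\cup A_{1/\varepsilon}(t))$. Because $z\notin V_r$ there is a witness $u'\in\ball(z,r)\setminus K[z]$, whence every element of $K[z]$ is within distance $r$ from~$z$. Under the event $\mathcal{E}$ the list $K[z]$ contains a pivot, so $d(z,p(z))\le r$. Since $z\in V(P)$ I would then estimate
\begin{equation*}
\widehat{d}_2 \;\le\; d(s,p(z))+d(p(z),t) \;\le\; d(s,z)+d(z,t)+2d(z,p(z)) \;\le\; d + 2r \;=\; (1+\varepsilon)d+2W,
\end{equation*}
using the triangle inequality and the fact that $z$ lies on a shortest $s$-$t$-path. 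Returning $\min(\widehat{d}_1,\widehat{d}_2)$ therefore satisfies the claimed stretch, and \cref{line:global_minimum} inherits the lower bound from each term individually.

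The only delicate step I foresee is Case~1: one has to check carefully that the sub-walk of $H$ used to prove $x_k\in A_{1/\varepsilon}(t)$ really has weight summing to at most $d+2r$ in both of the positional sub-cases for $x_k$ and $y_{\ell-1}$ along $P$. Everything else is bookkeeping: the union bound keeps the failure probability polynomially small, and the edge weights on $H$ are precisely the graph distances used in the cancellation argument above, so no further approximation is introduced.
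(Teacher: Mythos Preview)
Your proof is correct and follows the same two-case split as the paper (using \Cref{lem:property-H} to locate a witness vertex, then bounding $\widehat d_1$ or $\widehat d_2$). One small fix: the event $\mathcal{E}$ as you wrote it mentions $\ball(v,r)$ with $r$ depending on the query, so the ``at most $n$ relevant balls'' count is not literally right; phrase $\mathcal{E}$ instead as ``$K[v]\cap B\neq\emptyset$ for every $v$ with $|K[v]|=K$'', which is query-independent and is precisely how you already use it in Case~2. Your sub-case analysis in Case~1 (whether $x_k$ precedes $y_{\ell-1}$ on $P$) is in fact more careful than the paper's own proof, which asserts $d_{1/\varepsilon}(s,x_k)+d_{1/\varepsilon}(x_k,t)=d$ without justifying the second summand; your bound of $d+2r$ in the reversed-order sub-case is the honest estimate and still yields the claimed $(1{+}\varepsilon,2W)$ stretch.
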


\begin{proof}
	The oracle correctly answers $+\infty$ if $s$ and $t$ are in different component.
	Let again $P$ be a shortest $s$-$t$-path, $d = d(s,t)$, and $r = \frac{\varepsilon d}{2} + W$.
	First, note that the returned value is never smaller than $d$ since 
	$\widehat{d_1} = d_{1/\varepsilon}(s,v)+d_{1/\varepsilon}(v,t) \ge d(s,v)+ d(v,t) \ge d(s,t)$
	and $\widehat{d_2} = d(s,p(v)) + d(p(v),t)$ even corresponds to an actual path between $s$ and $t$.
	
	First, assume that all vertices of $P$ are in $V_r$.
	There exists some vertex $v \in V(P) \cap A_{1/\varepsilon}(s) \cap A_{1/\varepsilon}(t)$.
	The returned distance is exact since
	\begin{equation*}
		\widehat{d_1} \le d_{1/\varepsilon}(s,v) + d_{1/\varepsilon}(v,t) = d(s,v) + d(v,t) = d.
	\end{equation*}
	The first equality was argued in \Cref{lem:property-H}, the second one is due to $v$
	being on a shortest $s$-$t$-path.
	
	If $V(P) \nsubseteq V_r$, there exists some 
	$v \in (V(P){\setminus}V_r) \cap (A_{1/\varepsilon}(s) \cup A_{1/\varepsilon}(t))$,
	from which we get w.h.p.
	\begin{equation*}
		\widehat{d}(s,t) \le \widehat{d_2} \le d(s,p(v)) + d(p(v),t)
			\le d(s,v) + d(v,t) + 2 \cdot d(v,p(v)) \le d + 2r
			= (1{+}\varepsilon) \nwspace d + 2W. \qedhere
	\end{equation*} 
\end{proof}

\noindent
\textbf{Reporting paths.}
Only small adaptions are needed to make the oracle path-reporting.
Recall that we use an emulated BFS to compute the distance $d_{1/\varepsilon}(s,v)$
for all $v \in A_{1/\varepsilon}(s)$ (length $d_{1/\varepsilon}(v,t)$
for $v \in A_{1/\varepsilon}(t)$) by iteratively updating the current best length
of a path in $H$ with at most $\lceil 1/\varepsilon \rceil$ edges.
A path through $H$ may not correspond to a path through $G$,
namely, if it uses and edge $\{v,v'\}$ with $v' \in K[v]$
that is not actually present in $G$.
However, any shortest $v$-$v'$-path in $G$ exclusively uses vertices from $K[v]$.

We thus store $K[v]$ in the form of a shortest-path tree in $G$ rooted at $v$
that is truncated after $K$ vertices have been reached.
In each step of the emulated BFS, we explore the neighborhood $K[v]$
in the order given by an actual BFS of the shortest-path tree.
Each time some estimate $d_{1/\varepsilon}(s,v')$ is updated
we also store a pointer to the last vertex from which we reached $v'$.
Furthermore, with each distance $d(v,p)$ for an arbitrary pivot $p \in B$,
we store the first edge on a shortest path from $v$ to $p$.
This does not change the space requirement of the oracle by more than a constant factor.

Suppose the minimum reported in \cref{line:global_minimum} of \Cref{alg:query_DO}
is attained by $\widehat{d_1}$ and let $v$ be the minimizing vertex in \cref{line:d1_minimum}.
Following the stored pointers backwards reconstructs a shortest $s$-$v$-path through $A_{1/\varepsilon}(s)$.
Symmetrically, following the pointers forward gives shortest $v$-$t$-path through $A_{1/\varepsilon}(t)$.
Reporting this path in order from $s$ to $t$ can be done by visiting any of the computed edges
at most twice.
If instead $\widehat{d_2}$ is smaller with minimizer $v$ (in \cref{line:d2_minimum}),
we follow the first edge on a shortest path from $s$ to $p(v)$ (which we have stored)
and likewise for each intermediate vertex we encounter this way between $s$ and $p(v)$.
After $p(v)$, we instead follow along a shortest $t$-$p(v)$-path.

\subsection{A Hierarchy of Distance Oracles}
\label{sec:hierarchy}

We now discuss construction of a hierarchy of DOs for general weighted graphs.
The main bottleneck of the space requirement of the DO in \Cref{thm:distance_oracle}
is to store the distance from every pivot to all vertices of the graph.
We next show how to improve this by instead estimating the distance between pivots with another, internal, DO.
In effect, we trade ever smaller space for higher overall stretch.

\hierarchydo*

In order to prove this, we use the following result of Chechik~\cite{Chechik14,Chechik15}.
She gave an improved implementation of the Thorup-Zwick DO~\cite{ThorupZ05}
and also obtained a leaner version of the oracle when restricting the attention to distances between only a subset of the vertices.

\begin{theorem}[Chechik~\cite{Chechik14,Chechik15}]
\label{thm:lean_ThorupZwick}
    Let $G = (V,E)$ be an undirected weighted graph.
    For any positive integer $k$, and vertex set $B \subseteq V$,
    there is a data structure that reports $(B{\times}B)$-distances with
    multiplicative stretch $2k-1$, space $O(|B|^{1+\frac{1}{k}})$, and constant query time.
    The preprocessing time is $O(n^2 + m \sqrt{n})$.
\end{theorem}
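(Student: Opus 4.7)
The plan is to adapt the Thorup--Zwick sampling hierarchy, together with Chechik's constant-time query refinements, to the setting where the base of the hierarchy is the set $B \subseteq V$ rather than all of $V$. Concretely, I would set $A_0 = B$, $A_k = \emptyset$, and obtain each $A_i$ ($1 \le i < k$) by including every element of $A_{i-1}$ independently with probability $|B|^{-1/k}$, so that the expected size of $A_i$ is $|B|^{1-i/k}$. For every $v \in V$ and every level $i$, compute the pivot $p_i(v) \in A_i$ closest to $v$ together with $\delta_i(v) = d(v, p_i(v))$ by a single multi-source Dijkstra from a virtual super-source attached to $A_i$.

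Next, for every $w \in B$ compute the bunch
\begin{equation*}
B_{\mathrm{TZ}}(w) \;=\; \bigcup_{i=0}^{k-1} \{ u \in A_i \setminus A_{i+1} : d(w, u) < \delta_{i+1}(w) \},
\end{equation*}
and store the pairs $(u, d(w,u))$ in a perfect hash table keyed by $w$. The standard Thorup--Zwick analysis, applied with the base set $B$ of size $|B|$ playing the role that $V$ of size $n$ plays in the original proof, yields expected bunch size $O(|B|^{1/k})$ for each $w \in B$, so the total storage is $O(|B|^{1+1/k})$ in expectation. Only the relative density of the sampled levels matters in the TZ argument, so restricting the base from $V$ to $B$ preserves every bound.

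The query procedure on $(s,t) \in B \times B$ is the standard TZ alternation: maintain $w := s$, $i := 0$; while $w \notin B_{\mathrm{TZ}}(t)$, increment $i$, swap $s$ and $t$, and set $w := p_i(s)$; finally return $d(s, w) + d(w, t)$. Correctness, giving multiplicative stretch $2k-1$, follows exactly as in the original proof because both query endpoints lie in $B = A_0$ and every intermediate $w$ also lies in $A_i \subseteq B$. To collapse the $O(k)$ iterations into $O(1)$ query time, I would overlay Chechik's augmentation: each bunch is extended by a carefully chosen set of secondary entries so that $O(1)$ hash lookups suffice to identify the correct level $i^\ast$ directly. The augmentation inflates the space only by a constant factor and depends solely on the hierarchy structure, hence transfers to the $B$-based hierarchy unchanged.

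For preprocessing, the $k$ multi-source Dijkstras and the bunch/hash-table construction are executed using Chechik's $O(n^2 + m\sqrt{n})$-time procedure, which interleaves Dijkstra phases with a Baswana--Sen-style clustering step; since the only dependence on the base set is via its size $|B| \le n$, the stated time bound carries over. The main item to verify, and the chief technical obstacle, is that Chechik's constant-time query construction continues to work when the universe of the hierarchy is the proper subset $B$ rather than $V$: one has to check that her invariants on bunch sizes, on the secondary intermediate entries, and on the hashing layer depend only on $|A_0|$ and the sampling probability $|A_0|^{-1/k}$, not on the identity of the ambient graph vertices, which they do since $V \setminus B$ enters the construction only as the domain of the pivot function $p_i(\cdot)$ and is never itself queried.
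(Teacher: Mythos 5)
The paper does not actually prove \Cref{thm:lean_ThorupZwick}: it is imported as a black box from Chechik's works~\cite{Chechik14,Chechik15}, where the ``lean'' $B\times B$ restriction is stated explicitly. So there is no internal proof to compare against. That said, your reconstruction is a plausible and essentially correct account of how that result is obtained from the Thorup--Zwick machinery: you correctly base the hierarchy at $A_0=B$ with per-level sampling probability $|B|^{-1/k}$, you correctly keep pivots and distances defined against the full graph $G$ while only storing bunches for $w\in B$, and your reasoning that the $2k{-}1$ stretch analysis is unaffected (because both query endpoints and every intermediate $w$ lie in $A_0\supseteq A_1\supseteq\cdots$, while all distances are genuine $G$-distances) is the right observation. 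The $O(|B|^{1+1/k})$ space bound via the stopping-time argument on bunch sizes, and the claim that Chechik's $O(1)$-query augmentation and the $O(n^2+m\sqrt{n})$ preprocessing transfer because they depend only on $|A_0|$ and the sampling ratio, are also consistent with what Chechik proves. The one caveat --- which you yourself flag --- is that you defer the verification that Chechik's constant-time query layer survives the change of base set, rather than checking her invariants in detail. Since the theorem is cited rather than reproven in this paper, that deferral matches the paper's own treatment; a fully self-contained proof would need to unpack that layer.
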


\noindent
\textbf{Preprocessing and space.}
For \Cref{thm:hierarchy_DO}, the set of pivots $B$ is sampled from $V$ 
using the probability $C \log (n)/K$. 
(That means, we do not apply the indirect sampling via edges here.)
Let the lists $K[v]$ and closest pivot $p(v)$ be defined as in \Cref{subsec:DOs_near-additive}.
We preprocess the DO of \Cref{thm:lean_ThorupZwick} for pairs of pivots in $B$.
	Let $\widehat{D}(p,q)$ denote the estimate of $d(p,q)$ returned by that DO when queried with $p,q \in B$.
Our data structure stores the list $K[v]$ for every $v \in V$ as well as the $(B{\times}B)$-DO.
This takes space $O(nK + (\frac{n}{K} \log n)^{1+1/k}) = O((\frac{n}{K})^{1+1/k} \, \log^{1+1/k} n)$,
assuming $K = O(n^{1/(2k+1)})$.
\vspace*{.5em}

\begin{algorithm}[t]
\setstretch{1.33}
\vspace*{.25em}
    $A_{4k/\varepsilon}(s) \gets \{v \in V \mid d^{\text{hop}}_H(s,v) \le \lceil 4k/\varepsilon \rceil\}$\; 
    $A_{4k/\varepsilon}(t) \gets \{v \in V \mid d^{\text{hop}}_H(v,t) \le \lceil 4k/\varepsilon \rceil\}$\;  
   \eIf{$t \in A_{4k/\varepsilon}(s)$}{\Return $d_{4k/\varepsilon}(s,t)$\;}
   		{\Return $\min \{d(s,p(u)) + \widehat{D}(p(u),p(v)) + d(p(v),t) \mid u \in A_{4k/\varepsilon}(s); \nwspace v \in A_{4k/\varepsilon}(t)\}$\;}
\caption{Query algorithm of the DO in \Cref{thm:hierarchy_DO} for the query $(s,t)$.\\
	$d^{\text{hop}}_H$ is the hop-distance in $H$, $d_{4k/\varepsilon}$ is the minimum length of
	all paths with at most $\lceil 4k/\varepsilon\rceil$ hops in $H$,
	$p(v) \in B$ is the pivot closest to $v$ in $G$,
	and $\widehat{D}$ is the output of the DO in \Cref{thm:lean_ThorupZwick}.\vspace*{.25em}}
\label{alg:query_hierarchy}
\end{algorithm}

\noindent
\textbf{Query algorithm.}
The query algorithm is shown in \Cref{alg:query_hierarchy}.
Recall that we defined the auxiliary graph $H$ by requiring 
that every vertex is connected to its $K$ closest vertices in $G$.
Define $\delta = \varepsilon/(2k)$ and
let $A_{2/\delta}(s), A_{2/\delta}(t)$ the sets of vertices that have a hop-distance 
at most $\lceil 2/\delta \rceil = \lceil 4k/\varepsilon \rceil$ from $s$ and $t$, respectively, in $H$.
If $t$ is found while computing $A_{2/\delta}(s)$, the oracle returns the distance $d_{2/\delta}(s,t)$,
that is, the minimum length of all $s$-$t$-paths in $H$ with at most $\lceil 2/\delta \rceil$ edges.
Otherwise, it reports
\begin{equation*}
	\widehat{d} = \min_{u \in A_{2/\delta}(s),\, v \in A_{2/\delta}(t)} d(s,p(u)) + \widehat{D}(p(u),p(v)) + d(p(v),t)
\end{equation*}
Evaluating that minimum over all pairs $(u,v)$ takes time 
$O(|A_{2/\delta}(s)| \nwspace	 |A_{2/\delta}(t)|) 
	= O( (K^{\lceil 2/\delta\rceil})^2) = O(K^{2\lceil 4k/\varepsilon \rceil})$,
which dominates the query time.
\vspace*{.5em}

\noindent
\textbf{Stretch.}
Fix two query vertices $s,t \in V$ and let $d = d(s,t)$ be their distance.
Recall that the set $V_{\frac{\delta d}{2} + W}$
consists of all those vertices whose ball with radius $\frac{\delta d}{2} + W$ has size at most $K$.

\begin{lemma}
\label{lem:correctness-hierarchy}
	With high probability over all queries $(s,t) \in V^2$,
	the DO returns an estimate of $d(s,t)$ with stretch $(2k{-}1{+}\varepsilon, 4kW)$.
\end{lemma}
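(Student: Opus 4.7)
The plan is to mirror the proof of \Cref{lem:correctness-weighted}, but tuned to the two-pivot estimate and the internal $(2k{-}1)$-stretch oracle $\widehat{D}$ of \Cref{thm:lean_ThorupZwick}. Fix a shortest $s$-$t$-path $P$ in $G$, write $d = d(s,t)$, set $\delta = \varepsilon/(2k)$, and $r = \frac{\delta d}{2} + W$. Recall $V_r = \{v \in V \mid \ball(v,r) \subseteq K[v]\}$. A Chernoff argument over the sampling of $B$ (identical to the one used for \Cref{lem:correctness-weighted}) shows that w.h.p.\ every vertex $v \in V \setminus V_r$ satisfies $d(v,p(v)) \le r$, because $\ball(v,r)$ has more than $K$ vertices and so contains a pivot. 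The lower bound $\widehat{d} \ge d$ is immediate: any hop path in $H$ has weight at least the graph distance between its endpoints, and the two-pivot estimate is the length of a genuine $s$-$t$-walk in~$G$.

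For the upper bound, first handle the case $V(P) \subseteq V_r$. Invoking \Cref{lem:property-H}(ii) with $\delta$ in place of $\varepsilon$ produces a vertex $z \in V(P) \cap A_{1/\delta}(s) \cap A_{1/\delta}(t)$. Concatenating the greedy hop paths $s \rightsquigarrow z$ and $z \rightsquigarrow t$ along $P$ yields a path in $H$ of at most $2\lceil 1/\delta \rceil \le \lceil 2/\delta\rceil = \lceil 4k/\varepsilon\rceil$ hops and total weight exactly $d(s,z) + d(z,t) = d$. Hence $t \in A_{2/\delta}(s)$ with $d_{2/\delta}(s,t) = d$, and the algorithm returns the exact distance.

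Now assume $V(P) \not\subseteq V_r$. Let $z_s$ (resp.\ $z_t$) be the vertex of $V(P) \setminus V_r$ closest to $s$ (resp.\ $t$) along $P$; they satisfy $z_s = z_t$ or $z_s$ precedes $z_t$ on $P$. Extending the greedy walk argument of \Cref{lem:property-H} to $\lceil 2/\delta\rceil$ hops (each step from a $V_r$ vertex covers more than $\delta d/2$ in $G$, and the total distance to cover is at most $d$) gives $z_s \in A_{2/\delta}(s)$ and $z_t \in A_{2/\delta}(t)$. By the Chernoff bound above, $d(z_s,p(z_s)), d(z_t,p(z_t)) \le r$ w.h.p. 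Setting $u = z_s$ and $v = z_t$ in the minimum of the else-branch, the $(2k{-}1)$-stretch of $\widehat{D}$ together with the triangle inequality yields $\widehat{D}(p(u),p(v)) \le (2k{-}1)(d(u,v) + 2r)$. Since $u$ precedes $v$ on $P$, we have $d(s,u) + d(u,v) + d(v,t) = d$, and thus
\begin{align*}
d(s,p(u)) + \widehat{D}(p(u),p(v)) + d(p(v),t)
&\le (d(s,u) + r) + (2k{-}1)(d(u,v) + 2r) + (r + d(v,t)) \\
&= d + (2k{-}2)\,d(u,v) + 4kr \\
&\le (2k{-}1)\,d + 4kr = (2k{-}1{+}\varepsilon)\,d + 4kW,
\end{align*}
using $d(u,v) \le d$ and $r = \delta d/2 + W$ with $\delta = \varepsilon/(2k)$. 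The degenerate case $u = v$ is even better, giving estimate $\le d + 2r$.

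The delicate point, and the main obstacle I expect, is reconciling the if/else structure of \Cref{alg:query_hierarchy} with the case analysis in Case B: it is possible that $V(P) \not\subseteq V_r$ yet $t \in A_{2/\delta}(s)$ through a hop path that avoids $V(P)$ entirely, in which case the algorithm commits to returning $d_{2/\delta}(s,t)$. To finish the proof I would show that whenever $t \in A_{2/\delta}(s)$, one can exhibit a hop path of at most $\lceil 2/\delta\rceil$ edges and weight at most $(2k{-}1{+}\varepsilon)\,d + 4kW$ (for instance by splicing the two greedy walks through $z_s$ and $z_t$ once the separations are small enough to fit within one additional hop), so that $d_{2/\delta}(s,t)$ already meets the stretch bound. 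Alternatively, one can replace the if/else by returning the minimum of $d_{2/\delta}(s,t)$ and the pivot-based estimate; this keeps the space, preprocessing, and query-time bounds of \Cref{thm:hierarchy_DO} intact and makes the analysis above directly applicable.
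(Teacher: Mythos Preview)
Your approach mirrors the paper's proof closely: the same case split on whether $V(P) \subseteq V_r$, the same greedy-hop argument to place $z_s \in A_{2/\delta}(s)$ and $z_t \in A_{2/\delta}(t)$, and essentially the same chain of triangle inequalities for the pivot-based estimate (the paper expands $d(s,p(v_s))$ and $d(p(v_t),t)$ and bounds $d(s,v_s)+d(v_s,v_t)+d(v_t,t) \le (2k{-}1)\,d$ directly, while you go via $d(u,v)\le d$; both land at $(2k{-}1{+}\varepsilon)\,d + 4kW$).

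The ``delicate point'' you flag is real, and the paper's proof glosses over it too: in the case $V(P)\not\subseteq V_r$ the paper simply writes ``the output of our oracle is at most $\widehat d$,'' tacitly assuming the else-branch is taken. If $t \in A_{2/\delta}(s)$ through a hop path unrelated to $P$, there is no evident bound on $d_{2/\delta}(s,t)$ in terms of $d$ and $W$ alone, since edges of $H$ can have arbitrarily large weight. Your first suggested fix (showing $d_{2/\delta}(s,t)$ already meets the stretch) therefore looks hard to carry out. Your second fix---return the minimum of $d_{2/\delta}(s,t)$ and the pivot estimate---is the clean resolution, costs nothing asymptotically, and makes the analysis go through verbatim. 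That is the right way to close the argument.

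One minor slip in Case~A: the inequality $2\lceil 1/\delta\rceil \le \lceil 2/\delta\rceil$ is false in general (e.g.\ $\delta = 2/3$). Just invoke \Cref{lem:property-H}(i) directly, which produces a single greedy sequence from $s$ to $t$ of at most $\lceil 2/\delta\rceil$ hops with exact weight $d$, without detouring through an intermediate vertex~$z$.
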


\begin{proof}
	Let $P$ be a shortest $s$-$t$-path in $G$.
	By \Cref{lem:property-H} with $\delta$ in place of $\varepsilon$,
	if all vertices of $P$ are in $V_{\frac{\delta d}{2} + W}$
	then $s$ and $t$ have hop-distance at most $\lceil 2/\delta \rceil$
	in the auxiliary graph $H$, whence $t \in A_{2/\delta}(s)$
	and $d_{2/\delta}(s,t) = d(s,t)$.
	
	Otherwise, the set $V(P){\setminus}V_{\frac{\delta d}{2} + W}$ is non-empty.
    Let $v_s$ be the vertex on $P$
    that is closest to $s$ and does not lie in $V_{\frac{\delta d}{2} + W}$.
    Clearly, we have $v_s \in A_{2/\delta}(s)$.
    For the vertex $v_t \in V(P){\setminus}V_{\frac{\delta d}{2} + W}$ that is
    closest to the other endpoint $t$, we get $v_t \in  A_{2/\delta}(t)$.
    In this case, the output of our oracle is at most
    \begin{align*}
    	\widehat{d} &\le d(s,p(v_s)) + \widehat{D}(p(v_s),p(v_t)) + d(p(v_t),t)
    		 \le d(s,p(v_s)) + (2k{-}1) \nwspace d(p(v_s),p(v_t)) + d(p(v_t),t)\\
    		&\le d(s,v_s) + d(v_s,p(v_s)) + 
    			(2k{-}1) d(p(v_s),v_s) + (2k{-}1) \nwspace d(v_s,v_t) 
    				+ (2k{-}1) \nwspace d(p(v_t),v_t) \, +\\
    		&\quad\  d(p(v_t),v_t) + d(v_t,t)\\
    		&\le (2k{-}1) \nwspace d(s,t) + 2k \, d(v_s,p(v_s)) + 2k \, d(p(v_t),v_t)
    		 \le (2k{-}1) \nwspace d + 4k \left( \frac{\delta d}{2} + W \!\right)\\
    		&= (2k{-}1{+}\varepsilon) \nwspace d + 4kW. \qedhere
    \end{align*}
\end{proof}

\section{Fault-Tolerant Tree Oracles}
\label{sec:framework_DSO_short}

The aim of this section is to prove~\Cref{thm:short_distance_DSO} (restated below),
adding fault tolerance to the construction of distance oracles for short paths.
The core idea is to build FT-trees during preprocessing 
to identify a family of relevant subgraphs of $G$.
For those subgraphs $(\alpha, \beta)$-stretch DOs for hop-short distances are constructed and stored.
The FT-trees, however, are discarded at the end of the preprocessing as they are too large to fit
in subquadratic space.
The main challenge is to utilize the correct DOs at query time.
For this, we \emph{emulate} a root-to-leaf transversal in the discarded FT-tree. 
This motivates the name \emph{fault-tolerant tree oracles}, abbreviated \emph{FT-tree oracles}.

\shortdistanceDSO*

To achieve the above theorem, we build the FT-trees along side $O(Lf \log n)^f$ path-reporting DOs with stretch $(\alpha, \beta)$ during preprocessing. When the construction of the FT-trees is completed, we dump the FT-trees and keep only the path-reporting DOs. We will show that this path-reporting DOs together with some auxiliary data structures will allow us to emulate a search in the FT-trees that were used to construct them. 

\paragraph*{The preprocessing algorithm.}
We now describe how the FT-trees and the path-reporting DOs are built side-by-side (see~\Cref{alg:derandomizing_WY} for the pseudocode). We first define the notion of {\sl $(\alpha, \beta)$-hit-and-miss}, which is somewhat similar to the one provided in Karthik and Parter~\cite{KarthikParter21DeterministicRPC}.\footnote{For simplicity we will write hit-and-miss when $(\alpha, \beta)$ are clear from the context.}
We say that a family $\mathcal{O}$ of DOs  $(\alpha,\beta)$-hits-and-misses a node $x$ in  $FT(s,t)$, if there exists at least one DO $\mathcal{O}_x \in \mathcal{O}$ such that the path $P_{\mathcal{O}_x}(s,t)$ reported by $\mathcal{O}$ when queried with $(s,t)$ is edge-disjoint from $F$, {\sl i.e.}, $E(P_{\mathcal{O}}(s,t)) \cap F = \emptyset$ and $|P_{\mathcal{O}}(s,t)| \le \alpha d(s,t,F) + \beta$.
We build the FT-trees for $G$ level by level. For each level $i$, we precompute a family $\mathcal{O}_i$ of path-reporting DOs for hop-short distances that hits-and-misses all the $i$-th level nodes $x$ of all FT-trees. 
The data structure of our $f$-DSO is then given by the family $\{\mathcal{O}_0,\dots, \mathcal{O}_f\}$. So, we discard all the computed FT-trees. Moreover, we compute an auxiliary data structure that, for every node $x$ in level $i$ of some FT-tree, can deterministically identify the {\em representative} DO in $\mathcal{O}_i$ that hits-and-misses $x$. This auxiliary data structure is necessary to simulate the search on the FT-trees whenever we need to answer to a query $(s,t,F)$.

We now describe how we compute the FT-trees level by level.

For level $0$, $\mathcal{O}_0$ contains the path-reporting DO$^{\leq L}$ $\mathcal{O}_G$ with stretch $(\alpha,\beta)$ that is computed using algorithm $\mathcal{A}$. For every two distinct vertices $s$ and $t$ of $G$, the root node $x$ of $FT(s,t)$ is associated with the path $P_x$ computed by querying the DO$^{\leq L}$ $\mathcal{O}_G$ with the pair $(s,t)$. Moreover, $F_x=\emptyset$. 

The generic level $i$, with $1\leq i\leq f$, of the FT-trees is computed using a two stage algorithm. 

\begin{algorithm}[th!]
   	\caption{The algorithm that builds the $f\text{-DSO}^{\leq L}$ with stretch $(\alpha,\beta)$ of~\autoref{thm:short_distance_DSO}.}
\label{alg:derandomizing_WY}
  	
    \SetKwInOut{Input}{Input}
    \SetKwInOut{Output}{Output}

    \Input{An undirected graph $G = (V,E)$, a positive integer $L$, a sensitivity parameter $f \in \mathbb{N}$, with $f = o(\frac{\log n}{ \log \log n})$, and an algorithm $\mathcal{A}$ that can build a path-reporting DO$^{\leq L}$ with stretch $(\alpha,\beta)$, space $\mathsf{S}$,  query time $\mathsf{Q}$, and preprocessing time $\mathsf{T}$ such that (i) both distance and path-reporting queries are deterministic and (ii) when the DO$^{\leq L}$ is queried on a pair $(s,t)$, the number of edges of the reported path, if a path is returned, is at most the value of the reported distance.}
    \Output{The data structure of a deterministic $f$-$\text{DSO}^{\leq L}$ of $G$ with stretch $(\alpha, \beta)$.}
    
   \BlankLine
    \For(\tcp*[f]{Compute $i$-th level of all FT-trees}){$i = 0,\dots,f$}
    {
        $\mathcal{HM}'_i \gets \emptyset$\;
        \ForEach{$s,t \in V, s\neq t$} 
        {
            \If(\tcp*[f]{Build the root node of $FT(s,t)$}){$i=0$}
            {
                Add the root node $x$ to $FT(s,t)$ with $F_x=\emptyset$\;
            }
            \Else(\tcp*[f]{Build the $i$-th level with $i>0$ from nodes of level $i-1$})
            {
                \ForEach{node $x$ in $(i-1)$-th level of $FT(s,t)$}
                {
                    \ForEach{edge $e$ in $P_x$}
                    {
                        Add to $FT(s,t)$ a child $x_e$ of $x$ with $F_{x_e}=F_x\cup\{e\}$\;
                    }
                }
            }
            \ForEach{node $x$ in $i$-th level of $FT(s,t)$}
            {
                 Use Dijkstra or query a precomputed path-reporting $f$-DSO$^{\leq L}$ $\mathcal{O}$ that returns exact distances ({\sl e.g.}, the $f$-DSO$^{\leq L}$ of Theorem 39 in~\cite{KarthikParter21DeterministicRPC}), to get a shortest path $P'(s,t,F_x)$ from $s$ to $t$ in $G-F_x$\;\label{step:rpc} 
                \If{$|P'(s,t,F_x)|\leq L$}
                {
                    $P'_x\gets P'(s,t,F_x)$\;
                    add $(P'_x,F_x)$ to $\mathcal{HM}_i$\;
                }
                \Else
                {
                    $P'_x \gets \perp$\;
                }
            }
            
        }

        Compute a family of graphs $\mathcal{G}_i$ that hits-and-misses all pairs in $\mathcal{HM}_i$ \label{step:partial_rpc}\;

        Use $\mathcal{A}$ to compute the family $\mathcal{O}_i=\{\mathcal{O}_H \mid H \in \mathcal{G}_i\}$\;

        Build a data structure $\mathcal{D}_i$ that, when queried on a node $x$ in the $i$-th level of  $FT(s,t)$, with $P'_x \neq \perp$, deterministically returns $\mathcal{O}_x\in \mathcal{O}_i$ that hits-and-misses $(P(s,t,F_x),F_x)$, where $P(s,t,F_x)$ is a path from $s$ to $t$ in $G-F_x$ with at most $\alpha L + \beta$ edges\;
        \ForEach{$s,t \in V, s\neq t$} 
        {
            \ForEach{node $x$ in $i$-th level of $FT(s,t)$, with $P'_x \neq \perp$}
            {
                Query $\mathcal{D}_i$ with $x$ to find the oracle $\mathcal{O}_x \in \mathcal{O}_i$
                \label{step:choose_D_x}
                \;

                Query $\mathcal{O}_x$ to compute the path $P(s,t,F_x)$  \label{step:query-do} \;
                $P_x \gets P(s,t,F_x)$\;
            }  
        }
    }
    
    \BlankLine
    \Return $\{(\mathcal{O}_i,\mathcal{D}_i) \mid i\in \{0,...,f\}\}$\;
\end{algorithm}

\begin{algorithm}[th!]
   	\caption{The query algorithm of the $f\text{-DSO}^{\leq L}$ with stretch $(\alpha,\beta)$ of~\autoref{thm:short_distance_DSO}.}
\label{alg:derandomizing_WY_query}
  	
    \SetKwInOut{Input}{Input}
    \SetKwInOut{Output}{Output}

    \Input{A triple $(s,t,F)$, where $s,t \in V$, with $s\neq t$ and $F\subseteq E$, $|F|\leq f$.}
    \Output{A path from $s$ to $t$ in $G-F$ with at most $\alpha L+\beta$ edges, if $d(s,t,F)\leq L$, or $\perp$ otherwise.}
    
   \BlankLine
   $F_x\gets \emptyset$; $i\gets 0$; $\text{status} \gets \text{not-found}$\;
    \While(\tcp*[f]{Emulate the search in $FT(s,t)$}){$\text{status}=\text{not-found}$}
    {
        Query $\mathcal{D}_i$ with $(s,t,F_x)$ to get the representative DO$^{\leq L}$ $\mathcal{O}_x \in \mathcal{O}_i$\;

        Query $\mathcal{O}_x$ with $(s,t)$ to compute $P_x=P(s,t,F_x)$\;

        \If(\tcp*[f]{Check for feasibility}){$P_x=\perp$ or $|P_x|> \alpha L + \beta$}
        {
            $\text{status}\gets \text{not-exist}$\;
        }
        \ElseIf{$E(P_x) \cap F =\emptyset$}
        {
            $\text{status}\gets \text{found}$\;
        }
        \Else
        {
            Let $e \in E(P_x) \cap F$\;
            \label{step:add_e}$F_x \gets F_x \cup\{e\}$\; 
            $i\gets i+1$\;
        }
    }
    \BlankLine
    \textbf{if }{$\text{status}=\text{found}$}
        \textbf{then} \Return $P_x$
        \textbf{else} \Return $\perp$\;
\end{algorithm}

In the first stage we do the following. For every two distinct  vertices $s$ and $t$ of $G$, and for every node $x$ in level $i$ of $FT(s,t)$, we compute a path $P'_x$ from $s$ to $t$ in the graph $G-F_x$ using Dijkstra's algorithm. Once the entire $i$-th level of all FT-trees has been computed,  we compute a family of graphs $\mathcal{G}_i$ that,  according to the definition provided in Karthik and Parter~\cite{KarthikParter21DeterministicRPC}, \emph{hits-and-misses} the family of pairs $(P'_x,F_x)$, for all nodes $x$ in level $i$ of some FT-tree. More precisely, $\mathcal{G}_i$ hits-and-misses $(P'_x,F_x)$ if there is $H_x \in \mathcal{G}_i$ such that $H_x$ contains $P'_x$, but does not contain $F_x$.

In the second stage we use the algorithm $\mathcal{A}$ to build the set of DOs for hop-short distances $\mathcal{O}_i=\{\mathcal{O}_H \mid H\in \mathcal{G}_i\}$ with stretch $(\alpha,\beta)$. We then substitute all the paths associated with the nodes of the $i$-th level of all the FT-trees with 
the corresponding paths computed by querying the representative DOs for hop-short distances in $\mathcal{O}_i$. More precisely, given a node $x$ of the $i$-th level of $FT(s,t)$, we pick a graph $H_x\in \mathcal{G}_i$ that 
hits-and-misses $(P'_x,F_x)$, 
we query the DO$^{\leq L}$ $\mathcal{O}_{H_x}$ with stretch $(\alpha,\beta)$ to compute a path $P_x$ from $s$ to $t$ in $G-F_x$, and, finally, we substitute $P'_x$ with $P_x$ in node $x$. So, at the end of the second stage, each node $x$ contained in the $i$-th level of an FT-tree stores the path $P_x$ computed using the representative DO$^{\leq L}$ $\mathcal{O}_{H_x} \in \mathcal{O}_i$.
This means that the FT-trees are built using the paths $P_x$, and not the paths $P'_x$ that are computed during the first stage. Therefore, when we move from level $i<f$ to level $i+1$, each node $x$ of the $i$-th level has one child node $x_e$ for each edge $e$ of $P_x$, and $F_{x_e}=F_x\cup\{e\}$. Therefore, the branching factor of the nodes in the FT-trees is $\alpha L + \beta$ in the worst case.

\paragraph*{The query algorithm.}
We now describe how we can answer to a query $(s,t,F)$, and use the precomputed family of DOs for hop-short distances $\{\mathcal{O}_0,\dots, \mathcal{O}_f\}$ to emulate the search in $FT(s,t)$ (see~\Cref{alg:derandomizing_WY_query}).

To emulate the exploration of a node $x$ in level $i$ of $FT(s,t)$, we use the auxiliary data structure to identify the representative DO$^{\leq L}$ in  $\mathcal{O}_i$ and then we query it with the pair $(s,t)$ to compute the path $P_x$. We check whether $P_x$ contains some of the failing edges of $F$. If $P_x$ does not contain a failing edge, we return it. Otherwise, if $P_x$ contains some of the failing edges $F$, we select any edge $e$ of $F$ that is also contained in $P_x$ and emulate the traversal of $FT(s,t)$ by emulating the exploration of the child node $x_e$ of $x$ associated with the failure $e$.

\paragraph*{The analysis.} In the above description we used multiple times Dijkstra to compute a shortest path $P'_x$ in $G - F_x$ for every node $x$ of the FT-trees, but this adds a term of $\Otilde(n^2 (\alpha L + \beta)^f m)$ to the preprocessing. To improve upon the preprocessing time and get rid of the $\Otilde(n^2 (\alpha L + \beta)^f m)$ term, we use the path-reporting $f$-DSO$^{\leq L}$ of Theorem 39 in Karthik and Parter~\cite{KarthikParter21DeterministicRPC} with parameters $L$ and $f$ during preprocessing, whose preprocessing time is $O(fL \log n)^{f+1} \cdot APSP^{\leq L}$, where $APSP^{\leq L}$ is the time needed by an algorithm to compute all-pairs shortest paths with at most $L$ edges in a graph $H$ with $n$ vertices and $O(m)$ edges. In our case it is sufficient to use the trivial $\Otilde(mn)$ APSP algorithm as  $APSP^{\leq L}$, so the time to compute this oracle is $O(fL \log n)^{f+1} \cdot O(mn)$.
This oracle answers to a distance query $(s,t,F)$ in time $O(f^2)$, and reports the path $P(s,t,F)$ in additional $O(L)$ time.

We use Theorem 48 of Karthik and Parter to compute the set of graphs $\mathcal{G}_i$ in time $\Otilde(n^2 (\alpha L + \beta)^f) \cdot O(fL \log n)^f)$. Lemma 51 of Karthik and Parter states that, for every node $x$ of level $i$ in our FT-trees, there are $O(f\log n)$ graphs in $\mathcal{G}_i$ that do not contain $F_x$, and at least one of such $O(f\log n)$ graphs hits-and-misses $x$.  In a similar way as in Section 4.3 of \cite{Bilo23CompactDO}, one can construct a data structure $\mathcal{D}_i$ by specifying a subset of columns of an error-correcting code matrix.
The size of the structure is $O(Lf \log n)^f \cdot O(f \log n)$ and given a set $F_x$ of at most $f$ edges, it can compute the (indices of a) set of $O(f \log n)$ graphs that do not contain $F_x$ in time $\Otilde(f^2 L)$. 

$\mathcal{A}$ is the construction algorithm of any DO$^{\leq L}$ with stretch of $(\alpha,\beta)$. As mentioned above, we use $\mathcal{A}$ to convert each graph $H$ in $\mathcal{G}_i$ into a corresponding DO$^{\leq L}$ $\mathcal{O}_H$ with stretch $(\alpha,\beta)$.

The auxiliary data structure we use in our theorem to identify the representative DO$^{\leq L}$ in $\mathcal{O}_i$ of a given node $x$ of level $i$ in $FT(s,t)$ is defined as follows. We use Lemma 51 in Karthik and Parter to identify the $O(f\log n)$ graphs that do not contain $F_x$ in $\Otilde(f^2L)$ time. For each of these $O(f\log n)$ graphs, say $H$, we look at the corresponding DO$^{\leq L}$ $\mathcal{O}_H \in \mathcal{O}_i$ with stretch $(\alpha, \beta)$ and query $\mathcal{O}_H$ with the pair $(s,t)$ to discover the $(\alpha,\beta)$-approximate distance $\hat{d}_H(s,t)$ from $s$ to $t$ in the graph $H$. We point out that, at this particular stage, we only need to retrieve the values of the distances from $s$ to $t$ in the $O(f\log n)$ graphs. Among all distances computed, we take the minimum value and look at the corresponding oracle $\mathcal{O}_H \in \mathcal{O}_i$ associated to a graph $H \in \mathcal{G}_i$. We assume that each oracle in $\mathcal{O}_i$ has a unique identifier. So, in case of ties, we select the DO$^{\leq L}$ with minimum identifier. We use exactly the same deterministic algorithm for building the FT-trees, that is, the path $P_x$ that replaces $P'_x$ is the one reported by querying the oracle $O_H$ with the pair $(s,t)$.

We recall that $\mathsf{Q}$, $\mathsf{S}$, and $\mathsf{T}$ are, respectively, the query time, space, and preprocessing time of the path-reporting DO$^{\leq L}$ of a graph with $n$ vertices and $O(m)$ edges that is built by algorithm $\mathcal{A}$.
\begin{lemma}
\Cref{alg:derandomizing_WY} constructs an $f$-DSO of size $O(fL \log n)^f \cdot \mathsf{S}$ in time $\Otilde(n^2 (\alpha L + \beta)^f) \cdot ( O(fL \log n)^f + \mathsf{Q} ) + O(fL \log n)^f \cdot \mathsf{T}$.
\end{lemma}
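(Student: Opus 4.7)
My plan is to account separately for the storage produced at each of the $f{+}1$ levels of the FT-trees, and then sum the per-level preprocessing costs. Throughout, the central structural fact is that every node $x$ in the FT-trees has branching factor at most $\alpha L+\beta$. This holds because each child of $x$ corresponds to an edge of $P_x$, and the path $P_x$ produced in line~\ref{step:query-do} of~\Cref{alg:derandomizing_WY} is obtained by querying an $(\alpha,\beta)$-stretch DO$^{\leq L}$; by assumption (ii) on $\mathcal{A}$ the number of reported edges is bounded by the reported distance, which is at most $\alpha L+\beta$ whenever the node has a hop-short replacement path and $\perp$ otherwise. It follows that for each pair $(s,t)$ and each level $i\le f$, the $i$-th level of $FT(s,t)$ contains at most $(\alpha L+\beta)^{i}$ nodes, so across all $n^2$ trees the $i$-th level contributes $\Otilde(n^2(\alpha L+\beta)^{f})$ items to every subsequent bookkeeping step.

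For the space bound, I would observe that the returned structure is $\{(\mathcal{O}_i,\mathcal{D}_i)\}_{i=0}^{f}$. At each level, Theorem~48 of Karthik and Parter guarantees $|\mathcal{G}_i|=O(fL\log n)^{f}$, so $|\mathcal{O}_i|\cdot \mathsf{S}=O(fL\log n)^{f}\cdot\mathsf{S}$, and the auxiliary structure $\mathcal{D}_i$ has size $O(fL\log n)^{f}\cdot O(f\log n)$, which is absorbed into the DO-factor since $\mathsf{S}\ge 1$. Summing over the $O(f)$ levels (and noting $f=o(\log n/\log\log n)$) gives the claimed $O(fL\log n)^{f}\cdot \mathsf{S}$ bound.

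For the preprocessing time, I would break the work done at a single level $i$ into four pieces and then multiply by $f{+}1$. First, in line~\ref{step:rpc}, instead of running Dijkstra per tree node, I would query the precomputed path-reporting $f$-DSO$^{\leq L}$ of Theorem~39 of Karthik--Parter; its one-shot preprocessing is $O(fL\log n)^{f+1}\cdot\Otilde(mn)$ and each query costs $O(f^2+L)$, yielding total work $\Otilde(n^2(\alpha L+\beta)^{f})$ over all tree nodes at that level. Second, computing $\mathcal{G}_i$ via Theorem~48 of Karthik--Parter applied to $\mathcal{HM}_i$ (of size $\Otilde(n^2(\alpha L+\beta)^{f})$) costs $\Otilde(n^2(\alpha L+\beta)^{f})\cdot O(fL\log n)^{f}$. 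Third, building the DOs in $\mathcal{O}_i$ via $\mathcal{A}$ costs $|\mathcal{G}_i|\cdot\mathsf{T}=O(fL\log n)^{f}\cdot\mathsf{T}$. Fourth, locating the representative $\mathcal{O}_x$ for each tree node at that level (line~\ref{step:choose_D_x}) and querying it (line~\ref{step:query-do}) costs $\Otilde(f^2L)+\mathsf{Q}$ per node; by the $O(f\log n)$ candidate list from the error-correcting-code-based $\mathcal{D}_i$ and absorbing the $O(f\log n)$ factor into $\Otilde(\cdot)$, the aggregate is $\Otilde(n^2(\alpha L+\beta)^{f})\cdot\mathsf{Q}$. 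Summing the four contributions yields
\[
\Otilde(n^2(\alpha L+\beta)^{f})\cdot\bigl(O(fL\log n)^{f}+\mathsf{Q}\bigr)+O(fL\log n)^{f}\cdot\mathsf{T}
\]
per level, and multiplying by the $f{+}1$ levels preserves the bound up to the $\Otilde(\cdot)$ notation (again using $f=o(\log n/\log\log n)$).

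The main technical obstacle I anticipate is ensuring that the work spent in line~\ref{step:rpc} actually fits inside the claimed $\Otilde(n^2(\alpha L+\beta)^{f})\cdot O(fL\log n)^{f}$ term rather than blowing up to $\Omega(mn^2(\alpha L+\beta)^{f})$, which would happen if Dijkstra were run per node. The key move is to substitute Dijkstra by a single precomputation of the exact path-reporting $f$-DSO$^{\leq L}$ of Karthik--Parter, whose amortized construction cost $O(fL\log n)^{f+1}\cdot\Otilde(mn)$ is charged against the construction of the $\mathcal{G}_i$'s, and whose per-query cost $O(f^2+L)$ is charged against the per-node bookkeeping; both are dominated by the terms already present in the stated bound. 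The stretch guarantee is inherited directly from the DO$^{\leq L}$ that line~\ref{step:query-do} ultimately uses, so no additional argument is needed for stretch in this lemma.
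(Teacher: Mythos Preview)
Your proposal is correct and follows essentially the same approach as the paper: bound the branching factor of each FT-tree node by $\alpha L+\beta$ to get $\Otilde(n^2(\alpha L+\beta)^f)$ nodes overall, replace per-node Dijkstra with the precomputed Karthik--Parter exact $f$-DSO$^{\le L}$ from Theorem~39, invoke Theorem~48 to build the $\mathcal{G}_i$'s, run $\mathcal{A}$ on each resulting graph, and use the error-correcting-code structure for $\mathcal{D}_i$. The only cosmetic difference is that the paper tracks exponent $i$ at level $i$ (so $|\mathcal{G}_i|=O(fL\log n)^i$ and the level-$i$ node count is $(\alpha L+\beta)^i$) and then sums, whereas you uniformly use exponent $f$; this changes nothing in the final bound.
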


\begin{proof}
First, we preprocess the $f$-DSO$^{\leq L}$ $\mathcal{O}_{KP}$ of Karthik and Parter from Theorem 39 in \cite{KarthikParter21DeterministicRPC} with parameters $L$ and $f$ in $O(fL \log n)^{f+1} \cdot mn$ time, as described above.

Then we build, level by level, $n^2$ FT-trees $FT(s,t)$. We claim that each FT-tree contains at most $(\alpha L + \beta)^f$ nodes, as every node $x$ of the tree contains either a path $P_x := \perp$ and then $x$ must be a leaf node, or it contains a path $P_x$ with at most $\alpha L + \beta$ edges that is reported by some DO$^{\leq L}$ with stretch $(\alpha, \beta)$ (if we did not find a path with at most $\alpha L + \beta$ edges then we store $P_x := \perp$ as well). The height of an FT-tree is at most $f$ and the number of its nodes is $O( (\alpha L + \beta)^f)$.

Next, we analyze the time it takes to construct the $i$-th level of the fault tolerant trees. Similar to the above claim, we can observe that there are at most $(\alpha L + \beta)^i$ nodes at level $i$. For each such node $x$, we first query $\mathcal{O}_{KP}$ with $(s,t,F_x)$ to obtain $P'_x$ in $O(f^2 + L)$ time. Then, we collect all the pairs $(P'_x, F_x)$ for every $x$ in the $i$-level of any FT-tree $FT(s,t)$, and run the algorithm of Theorem 48 in Karthik and Parter~\cite{KarthikParter21DeterministicRPC} which computes a set of $O(f L \log n)^i$ graphs $\mathcal{G}_i$ that hits-and-misses all the pairs $(P'_x, F_x)$ corresponding to the nodes $x$ of that level in $\Otilde(n^2 (\alpha L + \beta)^i) \cdot O(f L \log n)^i$ time. Next, we use $\mathcal{A}$ to construct a DO$^{\leq L}$ with stretch $(\alpha, \beta)$ for every graph in $\mathcal{G}_i$ in $O(f L \log n)^i \cdot \mathsf{T}$ time. 
The data structure $\mathcal{D}_i$ does not require additional construction time as it utilizes data structures we have already constructed above.
Finally, for every node $x$ in the $i$-th level of $FT(s,t)$ with $P'_x \ne \perp$ we query $\mathcal{D}_i$ with $x$ to find the representative oracle $\mathcal{O}_x \in \mathcal{H}$ in $O(f^2 L)$ time, then we query $\mathcal{O}_x$ with $(s,t,F_x)$ in $\mathsf{Q}$ time and report the path $P(s,t,F_x)$ in additional $O(\alpha L + \beta)$ time. 

In total, the preprocessing time of~\Cref{alg:derandomizing_WY} is dominated by $\Otilde(n^2 (\alpha L + \beta)^f) \cdot ( O(fL \log n)^f + \mathsf{Q} ) + O(fL \log n)^f \cdot \mathsf{T}$.

The size of the $f$-DSO$^{\leq L}$ is the size $\mathsf{S}$ of the input DO multiplied by the $O(f L \log n)^i$ graphs that are generated in line~\ref{step:rpc} over all levels $0 \le i \le f$, which sums up to $O(f L \log n)^f \cdot \mathsf{S}$. 
\end{proof}

\begin{lemma}
\Cref{alg:derandomizing_WY_query} takes 
$\Otilde(f \cdot (\mathsf{Q} + (\alpha+f^2) L + \beta))$ time.
\end{lemma}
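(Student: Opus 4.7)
The plan is to split the cost of \Cref{alg:derandomizing_WY_query} into (a) the number of iterations of the while loop and (b) the work performed in each iteration, and then multiply. For (a), I would argue that each pass through the loop either terminates the search (by setting \emph{status} to \emph{found} or \emph{not-exist}) or strictly increases $|F_x|$ by one edge in line~\ref{step:add_e}. Since $|F| \leq f$ and every added edge lies in $F$, after at most $f$ growth steps we reach $F_x = F$; at the corresponding level-$f$ node the representative's graph excludes all of $F$, so $E(P_x) \cap F = \emptyset$ (or $P_x = \perp$) and the loop exits on the feasibility check. Thus the loop runs $O(f)$ times.

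For (b) at level $i$, I would account for three pieces of work. \emph{First}, identifying the representative $\mathcal{O}_x$ by re-executing the preprocessing rule for $\mathcal{D}_i$: apply Lemma~51 of Karthik and Parter to list in $\Otilde(f^2 L)$ time the $O(f \log n)$ subgraphs of $\mathcal{G}_i$ that avoid $F_x$, query each of their associated DO$^{\leq L}$ oracles on $(s,t)$ for a distance estimate, and select the minimizer with ties broken by identifier. Under the standing regime $f = o(\log n/\log\log n)$, the multiplier $O(f \log n)$ on $\mathsf{Q}$ is polylogarithmic in $n$ and is absorbed into $\Otilde(\mathsf{Q})$. \emph{Second}, a single path-reporting query on the chosen $\mathcal{O}_x$ costs $\mathsf{Q} + O(\alpha L + \beta)$, using the assumption that path reporting pays $O(1)$ per edge together with $|P_x| \leq \alpha L + \beta$. \emph{Third}, the test $|P_x| \leq \alpha L + \beta$ is trivial, while computing $E(P_x) \cap F$ and extracting one intersecting edge for line~\ref{step:add_e} is a linear scan of $P_x$ against a hash set of $F$ (built once per query in $O(f)$ time), totalling $O(\alpha L + \beta)$.

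Summing the three pieces gives a per-iteration bound of $\Otilde(\mathsf{Q} + (\alpha + f^2) L + \beta)$. Multiplying by the $O(f)$ iterations from (a) yields the claimed total $\Otilde\bigl(f \cdot (\mathsf{Q} + (\alpha + f^2) L + \beta)\bigr)$.

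The step I expect to be the main obstacle is the representative-lookup cost: the naive accounting of piece (i) charges $\Theta(f \log n)$ distance queries per iteration, giving a $\Theta(f^2 \log n \cdot \mathsf{Q})$ contribution across all levels, which is asymptotically larger than the $\Otilde(f \mathsf{Q})$ term appearing in the target bound. Collapsing this to the stated form relies on letting $\Otilde(\cdot)$ absorb the $O(f \log n)$ factor, which is valid under the paper's standing hypothesis $f = o(\log n / \log\log n)$ and the convention that polylogarithmic factors in $n$ are hidden; without that convention one would state the per-level $\mathsf{Q}$-cost as $f \mathsf{Q}$ explicitly, producing an $f^2 \mathsf{Q}$ summand that is nevertheless absorbed by the other $f$-polynomial terms in all downstream applications.
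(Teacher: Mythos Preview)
Your proposal is correct and mirrors the paper's own argument: both bound the number of iterations by $f+1$ via the monotone growth of $F_x\subseteq F$, and both charge each iteration $O(f^2L)$ for the hit-and-miss lookup, $O(f\log n)\cdot\mathsf{Q}$ for the candidate distance queries, and $O(\alpha L+\beta)$ for path reporting. Your explicit justification that the $O(f\log n)$ multiplier on $\mathsf{Q}$ is polylogarithmic under the standing assumption $f=o(\log n/\log\log n)$ (stated in the input of \Cref{alg:derandomizing_WY}) is exactly the step the paper leaves implicit when collapsing to $\Otilde(f\mathsf{Q})$.
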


\begin{proof}
The algorithm simulates the traversal of the query $(s,t,F)$ in the FT-tree $FT(s,t)$. Starting with $F_x \gets \emptyset$, at each iteration an $(\alpha, \beta)$-approximate shortest path $P_x$ from $s$ to $t$ in the graph $G-F'$ is computed by querying $\mathcal{D}_i$ to find the representative DO$^{\leq L}$ $\mathcal{O}_x$ that hits-and-misses $(P_x, F_x)$, and querying $\mathcal{O}_x$ with $(s,t)$ to find the path $P_x$. 
The query to $\mathcal{D}_i$ takes $O(f^2L + f \log n \cdot \mathsf{Q})$ time as it includes using the hit-and-miss error correcting codes in $O(f^2 L)$ time to find $O(f \log n)$ DOs for hop-short distances that hit-and-miss $(P'_x, F_x)$, querying each of these oracles with $(s,t)$ in $O(f \log n \cdot \mathsf{Q})$ time, and obtaining the representative DO$^{\leq L}$ $\mathcal{O}_x$, which is the first to return the minimum $s$-to-$t$ distance. Then querying $\mathcal{O}_x$ with the pair $(s,t)$ takes additional $\mathsf{Q}$ time and reporting an $(\alpha, \beta)$-approximate shortest path $P_x$ using $\mathcal{O}_x$ takes additional $O(\alpha L + \beta)$ time. In total, it takes $O(f^2L + f \log n \mathsf{Q} + \alpha L + \beta)$ to compute the path $P_x$. If $|P_x| > L$ then the query stops and returns that there is no path of length at most $L$ from $s$ to $t$ in $G-F$. Otherwise, let $e \in E(P_x) \cap F$, the algorithm adds $e$ to $F_x$ and continues to the next iteration. 

As $F_x \subseteq F$ and $|F| \le f$, the algorithm must finish after at most $f+1$ iterations, and as each iteration takes $O(f^2L + f \log n \mathsf{Q} + \alpha L + \beta)$ time, in total the runtime of the query is $\Otilde(f \cdot (\mathsf{Q} + (\alpha+f^2) L + \beta))$. 
\end{proof}

\begin{lemma} [Correctness]
\label{lem:correctness_query_short-distance_DSO}
Given a query $(s,t,F)$,~\Cref{alg:derandomizing_WY_query} computes either a path $P_x$ from $s$ to $t$ in $G-F$ with at most $\alpha L+\beta$ edges such that $d(s,t,F)\leq |P_x|\leq \alpha d(s,t,F)+\beta$ or it returns $\perp$. If $d(s,t,F)\leq L$, then the algorithm always returns a path $P_x$.
\end{lemma}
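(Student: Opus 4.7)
The plan is to argue by an induction mirroring the while-loop of \Cref{alg:derandomizing_WY_query}, interpreting each iteration as the emulated exploration of a node of $FT(s,t)$. The invariant I maintain is that upon entering the loop body with current values $(F_x,i)$, there is a level-$i$ node $y$ of $FT(s,t)$ whose failure set $F_y$ equals $F_x$. Initially $F_x = \emptyset$ and $i=0$, matching the root. If the iteration does not exit, the algorithm adds to $F_x$ an edge $e \in E(P_x) \cap F$; since $P_x$ is reported by the oracle $\mathcal{O}_x$ built on a graph $H_x$ whose edge set avoids $F_x$, we have $e \notin F_x$, so $F_x \cup \{e\}$ is strictly larger and corresponds to the child of $y$ indexed by $e$. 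As $F_x \subseteq F$ and $|F|\le f$, this process must terminate after at most $f+1$ iterations.

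For the stretch bound when the loop exits with $\text{status}=\text{found}$, the check $E(P_x) \cap F = \emptyset$ certifies that $P_x$ is a genuine $s$-$t$-path in $G-F$, so $d(s,t,F) \le |P_x|$. For the upper bound, the data structure $\mathcal{D}_i$ was built so that, when queried on the node $y$ associated with $(F_x,i)$, it returns a DO $\mathcal{O}_x = \mathcal{O}_{H_x}$ whose underlying graph $H_x \in \mathcal{G}_i$ hits-and-misses $(P'_y,F_y)$; in particular $H_x$ contains a shortest $s$-$t$-path of $G-F_x$ of length $|P'_y| \le L$ and avoids $F_x$. The $(\alpha,\beta)$-stretch guarantee of $\mathcal{O}_{H_x}$ then yields $|P_x| \le \alpha \cdot d_{H_x}(s,t) + \beta \le \alpha \cdot d(s,t,F_x) + \beta \le \alpha \cdot d(s,t,F) + \beta$, using $F_x \subseteq F$ in the last step. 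The same chain of inequalities bounds $|P_x|$ by $\alpha L + \beta$ under the condition $d(s,t,F_x) \le L$.

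To rule out the $\perp$-branch when $d(s,t,F) \le L$, observe that $d(s,t,F_x) \le d(s,t,F) \le L$ throughout the loop by the invariant $F_x \subseteq F$. Hence the node $y$ at level $i$ of $FT(s,t)$ corresponding to $F_x$ has $P'_y \ne \perp$ in preprocessing, so $\mathcal{D}_i$ returns a valid representative $\mathcal{O}_x$, and the previous paragraph certifies $|P_x| \le \alpha L + \beta$. Thus neither $P_x = \perp$ nor $|P_x| > \alpha L + \beta$ can trigger, and the loop can exit only through $\text{found}$; combined with the termination argument this guarantees that a path is returned.

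\textbf{Main obstacle.} The delicate point is that the emulated traversal must be \emph{faithful}: the deterministic selection/tie-breaking rule built into $\mathcal{D}_i$ has to return, on input $(s,t,F_x)$, exactly the same representative $\mathcal{O}_{H_x}$ that was used in preprocessing to populate node $y$ of $FT(s,t)$. Only then does the invariant $(F_x,i) \leftrightarrow y$ make sense along the traversal, and only then can one legitimately invoke the hit-and-miss properties of $\mathcal{G}_i$ to argue about $H_x$ and its contained shortest replacement path $P'_y$. Once this consistency is established, the induction above goes through cleanly.
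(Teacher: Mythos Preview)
Your approach is essentially the paper's: an induction on the level $i$ showing that the while-loop faithfully emulates the root-to-leaf traversal of $FT(s,t)$, with the deterministic selection rule in $\mathcal{D}_i$ as the linchpin. The paper resolves your ``main obstacle'' exactly as you anticipate: because $\mathcal{D}_i$ applies the \emph{same} deterministic rule (minimum reported distance over the $O(f\log n)$ candidate oracles avoiding $F_x$, ties broken by oracle index) at query time as was applied in preprocessing to populate node $y$, and because every DO is deterministic, the query recovers the very same $\mathcal{O}_x$ and hence the same path $P_x$ that was stored at $y$.

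There is one small slip in your stretch argument. You write that the returned graph $H_x$ ``hits-and-misses $(P'_y,F_y)$; in particular $H_x$ contains a shortest $s$-$t$-path of $G-F_x$'', and then deduce $d_{H_x}(s,t)\le d(s,t,F_x)$. This is not guaranteed: the minimum-distance rule selects $H_x$ among the $O(f\log n)$ graphs avoiding $F_x$, but $H_x$ itself need not contain $P'_y$. Since $H_x\subseteq G-F_x$, you only get $d_{H_x}(s,t)\ge d(s,t,F_x)$, the wrong direction. The correct route (which the paper takes) is comparative: some candidate $G_j$ \emph{does} contain $P'_y$, so its oracle reports $\widehat d_j\le \alpha\, d(s,t,F_x)+\beta$; by minimality $\widehat d_x\le \widehat d_j$, and the input assumption that the reported path has at most $\widehat d_x$ edges gives $|P_x|\le \alpha\, d(s,t,F_x)+\beta$. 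With this fix your argument goes through and matches the paper's.
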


\begin{proof}
We prove by induction on $0 \le i \le f$ that~\Cref{alg:derandomizing_WY_query} emulates the search in the FT-tree $FT(s,t)$ that was built by~\Cref{alg:derandomizing_WY}. This is enough to show correctness. 

For $i=0$, i.e., $F_x=\emptyset$, there is only one deterministic data structure $\mathcal{O} \in \mathcal{O}_0$ which is an $(\alpha, \beta)$-stretch DO$^{\leq L}$ of $G$. Querying $\mathcal{O}$ with the pair $(s,t)$ either returns a path $P_x$ from $s$ to $t$ in $G$ such that $|P_x|\geq d(s,t)$ or it returns $\perp$. Moreover, if $d(s,t)\leq L$, then the oracle indeed returns a path such that $|P_x|\leq \alpha d(s,t) + \beta \leq \alpha L + \beta$ as it guarantees a stretch of $(\alpha,\beta)$. As the oracle is deterministic, $|P_x|$ corresponds exactly to the path that was stored in the root of $FT(s,t)$ by~\Cref{alg:derandomizing_WY}.

For the inductive step, by the induction hypothesis assume that, a node $x$ in level $i-1$ of $FT(s,t)$ stores exactly the same information that was associated with it by~\Cref{alg:derandomizing_WY}. 

The proof breaks into the following two cases.
\begin{description}
    \item[Case 1:] node $x$ stores $\perp$. In this case, it must be that $d(s,t,F_x) > L$. As $F_x\subseteq F$, we have that $d(s,t,F)\geq d(s,t,F_x)>L$ and~\Cref{alg:derandomizing_WY_query} 
    correctly returns $\perp$.
    \item[Case 2:] node $x$ stores a path $P_x$ with at most $\alpha L + \beta$ edges such that $d(s,t,F_x)\leq |P_x|\leq \alpha d(s,t,F_x) + \beta$. Then, either $E(P_x)\cap F = \emptyset$, and the algorithm correctly returns $P_x$ as $d(s,t,F)\leq |P_x|\leq \alpha d(s,t,F_x)+\beta \leq \alpha d(s,t,F)+\beta$, or $E(P_x)\cap F \neq \emptyset$. In this latter case, let $e \in E(P_x) \cap F$. In line~\ref{step:add_e} at the end of the $(i-1)$-iteration of~\Cref{alg:derandomizing_WY_query} we add $e$ to $F_x$. Let $x$ be the node of level $i$ that corresponds to the new $F_x$. Clearly,~\Cref{alg:derandomizing_WY} added this node to $FT(s,t)$. We show that the information computed by~\Cref{alg:derandomizing_WY_query} for $x$ coincides with the information that~\Cref{alg:derandomizing_WY} stored in $x$.

    According to Theorem 39 in~\cite{KarthikParter21DeterministicRPC}, querying  their exact $f$-DSO$^{\leq L}$ with $(s,t,F_x)$ results with a shortest path $P'_x$ from $s$ to $t$ in $G-F_x$. According to Theorem 48 and Lemma 51 in~\cite{KarthikParter21DeterministicRPC}, the family of graphs $\mathcal{G}_i$ computed in Step~\ref{step:partial_rpc} of~\Cref{alg:derandomizing_WY} has the property 
    that there are $O(f \log n)$ graphs among $\mathcal{G}_i$ which exclude $F_x$ and at least one of these graphs, say $G_j$, also contains $P'_x$. Recall that $\mathcal{O}_i$ is the set of the $(\alpha, \beta)$-stretch DOs for hop-short distances constructed for all graph is $\mathcal{G}_i$.
    We query all the DOs for hop-short distances  corresponding to the $O(f \log n)$ graphs that exclude $F_x$ ($\mathcal{O}_{\mathcal{G}_j}$ included) with the pair $(s,t)$ to $(\alpha,\beta)$-approximate the value $d(s,t,F-x)$, we keep the minimum-index DO$^{\leq L}$ $\mathcal{O}_x$ that returned the minimum among the computed values as in~\Cref{alg:derandomizing_WY}, and we query it to report the path $|P_x|$ from $s$ to $t$ in $G-F_x$. As all the queried oracles are deterministic, if all of them ($\mathcal{O}_{\mathcal{G}_j}$ included) return $\perp$, then $d(s,t,F_x)>L$,~\Cref{alg:derandomizing_WY} stores $\perp$ in $x$, and~\Cref{alg:derandomizing_WY_query} correctly returns $\perp$. Therefore, we can assume that  $\mathcal{O}_x$ returns a path $|P_x|$. As the graph $\mathcal{G}_x \in \mathcal{G}_i$ corresponding to $\mathcal{O}_x$ excludes $F_x$, we have that $|P_x|$ is a path from $s$ to $t$ in $G-F_x$. Therefore, $d(s,t,F_x)\leq |P_x|$. Let $\hat{d}_x$ and $\hat{d}_j$ be the distances returned by querying $\mathcal{O}_x$ and $\mathcal{O}_j$ with the pair $(s,t)$, respectively. By the choice of $\mathcal{O}_x$ and because $\mathcal{O}_j$ has a stretch of $(\alpha,\beta)$, we have $\hat{d}_x \leq \hat{d}_j \leq \alpha d_{\mathcal{G}_j}(s,t,F_x)+\beta =\alpha |P'_x|+\beta$. Furthermore, as the $s$-to-$t$ path $|P_x|$ reported by $\mathcal{O}_x$ is no longer than $\hat{d}_x$, it follows that $|P_x|\leq \alpha |P'_x| + \beta$. As a consequence, if $|P_x|>\alpha L + \beta$, then $|P'_x|> L$ and~\Cref{alg:derandomizing_WY_query} correctly returns $\perp$ as~\Cref{alg:derandomizing_WY} stored $\perp$ in $x$. If $|P_x|\leq \alpha L + \beta$, then~\Cref{alg:derandomizing_WY} stores $P_x$ in the node $x$ (lines~\ref{step:choose_D_x}~and~\ref{step:query-do}) and, by construction,~\Cref{alg:derandomizing_WY_query} correctly simulates the search in $FT(s,t)$.  
\end{description}
\end{proof}

\section{General Distance Sensitivity Oracles}
\label{sec:framework_DSO}

This section proves \Cref{thm:general_DSO} (restated below),
which takes an $f$-DSO$^{\leq L}$ for short distances 
and turns it into a general $f$-DSO.
Let $f$ be the sensitivity of the input DSO and $\widehat{d^{\le L}}(s,t,F)$
the returned estimate with stretch $(\alpha,\beta)$.
Recall that we always assume $\widehat{d^{\le L}}(s,t,F) \ge d(s,t,F)$;
but $\widehat{d^{\le L}}(s,t,F) \le \alpha \cdot d(s,t,F) + \beta$
only needs to hold for those queries for which
$s$ and $t$ have a replacement path on at most $L$ edges.
The input data structure can be the one from \Cref{thm:short_distance_DSO}
but this is not required.
In the current setting, we assume that $G$ is undirected, unweighted, and has unique shortest paths.
The overall ideas are similar to that of Bilò et al.~\cite{Bilo24ApproxDSOSubquadraticTheoretiCS}.
However, their result focus strictly on a multiplicative stretch of $3$.
When implementing the general ideas, we divert more and more
in order to enable our reduction to work with input DSOs with general stretch $(\alpha,\beta)$.
We highlight the key differences at the respective places.

\generaldso*

\subsection{Preprocessing and Space}
\label{subsec:general_DSO_preproc}

The first step of our preprocessing algorithm is to construct and store 
the assumed input $f$-DSO$^{\leq L}$ with stretch $(\alpha,\beta)$,
taking time $\mathsf{T}_L$ and storage space $\mathsf{S}_L$.
We then choose an approximation parameter $\varepsilon > 0$
that is used to control the increase
of the multiplicative stretch, namely, the resulting DSO has stretch
$(\alpha (1{+}\varepsilon), \beta)$.
For technical reasons, we assume $\varepsilon \le 12/7$.
Note, however, that we do not require it to be constant.

Define the \emph{granularity} as an integer parameter $\lambda = \lambda(\varepsilon,L)$.
Its value will be made precise later,\footnote{%
	The granularity will turn out to be $\lambda = \varepsilon L/9$.
}
for now it is enough to require $\lambda \le L$.
To ease notation, we assume that $\lambda$ is even;
otherwise, every occurrence of $\lambda/2$ below can be replaced by $\lfloor \lambda/2 \rfloor$.
\vspace*{.5em}

\noindent
\textbf{Pivots.}
After constructing the input DSO for hop-short distances,
the preprocessing algorithm samples and stores two random subsets $B_1,B_2 \subseteq V$ of vertices.
These sets have vastly differing densities to fulfill their purpose.\footnote{%
	Set $B_1$ has a low density in $V$ and is used to hit \emph{dense} balls,
	while $B_1$ is dense and supposed to hit \emph{sparse} balls instead.
	To avoid confusion, we do not use this distinction when referring to $B_1,B_2$.%
}
The input DSO for short distances can estimate a replacement path
as long as the endpoints are sufficiently close in $G-F$.
For general replacement distances, we use the pivots in $B_1$ and $B_2$ 
to stitch together short paths.
We will do so by constructing so-called FT-trees with or without granularity $\lambda$ between pivots.
As it turns out, the FT-tree with granularity have a very low query time
(see the discussion after \Cref{lem:weaker_guarantee}).
Unfortunately, this advantage is bought by a much higher space requirement.
This is why we choose the set $B_1$ to be very small and build
FT-trees with granularity only between pairs of vertices in $B_1 \times B_1$.
However, since there are only few \emph{pivots of the first type}, for a given vertex $u \in V$
there may be none in the vicinity of $u$ if the ball around that vertex in $G-F$ is too sparse.
We thus have to provide a fall back option.
We sample another set $B_2$ at a much higher frequency to ensure
that there is always some \emph{pivot of the second type} close to $u$.
Since we only need them if the ball around $u$ is sparse, it will be sufficient
to store and check \emph{all} pivots from $B_2$ around $u$.

Each vertex of $V$ is included in the set $B_1$ independently with probability
$C_1 f \log_2(n)/L^{f}$ for a constant $C_1 > 0$.
An application of standard Chernoff bounds, see e.g.\ 
\cite{GrandoniVWilliamsFasterRPandDSO_journal,RodittyZwick12kSimpleShortestPaths},
shows that w.h.p.\ $|B_1| = \Otilde(fn/L^{f})$.
Let $u \in V$ and $F \subseteq E$, $|F| \le f$, a set of failing edges
and $\ball_{G-F}(u,\lambda/2)$ be the ball of all vertices 
reachable from $u$ in $G-F$  within $\lambda/2$ hops. 
We call the ball \emph{dense},
if it has $|\ball_{G-F}(u,\lambda/2)| > L^f$ elements; and \emph{sparse} otherwise.
Chernoff bounds also show that, with high probability over all pairs $(u,F)$,
whenever some $\ball_{G-F}(u,\lambda/2)$ is dense, it intersects $B_1$.

The properties of the other set $B_2$ are a bit more involved.
We define the notions of a \emph{trapezoid} and a path being \emph{far away} from all failures
as introduced by Chechik et al.~\cite{ChCoFiKa17}.

\begin{definition}[$\frac{\varepsilon}{6}$-trapezoid]
\label{def:trapezoid}
  Let $F \subseteq E$ a set of edges, $u,v \in V$ two vertices,
  and $P$ a $u$-$v$-path in $G-F$.
  The $\frac{\eps}{6}$\emph{-trapezoid} around $P$ in $G-F$ is
  \begin{equation*}
    \tr^{\eps/6}_{G-F}(P) = \big\lbrace\nwspace z \in V{\setminus}\{u,v\} \mid
      \exists y \in V(P) \colon d_{G-F}(y,z)
      \le \frac{\varepsilon}{6}
        \cdot \min(\nwspace \length{P[u..y]}, \length{P[y..v]} \nwspace) \big\rbrace. 
  \end{equation*}
  The path $P$ is \emph{far away} from all failures in $F$ if 
  $\tr^{\varepsilon/6}_{G-F}(P) \cap V(F) = \emptyset$;
  otherwise $P$ is \emph{close} to $F$.
\end{definition}

Each vertex is sampled for $B_2$ independently with probability $C_2 f \log_2(n)/\lambda$
for some constant $C_2 > 0$.
We call the elements of $B_2$ \emph{pivots of the second type}.
Again, we have w.h.p.\ $|B_2| = \Otilde(f n/\lambda)$ such pivots.
Moreover, by choosing $C_2$ sufficiently large, we achieve the following covering property of $B_2$.
With high probability over all triples of vertices $u,v,w \in V$ and failure sets $F \subseteq E$ with $|F| \le f$, the following holds.

\begin{itemize}
	\item If the unique shortest $u$-$v$-path in $G$ has at least $L$ edges,
		then it contains a vertex from $B_2$.
	\item If the $u$-$v$-replacement path in $G-F$
		has at least $\lambda/2$ edges and is far away from all failures in $F$,
		then it contains a vertex of $B_2$.
	\item If the concatenation of the $u$-$v$-replacement path 
		and the $v$-$w$-replacement path in $G-F$ 
		has at least $\lambda$ edges and is far away from all failures in $F$,
		then it contains a vertex of $B_2$ among its first and last $\lambda/2$ vertices.
\end{itemize}

\noindent
The second and third item together imply that w.h.p.\ any concatenation of 
\emph{at most} two replacement paths that is hop-long and far away from $F$
has a pivot of the second type within distance $\lambda/2$ from either endpoint.
\vspace*{.5em}

\noindent
\textbf{Pivot trees.}
We need a mechanism in place that allows us to check at query time whether there
is a pivot of the first type that is sufficiently close to $v$; and, if not,
gives access to a set of not too many pivots of the second type in the vicinity.
The problem is that there are too many graphs $G-F$ to preprocess them directly.
Here, we depart from \cite{Bilo24ApproxDSOSubquadraticTheoretiCS}.
They used an $(L,f)$-replacement path coverings (RPC)~\cite{WY13,KarthikParter21DeterministicRPC}
as a proxy for the $G{-}F$.
RPCs are families of spanning subgraphs of $G$
such that for each failing set $F$, $|F| \le f$,
and pair of vertices with a replacement path with at most $L$ edges,
there exists one graph in the family that contains the path but no edge of $F$.
Their construction realizes this with $(3L)^{f+o(1)}$ graphs.\footnote{
	This needs some reconstruction of the results in 
	\cite[Section 4.2 \& 5.3]{Bilo24ApproxDSOSubquadraticTheoretiCS} 
	as they use constant $f$.
}
We do not need the full power of replacement paths coverings to find the pivots of the two types.
We give a much simpler solution based on what we call \emph{pivot trees}. 
The advantage is that they only need to be computed for the original graph $G$.
Even when incorporating the size of the pivot trees,
this allows us to reduce the space for this part of the data structure
by roughly a factor $3^{f+o(1)} L^{1+o(1)}$. 

Consider $\ball_{G}(u,\lambda/2)$ around $u$ in the original graph $G$.
It can be computed using a breath-first search.
We modify it to stop as soon at it finds the first pivot from $B_1$ (if there is one).
Note that $\ball_{G}(u,\lambda/2) \cap B_1 = \emptyset$ 
implies $|\ball_{G}(u,\lambda/2)| \le L^f$ w.h.p.
That means, the BFS explores all of $\ball_{G}(u,\lambda/2)$ in time $O(L^{2f})$
or finds a pivot of the first type even before that.

Suppose first that the search does not find such a pivot.
With high probability, the set $\ball_{G}(u,\lambda/2)$ is sparse
and contains at most $\Otilde(fL^f/\lambda)$ pivots of the \emph{second} type from $B_2$.
We store this set $\ball_{G}(u,\lambda/2) \cap B_2$.
Now assume that we have $\ball_{G}(u,\lambda/2) \cap B_1 \neq \emptyset$.
Let $p_G(u) \in B_1$ be the pivot closest to $u$ that was found by the search.
Let $P$ the shortest $u$-$p_G(u)$-path in $G$.
It has at most $\lambda/2$ edges.
For each $e \in E(P)$, we create a child node in which we conduct the same computation
for the ball $\ball_{G-e}(u,\lambda/2)$.
Note that the new closest pivot $p_{G-e}(u) \in B_1$ may not be the same as $p_G(u)$.
This is the major difference to the trees used in \Cref{sec:framework_DSO_short}
that always store paths between the same two endpoints.
It still holds that the shortest $u$-$p_{G-e}(u)$-path in $G{-}e$ has at most $\lambda/2$ edges.
We iterate this construction until the pivot tree reached depth $f$.
In each child node, a new edge from the previous path from $u$ to its closest pivot is removed.

We build such a pivot tree for every $u \in V$.
Each one has $O(\lambda^f)$ nodes.
They store either a path with $\lambda/2$ edges
or some set $\ball_{G-F}(u,\lambda/2) \cap B_2$.
Using $2 \le f$ and $\lambda \le L$, we get $\lambda/2 = \Otilde(fL^f/\lambda)$.
The total size of all trees is thus $\Otilde(f\lambda^{f-1}L^f n)$.
They can be computed in time $O(\lambda^f L^{2f}n)$.

Additionally, for every pivot $p \in B_2$ of the second type, the preprocessing computes
the shortest-path tree $T_p$ rooted at $p$ in the original graph $G$.
For each $T_p$, the data structure of Bender and Farach-Colton~\cite{BenderFarachColton00LCARevisited}
is constructed that supports lowest common ancestor (LCA) queries in constant time.
Preparing all LCA-data structures together takes time $O(|B_2| \nwspace (m+n)) = \Otilde(f mn/\lambda)$
and storing them takes space $O(|B_2| \nwspace n) = \Otilde(f n^2/\lambda)$.

W.h.p.\ the preprocessing algorithm spent time 
$\mathsf{T}_L  + O(\lambda^f L^{2f}n) + 
		\Otilde(f mn/\lambda)$ 
so far and stored information taking up space
\begin{equation*}
	\mathsf{S}_L + |B_1| + |B_2| + \Otilde(f\lambda^{f-1}L^f n)
		+ \Otilde\!\left(f \frac{n^2}{\lambda} \right)
		= \mathsf{S}_L + \Otilde(f\lambda^{f-1}L^f n)
		+ \Otilde\!\left(f \frac{n^2}{\lambda} \right).
\end{equation*}
\vspace*{.5em}

\noindent
\textbf{Fault-tolerant trees again.}
We have already utilized the fault-tolerant trees of Chechik et al.~\cite{ChCoFiKa17} 
in our proof of \Cref{thm:short_distance_DSO}.
There, every edge of the path included in a node formed its own segment.
The main construction in \cite{ChCoFiKa17} extended this to segments of exponentially increasing length.
Bilò et al.~\cite{Bilo24ApproxDSOSubquadraticTheoretiCS} later introduced a hybrid form
in which the first and last $\lambda$ edges are on their own and longer segments only 
appear in the middle part.
We build on the latter version.
The main idea behind the switch from single edges to longer segments is to reduce
the storage space of an FT-tree.
Any segment is going to be identified by its endpoints, so reducing their number lowers the space.
However, this also leads to a more coarse-grained picture at query time
as failing whole segments can lead to much more than $f$ failures
overloading the underlying $f$-DSO.
The segments thus need to be structured in a way that even if they fail completely
certain paths are still guaranteed to survive.
We make this statement precise in \Cref{lem:key_property} 
after describing the query algorithm in \Cref{subsec:general_DSO_query}.

For the construction, we first have to define certain types of concatenated paths.
Afek, Bremler-Barr, Kaplan, Cohen, and Merritt~\cite[Theorems~1 \& 2]{Afek02RestorationbyPathConcatenation_journal}
showed that every replacement path in $G-F$ consists of at most $|F|+1$ shortest paths
of the original graph $G$, possibly interleaved with $|F|$ edges in case $G$ is weighted.
We call a path of this structure $|F|$-\emph{decomposable}.

\begin{definition}[$\ell$-decomposable paths]
\label{def:decomposable} 
	If $G$ is unweighted, an $\ell$\emph{-decomposable path} is
	a concatenation of at most $\ell+1$ shortest paths of $G$.
	If $G$ is edge-weighted, an $\ell$\emph{-decomposable path} may have
	an additional edge between any pair of consecutive constituting shortest paths.
\end{definition}

\noindent
It is easy to see that any $\ell$-decomposable path 
is also $\ell'$-decomposable 
for any $\ell' \ge \ell$.
Also, any subpath of an  $\ell$-decomposable path (with granularity $\lambda$)
is again $\ell$-decomposable 
%
The $(2f{+}1)$-decomposable paths 
are of special interest to us
since they comprise all concatenations of up to two replacement paths in $G-F$.
Let $A \subseteq E$ be a set of edges, possibly much more than $f$,
and let $s,t \in V$ be two vertices.
We use $d^{(2f{+}1)}(s,t,A)$ to denote the length of the shortest $(2f{+}1)$-decomposable path
from $s$ to $t$ in $G-A$; or $+\infty$ if no such path exists.

\begin{definition}[$\ell$-expath with granularity $\lambda$]
\label{def:l-expath_granularity}
  An $\ell$-\emph{expath with granularity} $\lambda$ is a path
  $P_a \circ P_b \circ P_c$
  such that $P_a$ and $P_c$ contain at most $\lambda$ edges each, 
  while $P_b$ is a concatenation of $2\log_2(n){+}1$
  $\ell$-decomposable paths
  such that, for every $0 \le i \le 2\log_2 n$,
  the the $i$-th $\ell$-decomposable path has at most $\min(2^i, 2^{2\log_2(n) - i})$ edges.
\end{definition}

\noindent
\Cref{def:l-expath_granularity} includes the case where some or all $\ell$-decomposable paths in $P_b$
are empty, only the maximum number of edges is bounded.
Bilò et al.~\cite[Section 7]{Bilo24ApproxDSOSubquadraticTheoretiCS} 
gave and $\Otilde(fm)$-time algorithm to compute the shortest $(2f{+}1)$-expath with granularity $\lambda$ in any subgraph of $G$
when given access to the original pair-wise distances in $G$.
They come annotated with the constituting $(2f{+}1)$-decomposable paths and, in turn, 
with their shortest paths in $G$ (and interleaving edges).

The last things we need before defining FT-trees are that of netpoints, segments, and parts.

\begin{definition}[Netpoints with granularity $\lambda$]
\label{def:netpoints}
  Let $P = (v_1, v_2, \ldots, v_\ell)$ be a path.
  If $|P| \leq \lambda$, then the \emph{netpoints of $P$ with granularity $\lambda$}
  are all vertices in $V(P)$.
  Otherwise, define $p_{\text{left}}$ to be all pairs of consecutive vertices $v_j,v_{j+1} \in V(P)$ 
  with $\frac{\lambda}{2} \le j \le \ell-\frac{\lambda}{2}$ for which there exists an integer $i \ge 0$
  such that $\length{P[v_{\frac{\lambda}{2}}..v_j]} < (1{+}\frac{\eps}{24})^i 
  	\le \length{P[v_{\frac{\lambda}{2}}..v_{j+1}]}$.
  Analogously, let $p_{\text{right}}$ be all vertices $v_j,v_{j-1} \in V(P)$ such that
  $\length{P[v_j..v_{\ell-\frac{\lambda}{2}}]} < (1{+}\frac{\eps}{24})^i 
  	\le \length{P[v_{j-1}..v_{\ell-\frac{\lambda}{2}}]}$
  for some $i$.
  The \emph{netpoints of P with granularity $\lambda$} are all vertices in
  $\{v_1,\ldots,v_{\frac{\lambda}{2}}\} \cup p_{\text{left}} \cup p_{\text{right}} 
  	\cup \{v_{\ell-\frac{\lambda}{2}},\ldots, v_\ell\}$.
\end{definition}

The intuition is as follows.
The first and last $\lambda/2$ vertices on the path $P$ are always netpoints.
Now consider the central subpath $P[v_{\frac{\lambda}{2}} .. v_{\ell-\frac{\lambda}{2}}]$.
Let $x$ be a power of $1{+}\frac{\eps}{24}$.
Among all vertices on the subpath that have distance at most $x$
from $v_{\frac{\lambda}{2}}$ mark the one furthest away as well as its successor.
Doing this for all possible powers $x$ gives the vertices in  $p_{\text{left}}$.
The same marking scheme starting from $v_{\ell-\frac{\lambda}{2}}$ on the other end of the subpath
gives $p_{\text{right}}$.

\begin{definition}[Segments and parts]
\label{def:segments}
  Let $P$ be any path.
  A \emph{segment} of $P$ is a subpath between consecutive netpoints with granularity $\lambda$.
  For any edge $e \in E(P)$, the segment of $P$ containing $e$ is denoted $\seg_{\lambda}(e,P)$.
  Suppose $P$ is an $(2f{+}1)$-expath with granularity $\lambda$.
  Its defining subpath $P_b$ 
  consists of $2 \log_2(n) + 1$ $(2f{+}1)$-decomposable paths
  that in turn each consists of up to $2f+1$ shortest paths $P_i$.
  A \emph{part} of $P$ is a maximal subpath of any of the $P_i$ that does not cross netpoints.
\end{definition}

The first and last $\lambda/2$ edges of a path each form their own segment.
In the central subpath, there are single-edge segments, but also longer ones
whose length grows exponential in $1{+}\frac{\eps}{24}$.
However, marking the netpoints both ends of the subpath ensures that these do not become too large.
The subdivision of the segments into parts is to align these building blocks with the
structure of expaths.
This is of course only necessary for segments with more than one edge.

We use $[x,y]$ for a part that reaches from vertex $x$ to $y$.
By definition, $[x,y]$ is the unique shortest $x$-$y$-path in $G$.
There are fewer then $\lambda + 2\log_{1+\frac{\varepsilon}{24}} (n) = \lambda + O(\log(n)/\varepsilon)$
segments on any path.
Moreover, there exists a universal constant $D > 0$ such that any $(2f{+}1)$-expath with granularity $\lambda$
has at most $\lambda + D f \log_2(n)/\varepsilon$ many parts.
We sometime also use an alternative upper bound on the number of parts
that is independent of $\varepsilon > 0$.
Namely, it is $\Otilde(fn)$
since each of the $O(\log n)$ decomposable paths in $P_b$ 
consists of $O(f)$ shortest (i.e., simple) paths. 

We finally turn to the fault-tolerant trees with granularity $\lambda$.
Each such tree belongs to a pair of vertices $u,v \in V$.
In turn, each node $\nu$ in the tree $FT_{\lambda}(u,v)$ is associated with a path $P_\nu$,
which is a shortest $(2f{+}1)$-expath with granularity $\lambda$ from $u$ to $v$
in the graph $G-A_{\nu}$, where $A_{\nu} \subseteq E$ is a set of edges
(with cardinality possibly much more than $f$).
We have $A_{\nu} = \emptyset$ in the root.
The path $P_{\nu}$ is computed with the algorithm of Bilò et al.~\cite{Bilo24ApproxDSOSubquadraticTheoretiCS}
and, with additional linear scan,
annotated with its netpoints, segments, and parts.
For each part $[x,y]$, we store two pointers.
One to the closest netpoint that comes before $x$ (including $x$ itself),
and one to the closest netpoint after $y$ (including).
The further processing of $[x,y]$ depends on the number of its edges.
If it has more than $L$ edges it contains some pivot $p \in B_1$ w.h.p.,
where we use that $L \ge \lambda$.
We store (the ID of) $p$ with $[x,y]$.
Otherwise, if $[x,y]$ has at most $L$ edges, we store the original graph distance $d(x,y)$.
Computing and post-processing $P_{\nu}$ takes time 
$\Otilde(fm) + O(n) + \Otilde(fn) = \Otilde(fm)$.
Here, the third term is the alternative upper bound on the number of parts.
The stored information is $O(\lambda + f \log(n)/\varepsilon)$, a constant number of words per part.

To construct the FT-tree with granularity, we create a child node $\mu$ of $\nu$ for each segment $S$ of $P_{\nu}$.
We set the new set of edges in the child as $A_{\mu} = A_{\nu} \cup E(S)$.
We continue with this process until depth $f$.
If the vertices $u$ and $v$ become disconnected in one of the graphs $G-A_{\nu}$,
we mark this by $\nu$ being a leaf that does not store any path.
In total, the tree $FT_{\lambda}(u,v)$ has $O(\lambda^f) + O(\log n/\varepsilon)^f$ nodes
and thus be computed in time 
$\Otilde(fm) \cdot \big(O(\lambda^f) + O(\frac{\log n}{\varepsilon})^f \big)$
and stored using space 
$(\lambda + O(f \frac{\log n}{\varepsilon}))
	\cdot \big(O(\lambda^f) + O(\frac{\log n}{\varepsilon})^f \big)$.
We leave the expressions as they are for now.
They will be simplified after fixing $\lambda$ and
the number of trees.

The description above holds verbatim also for granularity $0$ instead of $\lambda$,.
We refer to this as FT-trees \emph{without granularity} etc.
\vspace*{.5em}

\noindent
\textbf{Finishing the preprocessing.}
From now on, we limit the granularity to $\lambda = \Theta(\varepsilon L)$.
Recall that w.h.p.\ we have $\Otilde(fn/L^f)$ pivots of the first type.
We construct an FT-tree with granularity $\lambda$ for every pair in $B_1 \times B_1$.
Further, an FT-tree without granularity is built for each pair a pivot of the second type
and an arbitrary vertex, that is, for $B_2 \times V$.
W.h.p.\ the former trees take space
\begin{align*}
	&|B_1|^2 \cdot \left(\lambda + O\!\left(f \frac{\log n}{\varepsilon} \right) \right)
		\left( O(\lambda^f) + O\!\left(\frac{\log n}{\varepsilon}\right)^f \right)\\
		&= \Otilde\!\left( f^2 \frac{n^2}{L^{2f}} \right) \cdot
			\left( O(\lambda^{f+1}) +  \lambda \cdot O\!\left( \frac{\log n}{\varepsilon}\right)^f
				+ O\!\left(f \lambda^f\right) \cdot O\!\left(\frac{\log n}{\varepsilon} \right) + \Otilde(f) \cdot 					O\!\left(\frac{\log n}{\varepsilon} \right)^{f+1} \right)\\
		&= \Otilde\!\left(f^2 \frac{n^2 \lambda^{f+1}}{L^{2f}} \right) 
			+ \Otilde\!\left(f^2 \frac{n^2 \lambda}{L^{2f}} \right) 
				O\!\left( \frac{\log n}{\varepsilon}\right)^f
			+ \Otilde\!\left(f^3 \frac{n^2 \lambda^f}{L^{2f}} \right) 
				O\!\left(\frac{\log n}{\varepsilon} \right) 
			+ \Otilde\!\left(f^3 \frac{n^2}{L^{2f}} \right)
				O\!\left(\frac{\log n}{\varepsilon} \right)^{f+1}\\
		&= \Otilde\!\left(f^2 \frac{n^2}{L^{f-1}} \right) 
			+ \Otilde\!\left(f^2 \frac{n^2}{L^{2f-1}} \right)\!
				O\!\left( \frac{\log n}{\varepsilon}\right)^f
			+ \Otilde\!\left(f^2 \frac{n^2}{L^{f-1}} \right)\!
				O\!\left(\frac{\log n}{\varepsilon} \right)  
			+ \Otilde\!\left(f^2 \frac{n^2}{L^{2f-1}} \right)\!
				O\!\left(\frac{\log n}{\varepsilon} \right)^{f+1}\!.
\end{align*}
We used $\lambda \le L$, $\lambda = \Omega(\varepsilon L)$, and $f \le L$
for the transformations in the last line.
For the FT-trees between pair in $B_2 \times V$ the space is
\begin{equation*}
	n |B_2| \cdot O\!\left(f \frac{\log n}{\varepsilon} \right)
		O\!\left(\frac{\log n}{\varepsilon} \right)^f
		= \Otilde\!\left(f^2 \frac{n^2}{\lambda} \right) 
			\cdot O\!\left(\frac{\log n}{\varepsilon} \right)^{f+1}
		= \Otilde\!\left(f^2 \frac{n^2}{L} \right) 
			\cdot O\!\left(\frac{\log n}{\varepsilon} \right)^{f+2}.
\end{equation*}
We assumed $f \ge 2$, which gives $n^2/L^{f-1} \le n^2/L$.
Therefore, all terms for both sets of FT-trees are dominated by 
$\Otilde(f^2 n^2/L) \cdot O(\log n/\varepsilon)^{f+2}$.

Recall that we also store the DSO for hop-short distances,
the pivot trees, and the LCA data structures.
The latter is also dominated by the FT-trees.
The total space is
\begin{equation*}
	\mathsf{S}_L + \Otilde(f\lambda^{f-1}L^f n)
		+ \Otilde\!\left(f^2 \frac{n^2}{L}\right) \cdot
			O\!\left(\frac{\log n}{\varepsilon}\right)^{f+2}.
\end{equation*}
The statement in \Cref{thm:general_DSO} follows from this via $\lambda \le L$.

A similar calculation shows that the construction time for the FT-trees is
\begin{align*}
	&\Otilde(fm) \cdot
		\left( |B_1|^2 \cdot \left( O(\lambda^f) 
			+ O\!\left(\frac{\log n}{\varepsilon}\right)^f \right)
		+ n |B_2| \cdot O\!\left(\frac{\log n}{\varepsilon}\right)^f\right)\\
		&\quad= \Otilde(fm) \cdot
			\left(\Otilde\!\left(f^2 \frac{n^2}{ L^{f}} \right)
			+ \Otilde\!\left(f^2 \frac{n^2}{L^{2f}} \right) 
				O\!\left(\frac{\log n}{\varepsilon}\right)^f 
			+ \Otilde\!\left( f\frac{n^2}{\lambda}\right) 
				O\!\left(\frac{\log n}{\varepsilon}\right)^f\right)\\
		&\quad=	\Otilde(fm) \cdot
			\Otilde\!\left(f \frac{n^2}{\lambda} \right) O\!\left(\frac{\log n}{\varepsilon}\right)^f
		 = \Otilde\!\left(f^2 \frac{mn^2}{L} \right) O\!\left(\frac{\log n}{\varepsilon}\right)^{f+1}.
\end{align*}
The last summary uses $\lambda = \Omega(\varepsilon L)$ again.
Considering the preprocessing of the input DSO for hop-short distances, the pivot trees, and LCA structures,
the resulting time is
$\mathsf{T}_L  + O(\lambda^f L^{2f}n) + \Otilde(f^2 mn^2/L) \cdot O(\log n/\varepsilon)^{f+1}
	= \mathsf{T}_L  + O(L^{3f}n) + \Otilde(f^2 mn^2/L) \cdot O(\log n/\varepsilon)^{f+1}$.

\subsection{Query Algorithm and Time}
\label{subsec:general_DSO_query}

The resulting $f$-DSO (for arbitrary distances) is queried with a triple $(s,t,F)$
with $s,t \in V$ being two vertices and $F \subseteq E$ is a set of at most $f$ edges.
The task is to approximate the replacement distance $d(s,t,F)$, 
which is the distance between $s$ and $t$ in $G-F$.
We first describe the high-level algorithm, before going into the details.
For the further description, let $u,v \in V$ be any two vertices.
Recall that we use $\widehat{d^{\le L}}(u,v,F)$
for the estimate of the $u$-$v$-distance in $G-F$ 
when restricted to paths with at most $L$ edges.
It is reported by the short-distance DSO in time $\mathsf{Q}_L$.
In this section, we further use $FT_{\lambda}(u,v,F)$ for the result we get
from querying the FT-tree with granularity $FT_{\lambda}(u,v)$ with the set $F$.
Analogously, we use $FT_{0}(u,v,F)$ for the value of an FT-tree without granularity.
We will see later that the query time for FT-trees with or without granularity,
which we respectively denote by $\mathsf{Q}_{FT}^{(\lambda)}$ and $\mathsf{Q}_{FT}^{(0)}$,
are asymptotically larger than $\mathsf{Q}_L$ and than $f^2$.
We use this fact for simplifications in the $O$-notation below.

When given a query $(s,t,F)$,
an auxiliary weighted complete graph $H$ on the vertex set $V(H) = \{s,t\} \cup V(F)$ is constructed,
where $V(F)$ is the set of endpoints of edges in $F$.
For each $\{u,v\} \in E(H) = \binom{V(H)}{2}$, the query algorithm computes a weight $w_H(u,v)$.
The eventual answer of our DSO is the distance $d_H(s,t)$ in $H$.
\vspace*{.5em}

\noindent
\textbf{Computing the weights $w_H(u,v)$.}
The query algorithm computes $w_H(u,v)$ with one of two methods,
depending on whether there are pivots of the first type available 
in the vicinity of $u$ and $v$ in $G-F$.
The respective pivot trees are used for this decision.
It is checked, whether the stored path $P$ from $u$ to the closest pivot $p_{G}(u) \in B_1$ contains
any edge of $F$.
If not, then $p_{G}(u) \in \ball_{G-F}(u, \lambda/2)$
and it is also the closest pivot of the first type in $G{-}F$.
Otherwise, we recurse to an arbitrary child node corresponding to some edge in $E(P) \cap F$.
After at most $f$ levels,
the algorithm either learns a pivot from $B_1$ that is close to $u$ in $G{-}F$,
or it arrives in a leave node corresponding to some $F' \subseteq F$ 
such that $\ball_{G-F'}(u, \lambda/2)$ is too sparse to intersect $B_2$.
In the former case, let $p_u \in B_1$ be the closest pivot.
In the latter case, the query algorithm gets access to the pivots of the second type
in $\ball_{G-F'}(u, \lambda/2) \cap B_2$.
Note that we have $\ball_{G-F'}(u, \lambda/2) \cap B_2 \supseteq \ball_{G-F}(u, \lambda/2) \cap B_2$.
That means, all pivots of the second type that are close to $u$ are recovered
(and maybe some more). 
Each check for the emptyness of  $E(P) \cap F$ takes time $O(f)$,
so the whole time spend in the pivot tree is $O(f^2)$.
The query algorithm does the same for the other vertex $v$.
We say the computation of the weight $w_H(u,v)$ is in the \emph{dense ball case}
if two pivots $p_u,p_v \in B_1$ are obtained.
Otherwise, we say it is in the \emph{sparse ball case}.

First, assume we are in the dense ball case.
We stored an FT-tree with granularity  $\lambda$ for the pair $(p_u,p_v) \in B^2_2$.
Recall that we use $FT_\lambda(p_u,p_v,F)$ for the value returned by that FT-tree
when queried with the failure set $F$.
The weight is defined as
\begin{equation}
\label{eq:dense_balls}
	w_{H}(u,v) = \min\!\left( \widehat{d^{\le L}}(u,v,F),\  FT_\lambda(p_u,p_v,F) + \lambda \right).
\end{equation}
It is computable in time $O(f^2 + \mathsf{Q}_L + \mathsf{Q}_{FT}^{(\lambda)}) = O(\mathsf{Q}_{FT}^{(\lambda)})$.

In the sparse ball case, there exists a vertex $x \in \{u,v\}$
such that the algorithm recovered $\ball_{G-F'}(u, \lambda/2) \cap B_2$.
Let $y \in \{u,v\}{\setminus}\{x\}$ be the other endpoint.
The weight is
\begin{equation}
\label{eq:sparse_balls}
	w_{H}(u,v) = \min\bigg( \widehat{d^{\le L}}(u,v,F), 
		\min_{p \in \ball_{G-F'}(u, \lambda/2) \cap B_2} 
			\left(\widehat{d^{\le L}}(x,p,F) + FT_0(p,y,F)\right)\bigg).
\end{equation}
That means, the query algorithm minimizes the sum $\widehat{d^{\le L}}(x,p,F) + FT_0(p,y,F)$
over all pivots $p$ of the second type in the sparse ball around $x$,
it then compares the result with the estimate $\widehat{d^{\le L}}(u,v,F)$ of the input DSO
for hop-short distances.
The edge weight is the smaller of the two values.
Since w.h.p.\ $|\ball_{G-F'}(u, \lambda/2) \cap B_2| = O(fL^f/\lambda)$,
the weight can be obtained in time 
$\Otilde(f^2 + \mathsf{Q}_L + f\frac{L^{f}}{\lambda} \nwspace ( \mathsf{Q}_L + \mathsf{Q}_{FT}^{(0)})) = 
	O(f\frac{L^{f}}{\lambda} \nwspace \mathsf{Q}_{FT}^{(0)})$.

The auxiliary graph $H$ has $O(f^2)$ edges, so obtaining all weights takes 
$\Otilde(f^2 ( \mathsf{Q}_{FT}^{(\lambda)}+ f \frac{L^{f}}{\lambda} \mathsf{Q}_{FT}^{(0)}))$ time.
This is also the order of the whole query time of our general DSO
since the $\Otilde(f^2)$ time to compute the distance $d_H(s,t)$ with Dijkstra's algorithm
is immaterial.
\vspace*{.5em}

\noindent
\textbf{Querying the FT-trees.}
So far, we used the query procedure of the FT-trees as a black box.
We now describe how the value $FT_{\lambda}(u,v,F)$ is computed,
the case without granularity follows immediately by setting $\lambda = 0$.
The overall structure of the query algorithm is similar to \cite{Bilo24ApproxDSOSubquadraticTheoretiCS},
but we generalize it to incorporate the stretch $(\alpha,\beta)$ of the
the underlying DSO for hop-short distances.
Recall that $D$ is the universal constant such that any 
$(2f{+}1)$-expath with granularity $\lambda$
has at most $\lambda + D f \log_2(n)/\varepsilon$ many parts.

\begin{lemma}
\label{lem:weaker_guarantee}
	Let $\nu$ be a node of the FT-tree $FT_{\lambda}(u,v)$ in which an $(2f{+}1)$-expath $P_{\nu}$
	from $u$ to $v$ is stored.
	There is an algorithm that either certifies that
	$d(u,v,F) \le \alpha \cdot |P_\nu| 
		+ D f \frac{\log_2(n)}{\varepsilon} \cdot \beta$
	or finds the segment $\seg_{\lambda}(e, P_{\nu})$ for some edge $e \in F \cap E(P_{\nu})$.
	The algorithm runs in time $O(f\lambda + f \frac{\log n}{\varepsilon} (\mathsf{Q}_L + f))$.
\end{lemma}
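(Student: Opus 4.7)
The plan is to traverse $P_\nu$ piece by piece, exploiting the annotations stored at preprocessing with its netpoints and parts. Each piece either exposes a failed edge via a cheap local test, in which case I return its segment, or yields a controlled overestimate of the replacement distance across it, to be summed into the final bound. Since the first and last $\lambda/2$ vertices of $P_\nu$ are all netpoints, the $O(\lambda)$ edges touching them form singleton segments; for each such edge $e$ I iterate over $F$ in time $O(f)$ and return $\seg_\lambda(e,P_\nu) = \{e\}$ on a hit. This boundary phase costs $O(f\lambda)$ in total.

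The interesting work happens on the at most $Df\log_2(n)/\varepsilon$ middle parts of $P_\nu$, each of which either stores the original graph distance $d(x,y)$ or a pivot $p$ lying on the part. For a part with stored distance (hence hop length at most $L$), I invoke the input DSO in time $\mathsf{Q}_L$ to obtain $\widehat{d^{\le L}}(x,y,F)$. If the result exceeds $\alpha \cdot d(x,y) + \beta$, I declare a failure and return the containing segment recovered via the stored pointers to the bracketing netpoints. Correctness follows because, absent any failed edge on $[x,y]$, the path $[x,y]$ itself still witnesses $d(x,y,F) = d(x,y) \le L$, so the $(\alpha,\beta)$-stretch guarantee would force $\widehat{d^{\le L}}(x,y,F) \le \alpha \cdot d(x,y) + \beta$. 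For a part with a stored pivot $p$ (hence more than $L$ hops), where the DSO guarantee no longer applies, I exploit that $[x,y] = [x,p] \circ [p,y]$ is the unique shortest $x$-$y$-path in $G$: for each $e = (a,b) \in F$ I use the LCA structure of the shortest-path tree $T_p$ to test in $O(1)$ whether $a$ lies on $[x,y]$ (equivalently, $\mathrm{LCA}_{T_p}(a,x) = a$ or $\mathrm{LCA}_{T_p}(a,y) = a$), whether $b$ does, and whether one is the tree-parent of the other, returning the segment on a positive hit. Short middle parts cost $O(\mathsf{Q}_L + f)$ and long ones $O(f)$ each, meeting the claimed runtime.

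If no segment is ever returned, I certify the inequality by concatenating the implicit witnesses into a $u$-$v$-walk in $G{-}F$. Boundary edges survive and contribute $1$ each; every long middle part $[x,y]$ is intact in $G{-}F$ and contributes exactly $d(x,y)$; every short middle part contributes at most $\widehat{d^{\le L}}(x,y,F) \le \alpha \cdot d(x,y) + \beta$, which in turn upper-bounds $d(x,y,F)$. Telescoping, using $\alpha \ge 1$ to absorb the boundary and long-part terms into $\alpha \sum |[x,y]|$, and bounding the number of short middle parts by $Df\log_2(n)/\varepsilon$, gives $d(u,v,F) \le \alpha \cdot |P_\nu| + D f \log_2(n)/\varepsilon \cdot \beta$, as required.

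The main obstacle I anticipate is the uniform treatment of long parts: the input DSO is useless when $d(x,y,F)$ may exceed $L$, and naively charging a $\beta$-term to each part would blow up the additive stretch. The LCA-based test sidesteps both issues by detecting failed edges exactly on long parts, so they contribute their exact length with no additive error, and only short middle parts incur $\beta$-terms in the final sum.
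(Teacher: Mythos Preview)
Your proposal is correct and follows essentially the same argument as the paper: probe each part of $P_\nu$, using direct membership tests on single-edge (boundary) parts, the $(\alpha,\beta)$-stretch DSO on parts with at most $L$ edges, and the LCA structure of $T_p$ on long parts; if no failing segment is found, concatenate the surviving parts with replacement paths on the short middle parts to obtain a $u$-$v$-walk in $G{-}F$ whose length is bounded by $\alpha|P_\nu| + Df\log_2(n)/\varepsilon\cdot\beta$. The only cosmetic difference is that the paper groups parts as single-edge/long/remaining rather than boundary/middle, but the bounds and the telescoping step are identical.
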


\begin{proof}
	The algorithm first probes each parts of $P_\nu$,
	whether it can certify that it contains an edge from $F$.
	Recall that any part $[x,y]$ is the unique shortest path 
	from $x$ to $y$ in the original graph $G$.
	
	If the part consists of only a single edge, it is trivial to check whether
	$\{x,y\} \in F$ in time $O(f)$.
	In particular, this is the case for the first and last $\lambda/2$ edges of $P_{\nu}$.
	If $[x,y]$ has at most $L$ edges, we stored the original distance $d(x,y)$ with it.
	The algorithms queries the underlying DSO for hop-short distances to get the estimate
	$\widehat{d^{\le L}}(x,y,F)$.
	The oracle has stretch $(\alpha,\beta)$.
	If $\widehat{d^{\le L}}(x,y,F) > \alpha \cdot d(x,y) + \beta$, 
	then the part must contain a failing edge.
	
	Otherwise, $[x,y]$ has more than $L$ edges.
	We stored a pivot $p \in B_2$ of the second type that lies on $[x,y]$.
	By the uniqueness of paths $[x,y]$ is the concatenation of the unique shortest path from $x$ to $p$
	in $G$ and the one from $p$ to $y$.
	Using the LCA data structure for the shortest-path tree $T_p$ rooted at $p$,
	the algorithm can check in time $O(f)$ 
	whether any edge $e \in F$ also lies on $[x,y]$.
	
	Let $\mathcal{P}$ be the collection of all parts of $P_\nu$ that are only a single edge,
	$\mathcal{L}$ the ones with more than $L$ edges,
	and $\mathcal{R}$ the remaining parts. 
	The path $P_{\nu}$ is the concatenation of the parts in 
	$\mathcal{P} \cup \mathcal{L} \cup \mathcal{R}$ (in a suitable order).
	Let instead $P$ be the path in which the parts in $\mathcal{P} \cup \mathcal{L}$ remain the same,
	but each part $[x,y] \in \mathcal{R}$ is exchanged by the replacement path $P(x,y,F)$ in $G-F$.
	
	Checking all the at most $\lambda + O(\log(n)/\varepsilon)$ many parts takes total time 
	$O(f\lambda + f (\log n) (\mathsf{Q}_L + f)/\varepsilon)$.
	If a part $[x,y]$ is found to contain an edge from $F$,
	the corresponding segment is given by the pointer to the closest netpoints before $x$
	and after $y$.
	Note that we do not need to know the exact failing edge
	since the whole part lies in the same segment.
	Otherwise, the algorithm verified that the path $P$ defined above lies in $G{-}F$ and has length
	\begin{multline*}
		|P| = \sum_{[x,y] \in \mathcal{P} \cup \mathcal{L}} d(x,y) + 
			\sum_{[x,y] \in \mathcal{R}} d(x,y,F)
			\le \sum_{[x,y] \in \mathcal{P} \cup \mathcal{L}} d(x,y) + 
				\sum_{[x,y] \in \mathcal{R}} (\alpha \cdot d(x,y) + \beta)\\
			\le \left( \sum_{[x,y] \in \mathcal{P} \cup \mathcal{L} \cup \mathcal{R}} 
				\alpha \cdot d(x,y) \right) + |\mathcal{R}| \nwspace \beta
			\le \alpha \nwspace |P_{\nu}| 
				+ D f \frac{\log_2(n)}{\varepsilon} \cdot \beta. \qedhere
	\end{multline*}
\end{proof}

After this setup, we describe the query algorithm of FT-trees.
When queried with the failure set $F$, $FT_{\lambda}(u,v)$ is traversed starting with the root.
In each node $\nu$, the algorithm first checks whether $u$ and $v$ are connected,
that is, whether a path $P_{\nu}$ is present.
If there is no path, the query returns with the answer $+\infty$.
Otherwise, the algorithm from \Cref{lem:weaker_guarantee} is run with set $F$.
If a segment with a failing edge is found, the traversal recurses on the child node of $\nu$
that corresponds to that segment;
otherwise, the value 
$\alpha \nwspace |P_\nu| + D f \frac{\log_2(n)}{\varepsilon} \nwspace \beta$
is reported.
If a leaf $\nu^*$ of $FT_{\lambda}(u,v)$ is reached that way in which a path $P_{\nu*}$ is stored,
the output is $|P_{\nu^*}|$.

The height of the tree $FT_{\lambda}(u,v)$ is $f$, so the total query time is
$\mathsf{Q}_{FT}^{(\lambda)} = O(f^2\lambda + f^2\, \frac{\log n}{\varepsilon} (\mathsf{Q}_L + f))$,
or $\mathsf{Q}_{FT}^{(0)} = O(f^2 \, \frac{\log n}{\varepsilon} (\mathsf{Q}_L + f))$
without granulartiy.
From $\lambda = \Theta(\varepsilon L)$ and $f \ge 2$, we get that
$f\frac{L^{f}}{\lambda} \mathsf{Q}_{FT}^{(0)}$ dominates $\mathsf{Q}_{FT}^{(\lambda)}$,
Therefore, the total query time of our (general) DSO is

\begin{multline*}
	\Otilde\!\left(f^2 \left( \mathsf{Q}_{FT}^{(\lambda)}
		+ f \frac{L^{f}}{\lambda} \mathsf{Q}_{FT}^{(0)} \right)\! \right)
	= \Otilde\!\left(f^3 \frac{L^{f}}{\lambda} \mathsf{Q}_{FT}^{(0)} \right)\\
	= \Otilde\!\left(f^5 \frac{L^{f}}{\lambda} \frac{\log n}{\varepsilon} (\mathsf{Q}_L + f) \right)
	= \Otilde\!\left(f^5 \frac{L^{f-1}}{\varepsilon^2} (\mathsf{Q}_L + f) \right).
\end{multline*}

Intuitively, a parent-child traversal during the query of an FT-tree 
simulates the failure of the whole segment $\seg_{\lambda}(e, P_{\lambda})$
instead of only the single edge $e \in F$.
Storing information only for the segments, or more accurately for the parts,
reduces the space needed to store an FT-tree but in turn makes the query answer less precise.
Let $\nu$ be the node that produces the output.
The set $A_{\nu}$ of edges that are absent in $\nu$ may be much larger than $F$.
However, due to definition of segments, the elements in $A_{\nu}$ are heavily clustered.
Therefore, if a path is \emph{decomposable and far away from} $F$ 
(see \Cref{def:decomposable,def:trapezoid}) 
it even avoids all of $A_{\nu}$.
More precisely, the FT-trees with their associated query algorithm
have the following key property \cite[Lemmas~5.9 \& 6.6]{Bilo24ApproxDSOSubquadraticTheoretiCS}

\begin{lemma}[\cite{Bilo24ApproxDSOSubquadraticTheoretiCS}]
\label{lem:key_property}
	Let $u,v \in V$ be two vertices and $F \subseteq E$ a set of at most $f$ edges,
	and $P$ be the shortest $(2f{+}1)$-decomposable $u$-$v$-path
	that is far away from $F$.
	\begin{enumerate}[(i)]
		\item Let $p_u \in \ball_{G-F}(u,\lambda/2) \cap B_1$ and 
			$p_v \in \ball_{G-F}(v,\lambda/2) \cap B_1$, and $\nu$ be the output-node
			of $FT_{\lambda}(p_u,p_v)$ when queried with $F$.
			Then, the path $P(p_u,u,F) \circ P \circ P(v,p_v,F)$ exists in  $G-A_{\nu}$.
		\item Let $\nu'$ be the output-node of $FT_0(u,v)$ when queried with $F$.
			Then, the path $P$ exists in $G-A_{\nu'}$.
	\end{enumerate}
\end{lemma}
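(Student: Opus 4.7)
The plan is to prove both parts simultaneously by induction on the depth of the output node in the respective FT-tree, showing that a fixed target path $Q$ persists in $G - A_\mu$ for every node $\mu$ visited along the query. For part~(i) set $Q := P(p_u,u,F) \circ P \circ P(v,p_v,F)$; for part~(ii) set $Q := P$. Since $P \subseteq G{-}F$ by hypothesis and, in part~(i), the two flanking replacement paths lie in $G{-}F$ using at most $\lambda/2$ edges each (this is where $p_u \in \ball_{G-F}(u,\lambda/2)$, $p_v \in \ball_{G-F}(v,\lambda/2)$, and the unweightedness of $G$ enter), the whole path $Q$ is contained in $G{-}F$. Applied to the output node, the induction yields $E(Q) \cap A_\nu = \emptyset$ (resp.\ $E(Q) \cap A_{\nu'} = \emptyset$), which is exactly the conclusion of the lemma.

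The base case is immediate: at the root, $A = \emptyset$. For the inductive step, assume $Q \subseteq G - A_\mu$ and let $\mu'$ be the child reached next. By the query procedure underlying \Cref{lem:weaker_guarantee}, this transition happens only when the algorithm has certified a failing edge $e \in F \cap E(P_\mu)$, and in that case $A_{\mu'} = A_\mu \cup E(\seg_\lambda(e,P_\mu))$. The induction therefore reduces to the single non-intersection assertion
\begin{equation*}
    E(S) \cap E(Q) = \emptyset, \quad \text{where } S := \seg_\lambda(e, P_\mu).
\end{equation*}

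The main technical obstacle is exactly this non-intersection claim; we argue by contradiction. Suppose $y \in V(S) \cap V(Q)$ and let $z$ be the endpoint of $e$ reachable from $y$ along $S$ without crossing $e$, so that $d_{G-F}(y,z) \le |S|$. If $S$ lies in one of the outermost $\lambda/2$-blocks of $P_\mu$, then by \Cref{def:netpoints} $S$ is a single edge, so $E(S) = \{e\} \subseteq F$ and the claim is immediate from $E(Q) \subseteq E(G{-}F)$. Otherwise $S$ belongs to the exponential middle portion and corresponds to some level $i$ with $|S| \le \frac{\varepsilon}{24}(1{+}\frac{\varepsilon}{24})^i$. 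We then split on where $y$ sits on $Q$. If $y$ lies on one of the flanking replacement paths (relevant only for part~(i)), then $y$ is within $\lambda/2$ hops of $p_u$ or $p_v$ in $G{-}F$, while the netpoints bounding $S$ are more than $\lambda/2$ edges into $P_\mu$ from both endpoints, yielding a contradiction with $Q \subseteq G - A_\mu$. If instead $y \in V(P)$, we combine the segment-length bound with the position bound $\min(|P_\mu[p_u..y]|, |P_\mu[y..p_v]|) \ge (1{+}\frac{\varepsilon}{24})^i$ coming from \Cref{def:netpoints}, transferred from $P_\mu$ to $P$ via the triangle inequality and the fact that $u, p_u$ and $v, p_v$ are at hop-distance at most $\lambda/2$ in $G{-}F$. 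For $\lambda = \Theta(\varepsilon L)$ this gives $d_{G-F}(y,z) \le |S| < \frac{\varepsilon}{6}\min(|P[u..y]|,|P[y..v]|)$, hence $z \in \tr^{\varepsilon/6}_{G-F}(P)$; since $z \in V(F)$, this contradicts the far-away hypothesis on $P$ and closes the induction, proving both parts.
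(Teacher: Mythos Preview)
Your high-level plan---induction on the depth of the visited node, with the inductive step reducing to $E(S)\cap E(Q)=\emptyset$ for the segment $S=\seg_\lambda(e,P_\mu)$---is exactly the structure behind the result, and the single-edge case for the outer $\lambda/2$ blocks is handled correctly. Note, however, that the paper does \emph{not} prove \Cref{lem:key_property}; it quotes it from \cite{Bilo24ApproxDSOSubquadraticTheoretiCS} (Lemmas~5.9 and~6.6 there), gives only the informal picture, and explicitly warns that ``making this intuition rigorous requires some technical machinery'' and that ``this is the reason to use expaths $P_\nu$ (instead of mere decomposable paths).''

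Your write-up skips precisely this machinery. In the flanking-path case you say the contradiction follows ``from $Q\subseteq G-A_\mu$''; but having $y$ within $\lambda/2$ hops of $p_u$ in $G{-}F$ while $y$ also sits in the exponential middle of $P_\mu$ is not in itself contradictory. The contradiction comes from \emph{splicing}: $P(p_u,u,F)[p_u..y]\circ P_\mu[y..p_v]$ is again a $(2f{+}1)$-expath with granularity $\lambda$ in $G-A_\mu$ (the prefix has at most $\lambda$ edges and a suffix of an expath is still an expath after padding with empty constituents), strictly shorter than $P_\mu$, contradicting the minimality of $P_\mu$. Your sentence does not supply this.

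The main case $y\in V(P)$ has the same gap. You need $|S|\le \tfrac{\varepsilon}{6}\min(|P[u..y]|,|P[y..v]|)$, whereas \Cref{def:netpoints} only gives $|S|\le \tfrac{\varepsilon}{24}\min(|P_\mu[v_{\lambda/2}..y]|,|P_\mu[y..v_{\ell-\lambda/2}]|)$. Writing ``transferred from $P_\mu$ to $P$ via the triangle inequality'' is not enough: the triangle inequality relates $d_{G-F}(p_u,y)$ to $d_{G-F}(u,y)$, but $|P_\mu[p_u..y]|$ can be much larger than $d_{G-F}(p_u,y)$ since $P_\mu$ is not a shortest path. The actual argument again uses that $P_\mu$ is the \emph{shortest expath} in $G-A_\mu$, splicing a prefix of $Q$ onto a suffix of $P_\mu$ (or vice versa) to bound $|P_\mu[p_u..y]|$ against $|P[u..y]|+\lambda$; the constants $\tfrac{\varepsilon}{24}$ versus $\tfrac{\varepsilon}{6}$ and the choice $\lambda=\Theta(\varepsilon L)$ are calibrated to absorb the slack from this comparison. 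A secondary issue: you bound $d_{G-F}(y,z)\le |S|$ by walking along $S$, but $S$ may contain further edges of $F$ besides $e$, so this step also needs care. In short, your outline is right, but the two places you wave at are exactly where the expath structure does the work, and the paper itself defers those details to \cite{Bilo24ApproxDSOSubquadraticTheoretiCS}.
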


Observe that \Cref{lem:key_property}~\emph{(i)} applies to the tree $FT_{\lambda}(p_u,p_v)$
although its construction and query algorithm is completely independent of the endpoints $u$ and $v$
of the path $P$.
The only relation is that $p_u$ and $u$ (respectively, $p_v$ and $v$)
are connected by a path on at most $\lambda/2$ edges in $G-F$.
The granularity $\lambda$ of $FT_{\lambda}(p_u,p_v)$ means
that the first and last $\lambda/2$ edges of any path $P_{\nu}$
stored in a node $\nu$ of the FT-tree form their own segment.
Failing a segment in this pre-/suffix is equivalent to failing a single edge.
Intuitively, this cannot affect paths
that lie within $\ball_{G-F}(u,\lambda/2)$ (respectively, in $\ball_{G-F}(v,\lambda/2)$).
Conversely, for the exponentially increasing segments in the middle of $P_{\nu}$,
the safety area that ensures the survival of the path $P$
is given by the trapezoid $\tr_{G-F}^{\varepsilon/6}(P)$.
If this area is free of failures, meaning that $P$ is far away from $F$,
no segment $\seg_{\lambda}(e, P_{\lambda})$ can reach $P$.
Making this intuition rigorous, however, requires some technical machinery.
This is the reason to use expaths $P_{\nu}$ (instead of mere decomposable paths) and
for the constants in the definition of trapezoids (\Cref{def:trapezoid})
and netpoints (\Cref{def:netpoints}) differing by a multiplicative factor $4$.
The details are discussed in \cite{Bilo24ApproxDSOSubquadraticTheoretiCS}.

\subsection{Stretch}
\label{subsec:general_DSO_stretch}

We are left to show that our DSO has stretch $(\alpha (1{+}\varepsilon), \beta)$,
where $(\alpha,\beta)$ is the stretch of the input DSO for hop-short distances
and $\varepsilon > 0$ is the approximation parameter chosen at the beginning of the preprocessing.
The fine-tuned analysis in this section is the main improvement over \cite{Bilo24ApproxDSOSubquadraticTheoretiCS}.
It allows us to generalize $f$-DSOs to the setting of a positive additive stretch $\beta > 0$
and multiplicative stretch $\alpha \neq 3$.

We first prove bounds on the values 
$\widehat{d^{\le L}}(x,y,F), FT_{0}(x,y,F)$, and $FT_{\lambda}(x,y,F)$ 
that are used in the computation of the weight $w_H(u,v)$.
We then combine all the intermediate bounds in \Cref{lem:stretch_general_DSO} to prove the stretch.
We start by showing that none of the different values underestimate the true replacement distance.

\begin{lemma}
\label{lem:never_underestimate}
	Let $u,v,x,y \in V$ be vertices.
	Then, it holds that
	\begin{enumerate}[(i)]
		\item $\widehat{d^{\le L}}(x,y,F), FT_{0}(x,y,F), FT_{\lambda}(x,y,F) \ge d(x,y,F)$;
		\item $w_{H}(u,v) \ge d(u,v,F)$.
	\end{enumerate}	
\end{lemma}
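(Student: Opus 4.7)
The plan is to treat the two parts in order, reducing everything to (a) the defining guarantee of the input DSO$^{\leq L}$ that $\widehat{d^{\le L}}(x,y,F) \ge d(x,y,F)$ for every triple $(x,y,F)$, (b) the upper-bound half of \Cref{lem:weaker_guarantee}, and (c) the triangle inequality in $G{-}F$.

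\textbf{Part (i).} The bound $\widehat{d^{\le L}}(x,y,F) \ge d(x,y,F)$ is exactly the definition of an $f$-DSO$^{\leq L}$ recalled at the start of \Cref{sec:framework_DSO}. For $FT_{\lambda}(x,y,F)$ (the case $FT_0$ being identical with $\lambda = 0$), I inspect the three possible ways the query procedure from \Cref{subsec:general_DSO_query} terminates: (1) if some visited node $\nu$ has no stored path, the output is $+\infty$ and the claim is trivial; (2) if \Cref{lem:weaker_guarantee} at some node $\nu$ does not find a failing segment, then the reported value is $\alpha \, |P_\nu| + Df\log_2(n)/\varepsilon \cdot \beta$, which by that very lemma upper-bounds $d(x,y,F)$; (3) if the traversal reaches a leaf $\nu^\ast$ and returns $|P_{\nu^\ast}|$, I argue that $F \subseteq A_{\nu^\ast}$. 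Here the key observation is that at each recursive step the added segment contains a \emph{new} edge of $F$: if $e \in F$ had already been placed in $A_\nu$ at some earlier level, then the current $P_\nu$, lying in $G - A_\nu$, could not contain $e$ and would not have been chosen as the failing segment. Since the tree has depth $f$ and $|F| \le f$, by the time we hit a depth-$f$ leaf we have added every edge of $F$ to $A_{\nu^\ast}$, so $P_{\nu^\ast}$ avoids $F$ and $|P_{\nu^\ast}| \ge d(x,y,F)$.

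\textbf{Part (ii).} In both ball cases, the first term of the minimum defining $w_H(u,v)$ is $\widehat{d^{\le L}}(u,v,F)$, which already dominates $d(u,v,F)$ by part~(i). In the \emph{sparse} ball case, the second term is $\widehat{d^{\le L}}(x,p,F) + FT_0(p,y,F) \ge d(x,p,F) + d(p,y,F) \ge d(x,y,F) = d(u,v,F)$ by part~(i) and the triangle inequality. In the \emph{dense} ball case, the second term is $FT_{\lambda}(p_u,p_v,F) + \lambda \ge d(p_u,p_v,F) + \lambda$ by part~(i); combined with the triangle inequality
\[
d(u,v,F) \le d(u,p_u,F) + d(p_u,p_v,F) + d(p_v,v,F),
\]
it suffices to verify $d(u,p_u,F), d(v,p_v,F) \le \lambda/2$. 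This is where I check the pivot-tree construction from \Cref{subsec:general_DSO_preproc}: the traversal proceeds to a child only when the currently stored $u$-to-closest-pivot path contains an edge of $F$, so the path $P$ eventually returned at a node with $F' \subseteq F$ is entirely disjoint from $F$ and therefore lives in $G{-}F$ with the same length $\le \lambda/2$; hence $p_u \in \ball_{G-F}(u,\lambda/2)$, and symmetrically for $p_v$.

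\textbf{Main obstacle.} The only non-routine step is the leaf analysis in part (i)(3), namely verifying $F \subseteq A_{\nu^\ast}$ at depth $f$. The subtlety is that $|A_\nu|$ itself can be far larger than $f$, so the containment does not follow by cardinality but must be extracted from the invariant ``$P_\nu \subseteq G - A_\nu$'' plus the fact that failing edges, once added, stay absent from all descendant paths. Everything else reduces to the definitions and the triangle inequality.
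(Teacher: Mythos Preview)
Your proof is correct and follows essentially the same approach as the paper's own proof. You are more explicit in two places---the leaf case of part~(i), where you spell out why each recursion step picks up a \emph{new} edge of $F$ (the paper just asserts ``the tree traversal thus found a segment for each failing edge in $F$''), and the dense-ball case of part~(ii), where you verify $p_u \in \ball_{G-F}(u,\lambda/2)$ from the pivot-tree mechanics rather than taking it as given from the query-algorithm description---but the structure and the ideas are identical.
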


\begin{proof}
	We first prove \emph{(i)}.
	This is immediate for $\widehat{d^{\le L}}(x,y,F)$ as the estimate for the 
	hop-bounded replacement distance is always at least the true replacement $d(x,y,F)$.
	
	If the FT-tree $FT_{\lambda}(x,y)$, when queried with set $F$, answers $+\infty$,
	this is clearly at least the replacement distance.
	Now suppose the query procedure stops in some node $\nu$ and the value 
	$\alpha \cdot |P_\nu| + D f \frac{\log_2(n)}{\varepsilon} \cdot \beta$
	is returned.
	Then, the algorithm in \Cref{lem:weaker_guarantee} certifies
	that this value is at least $d(x,y,F)$.
	Otherwise, the query reaches a leaf
	$\nu^*$ at depth $f$.
	The tree traversal thus found a segment for each failing edge in $F$.
	The path $P_{\nu^*}$ from $x$ to $y$ is thus disjoint from $F$,
	giving $FT_{\lambda}(x,y,F) = |P_{\nu^*}| \ge d(x,y,F)$.
	The same argument also holds for $FT_0(x,y,F)$.
	
	We now turn to \emph{(ii)}.
	First, assume that $w_H(u,v)$ is computed by \Cref{eq:dense_balls}
	using the pivots $p_u \in \ball_{G-F}(u,\lambda/2) \cap B_1$
	and $p_v \in \ball_{G-F}(v,\lambda/2) \cap B_1$.
	We need to show that $FT_{\lambda}(p_u,p_v,F) + \lambda \ge d(u,v,F)$.		
	If $p_u$ and $p_v$ are disconnected in $G{-}F$,
	we have $FT_{\lambda}(p_u,p_v,F) = +\infty$ by \emph{(i)} and we are done.
	Otherwise, there exists some replacement path $P(p_u,p_v,F)$.
	Let $P_{p_u}$ be the shortest path from $u$ to $p_u$ in $G{-}F$,
	and $P_{p_v}$ the shortest path from $p_v$ to $v$ in $G{-}F$.
	Note that both paths have at most $\lambda/2$ edges by the choice of $p_u,p_v$.
	In summary, $P_{p_u} \circ P(p_u,p_v,F) \circ P_{p_v}$ is some path from $u$ to $v$ in $G{-}F$.
	From Part~\emph{(i)}, we get that
	\begin{equation*}
		FT_{\lambda}(p_u,p_v,F) + \lambda \ge 
		\lambda/2 + d(p_u,p_v,F) + \lambda/2 \ge |P_{p_u}| + d(p_u,p_v,F) + |P_{p_v}| \ge d(u,v,F).
	\end{equation*}
	
	Finally, if $w_H(u,v)$ is computed via \Cref{eq:sparse_balls},
	then it equals $\widehat{d^{\le L}}(u,v,F)$ or
	$\widehat{d^{\le L}}(x,p,F) + FT_0(p,y,F)$ 
	for some $p \in B_2$ and $y \in \{u,v\}{\setminus}\{x\}$.
	The former is not smaller than $d(u,v,F)$ again using Part~\emph{(i)};
	the latter is at least $d(x,p,F) + d(p,y,F) \ge d(u,v,F)$
	by the triangle inequality.
\end{proof}

\Cref{lem:key_property} shows that certain structured paths survive
until the output node of an FT-tree, even though full segments are failed
in each parent-child transition.
This key property lends some importance to the class of $(2f{+}1)$-decomposable paths
that are far away from all failures in $F$, that is, 
whose trapezoid $\tr^{\varepsilon/6}_{G-F}(P)$
does not contain any endpoint of an edge in $F$ (\Cref{def:trapezoid}).
We use it to derive upper bounds on the return values
$FT_0(u,v,F)$ and $FT_{\lambda}(u,v,F)$.
For this, we need to expand the definition of the decomposable distance $d^{(2f+1)}(u,v,F)$.
We define $d^{(2f+1)}_{\varepsilon/6}(u,v,F)$
to be the length of the shortest $u$-$v$-path that is $(2f{+}1)$-decomposable
\emph{and far away from} $F$; 
or $d^{(2f+1)}_{\varepsilon/9}(u,v,F) = +\infty$ if no such path exists.
Note that $d^{(2f+1)}_{\varepsilon/6}(u,v,F) \ge d^{(2f+1)}(u,v,F) \ge d(u,v,F)$.
Further, recall that $D > 0$ is the universal constant used in the output value of FT-trees.

\begin{lemma}
\label{lem:FT_upper_bound}
	Let $u,v \in V$ be two vertices and $F \subseteq E$ a set of $|F| \le f$ edges.
	\begin{enumerate}[(i)]
		\item Let $p_u \in \ball_{G-F}(u, \lambda/2) \cap B_1$
			and $p_v \in \ball_{G-F}(v, \lambda/2) \cap B_1$ be pivots of the first type.
			Then, $FT_{\lambda}(p_u,p_v,F) + \lambda \le \alpha 
				\cdot d^{(2f+1)}_{\varepsilon/6}(u,v,F) + \alpha \lambda 
				+ D f \frac{\log_2(n)}{\varepsilon} \cdot \beta$.
		\item It holds that $FT_{0}(u,v,F) \le \alpha \cdot d^{(2f+1)}_{\varepsilon/6}(u,v,F) 
			+ D f \frac{\log_2(n)}{\varepsilon} \cdot \beta$.
	\end{enumerate}
\end{lemma}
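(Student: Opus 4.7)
The plan is to lift the structural guarantee of Lemma~\ref{lem:key_property} through the FT-tree query procedure: the shortest expath stored at the output node can be upper-bounded by a concretely constructed surviving path whose length, in turn, is controlled by $d^{(2f+1)}_{\varepsilon/6}(u,v,F)$. If $d^{(2f+1)}_{\varepsilon/6}(u,v,F) = +\infty$, both inequalities are trivial, so assume otherwise and let $P$ be a shortest $(2f{+}1)$-decomposable $u$-$v$-path in $G-F$ that is far away from $F$, so that $|P| = d^{(2f+1)}_{\varepsilon/6}(u,v,F)$.

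For part~(ii), I would apply Lemma~\ref{lem:key_property}(ii) to the output node $\nu'$ reached when $FT_0(u,v)$ is queried with $F$, concluding that $P$ survives in $G-A_{\nu'}$. The path $P$ itself qualifies as a $(2f{+}1)$-expath with granularity $0$: take $P_a = P_c = \emptyset$ and place all of $P$ into the central slot of $P_b$ (the one admitting up to $n$ edges), with the remaining slots empty. Minimality of the stored expath $P_{\nu'}$ in $G-A_{\nu'}$ then gives $|P_{\nu'}| \le |P|$. By Lemma~\ref{lem:weaker_guarantee} and the query algorithm, the returned value is either $\alpha|P_{\nu'}| + Df\tfrac{\log_2 n}{\varepsilon}\beta$ or the length $|P_{\nu'^*}|$ of a stored expath at a leaf descendant; both are at most $\alpha|P| + Df\tfrac{\log_2 n}{\varepsilon}\beta$, using $\alpha \ge 1$ and $\beta \ge 0$ in the leaf case.

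For part~(i), I would apply Lemma~\ref{lem:key_property}(i) to the output node $\nu$ of $FT_\lambda(p_u,p_v)$ queried with $F$, obtaining that the concatenation $P'' = P(p_u,u,F) \circ P \circ P(v,p_v,F)$ lies entirely in $G-A_\nu$. Since $p_u \in \ball_{G-F}(u,\lambda/2)$ and $p_v \in \ball_{G-F}(v,\lambda/2)$ and $G$ is unweighted, each pivot-to-endpoint subpath uses at most $\lambda/2$ edges. I would then verify that $P''$ is a valid $(2f{+}1)$-expath with granularity $\lambda$ by setting $P_a^{\mathrm{exp}} := P(p_u,u,F)$ and $P_c^{\mathrm{exp}} := P(v,p_v,F)$ (each with at most $\lambda/2 \le \lambda$ edges), and placing $P$ into the central slot of $P_b^{\mathrm{exp}}$. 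Optimality of $P_\nu$ among $(2f{+}1)$-expaths with granularity $\lambda$ in $G-A_\nu$ yields $|P_\nu| \le |P''| \le |P| + \lambda$, whence $FT_\lambda(p_u,p_v,F) \le \alpha|P_\nu| + Df\tfrac{\log_2 n}{\varepsilon}\beta \le \alpha|P| + \alpha\lambda + Df\tfrac{\log_2 n}{\varepsilon}\beta$. Adding $\lambda$ to both sides recovers the claimed inequality.

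The main subtlety is verifying that $P$ in~(ii) and $P''$ in~(i) genuinely satisfy the layered length constraints of \Cref{def:l-expath_granularity} at the specified granularity; once these structural memberships are established, the bounds follow from the optimality of the stored expaths together with the certified return value furnished by Lemma~\ref{lem:weaker_guarantee}, with the leaf case handled by the extra slack $\alpha \ge 1$.
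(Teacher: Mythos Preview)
Your proposal is correct and follows essentially the same route as the paper's proof: both first dispose of the case $d^{(2f+1)}_{\varepsilon/6}(u,v,F)=+\infty$, then fix a witnessing path $P$, invoke \Cref{lem:key_property} to place $P$ (respectively $P(p_u,u,F)\circ P\circ P(v,p_v,F)$) inside $G-A_{\nu'}$ (respectively $G-A_{\nu}$), argue that this surviving path is itself a $(2f{+}1)$-expath of the appropriate granularity, and conclude via the minimality of the stored expath together with the return-value bound from \Cref{lem:weaker_guarantee}. Your treatment of the leaf case and the explicit verification of the expath structure are slightly more detailed than the paper's, but the argument is the same.
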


\noindent
Similarly to \Cref{lem:key_property},
is important to note that the left-hand side of the inequality in \Cref{lem:FT_upper_bound}~\emph{(i)}
is in terms of the pair $(p_u,p_v)$ while the right-hand side is in terms of $(u,v)$.

\begin{proof}[Proof of \Cref{lem:FT_upper_bound}]
	Let $P$ be the shortest $(2f{+}1)$-decomposable $u$-$v$-path in $G{-}F$
	that is far away from $F$,
	that is, $|P| = d^{(2f+1)}_{\varepsilon/6}(u,v,F)$.
	If no such path exists, the lemma holds vacuously.
	
	We first prove Part~\emph{(ii)}.
	Let $\nu'$ the node of $FT_0(u,v)$ that produces the output when the tree is queried with $F$.
	By \Cref{lem:key_property}~\emph{(ii)}, $P$ only uses edges in $E{\setminus}A_{\nu'}$.
	Since $P$ is $(2{+}1)$-decomposable, it is also an $(2f{+}1)$-expath.
	By definition, $P_{\nu'}$ is the \emph{shortest} $(2f{+}1)$-expath without granularity 
	from $u$ to $v$ in $G-{A_{\nu'}}$,
	whence $|P_{\nu'}| \le |P|$.
	The query algorithm of $FT_0(u,v)$ returns either $|P_{\nu'}|$ or 
	$\alpha \cdot |P_{\nu'}| + D f \frac{\log_2(n)}{\varepsilon} \cdot \beta$
	depending on whether $\nu'$ is a leaf.
	Either way the return value is bounded by 
	$\alpha \cdot |P| + D f \frac{\log_2(n)}{\varepsilon} \cdot \beta =
		\alpha \cdot d^{(2f+1)}_{\varepsilon/6}(u,v,F)
			+ D f \frac{\log_2(n)}{\varepsilon} \cdot \beta$.
	
	For Part~\emph{(i)}, let $\nu$ be the output-node of $FT_{\lambda}(p_u,p_v)$ with query $F$.
	The argument is similar as before, 
	but we have to account for the different endpoints of $P_{\nu}$ and $P$.
	Consider the shortest (replacement) paths $P(p_u,u,F)$ and $P(v,p_v,F)$ in $G{-}F$.
	By \Cref{lem:key_property}~\emph{(i)}, the concatenation $Q = P(p_u,u,F) \circ P \circ P(v,p_v,F)$
	lies in $G{-}A_{\nu}$.
	Since $p_u \in \ball_{G-F}(u, \lambda/2) \cap B_1$ and $p_v \in \ball_{G-F}(u, \lambda/2) \cap B_1$,
	the paths $P(p_u,u,F)$ and $P(v,p_v,F)$ have at most $\lambda/2$ edges.
	Together with $P$ being $(2f{+}1)$-decomposable,
	this shows that $Q$ is an $(2f{+}1)$-expath with granularity $\lambda$ from $p_u$ to $p_v$.
	It follows that $|P_{\nu}| \le |Q|$ and
	\begin{multline*}
		FT_{\lambda}(p_u,p_v,F) + \lambda \le \alpha \cdot |Q| 
			+ D f \frac{\log_2(n)}{\varepsilon} \cdot \beta
		\le \alpha \cdot (|P| + \lambda) 
			+ D f \frac{\log_2(n)}{\varepsilon} \cdot \beta\\
		= \alpha \cdot d^{(2f+1)}_{\varepsilon/6}(u,v,F) + \alpha \cdot \lambda
				+ D f \frac{\log_2(n)}{\varepsilon} \cdot \beta.	\qedhere	
	\end{multline*}
\end{proof}

We now turn the above bounds on the return values of the FT-trees
into bounds on the edge weight $w_H(u,v)$ in $H$ 
depending on the existence of certain $u$-$v$-paths in $G-F$.
First, note that if there is a hop-short path from $u$ to $v$ in $G-F$,
then also the replacement path $P(u,v,F)$ has at most $L$ edges.
It is easy to see that in this case $w_H(u,v) \le \alpha \cdot d(u,v,F) + \beta$
since $\widehat{d^{\le L}}(u,v,F)$ is part of the minimization in both \Cref{eq:sparse_balls,eq:dense_balls} and the underlying DSO for hop-short distances
has stretch $(\alpha,\beta)$.
We are thus concerned with hop-long $u$-$v$-paths.
We starting with the ``dense ball'' case.
It is an easy corollary of \Cref{lem:FT_upper_bound}~\emph{(i)}.

\begin{corollary}
\label{cor:weight_long_far_away_dense}
	Define $\delta = \lambda/L$.
	Let $F \subseteq E$, $|F| \le f$, be a set of edges and $u,v \in V(F) \cup \{s,t\}$ vertices
	such that $w_H(u,v)$ is computed using \Cref{eq:dense_balls}.
	Suppose there exists an $u$-$v$-path $P$ in $G-F$
	that is the concatenation of at most two replacement paths, 
	hop-long, and far away from all failures in $F$.
	Then, with high probability it holds that 
	$w_H(u,v) \le \alpha (1{+}\delta) \cdot |P| +
		 D f \frac{\log_2(n)}{\varepsilon} \cdot \beta$.
\end{corollary}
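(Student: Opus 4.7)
The plan is to invoke \Cref{lem:FT_upper_bound}~(i) and charge the additive overhead $\alpha\lambda$ to the multiplicative part of the stretch, using that $P$ is hop-long. First I would unpack the assumption that $w_H(u,v)$ is computed via \Cref{eq:dense_balls}. This means the query procedure has successfully retrieved, via the pivot trees, two pivots $p_u\in \ball_{G-F}(u,\lambda/2)\cap B_1$ and $p_v\in \ball_{G-F}(v,\lambda/2)\cap B_1$ (this is the point at which the w.h.p.\ qualifier enters: the sampling of $B_1$ guarantees that the dense balls are hit, and the pivot tree for $u$, resp.\ $v$, correctly reports such a pivot). By the definition of the weight in \Cref{eq:dense_balls}, we immediately have the bound $w_H(u,v) \le FT_\lambda(p_u,p_v,F) + \lambda$, so it suffices to estimate $FT_\lambda(p_u,p_v,F)$.

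Next I would verify the hypotheses needed to bound the right-hand side by $\alpha\cdot d^{(2f+1)}_{\varepsilon/6}(u,v,F)+\alpha\lambda+Df\frac{\log_2 n}{\varepsilon}\beta$ via \Cref{lem:FT_upper_bound}~(i). The path $P$ is by assumption far away from $F$, and it is the concatenation of at most two replacement paths in $G{-}F$. By the decomposition theorem of Afek et al.\ cited in \Cref{def:decomposable}, every replacement path in $G{-}F$ is $f$-decomposable, so $P$ is the concatenation of at most $2(f{+}1)=2f{+}2$ shortest paths of $G$, i.e., it is $(2f{+}1)$-decomposable. Hence $P$ is a candidate for the optimization defining $d^{(2f+1)}_{\varepsilon/6}(u,v,F)$, which gives $d^{(2f+1)}_{\varepsilon/6}(u,v,F)\le |P|$. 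Applying \Cref{lem:FT_upper_bound}~(i) and combining with the initial inequality yields
\begin{equation*}
  w_H(u,v) \;\le\; \alpha\,|P| + \alpha\lambda + Df\tfrac{\log_2 n}{\varepsilon}\,\beta.
\end{equation*}

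Finally I would absorb the $\alpha\lambda$ term into the multiplicative part using the hop-long hypothesis. Since $|P|>L$ and $\delta=\lambda/L$, we have $\lambda=\delta L\le \delta\,|P|$, so $\alpha\lambda\le \alpha\delta\,|P|$; plugging this in gives exactly $w_H(u,v)\le \alpha(1{+}\delta)\,|P|+Df\frac{\log_2 n}{\varepsilon}\,\beta$, as required. There is no real obstacle here beyond bookkeeping: the only subtlety is to notice why $P$ qualifies for the $d^{(2f+1)}_{\varepsilon/6}$ bound (decomposability \emph{and} the far-away property), and that the $+\lambda$ slack in the dense-ball formula is exactly offset, up to a factor $\alpha\delta$, by the hop-long lower bound on $|P|$. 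This is why the statement is phrased as an immediate corollary of \Cref{lem:FT_upper_bound}~(i).
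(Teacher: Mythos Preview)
Your proposal is correct and follows essentially the same argument as the paper: both obtain $w_H(u,v)\le FT_\lambda(p_u,p_v,F)+\lambda$ from \Cref{eq:dense_balls}, verify that $P$ is $(2f{+}1)$-decomposable and far away so that $d^{(2f+1)}_{\varepsilon/6}(u,v,F)\le|P|$, apply \Cref{lem:FT_upper_bound}~(i), and then absorb $\alpha\lambda=\alpha\delta L\le\alpha\delta\,|P|$ using the hop-long hypothesis.
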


\begin{proof}
	Replacement paths are $f$-decomposable.
	Therefore, the concatenation $P$ of at most two replacement paths is
	$(2f{+}1)$-decomposable.
	Since $P$ is also assumed to be far away from $F$,
	we get $|P| \ge d_{\varepsilon/6}^{(2f+1)}(u,v,F)$.
	Moreover, the path is hop-long, that is, $|P| \ge L$.
	
	There exists $p_u \in \ball_{G-F}(u, \lambda/2) \cap B_1$
	and $p_v \in \ball_{G-F}(u, \lambda/2) \cap B_1$ such that
	$w_H(u,v) \le FT_{\lambda}(p_u,p_v,F) + \lambda$ 
	since the weight is computed via \Cref{eq:dense_balls}.
	Combining \Cref{lem:FT_upper_bound}~\emph{(i)} with the lower bounds on $|P|$ gives
	\begin{multline*}
		w_H(u,v) \le \alpha \cdot d^{(2f+1)}_{\varepsilon/6}(u,v,F) + \alpha \cdot \lambda 
			+ D f \frac{\log_2(n)}{\varepsilon}\beta
			\le \alpha \cdot |P| + \alpha \cdot \delta L 
			+ D f \frac{\log_2(n)}{\varepsilon} \beta\\
			\le \alpha  (1{+}\delta) \cdot |P| 
				+ D f \frac{\log_2(n)}{\varepsilon} \beta. \qedhere
	\end{multline*}
\end{proof}

\begin{lemma}
\label{lem:weight_long_far_away_sparse}
	Let $F \subseteq E$, $|F| \le f$, be a set of edges and $u,v \in V(F) \cup \{s,t\}$ vertices
	such that $w_H(u,v)$ is computed using \Cref{eq:sparse_balls}.
	Suppose there exists an $u$-$v$-path $P$ in $G-F$
	that is the concatenation of at most two replacement paths, 
	hop-long, and far away from all failures in $F$.
	Then, with high probability, it holds that $w_H(u,v) \le \alpha \cdot |P| +
		 (D f \frac{\log_2(n)}{\varepsilon} {+} 1) \cdot \beta$.
\end{lemma}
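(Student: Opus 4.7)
The strategy is to use only the second term of the minimum in \Cref{eq:sparse_balls}. In the sparse ball case, there is a distinguished endpoint $x \in \{u,v\}$ (with $y$ the other one) for which the query algorithm recovered $\ball_{G-F'}(x,\lambda/2) \cap B_2$ at the end of a pivot-tree traversal, where $F' \subseteq F$. Assume without loss of generality that $x = u$ and $y = v$; the other case is handled symmetrically since the covering property of $B_2$ yields a pivot near \emph{either} endpoint of a concatenation of two replacement paths. The plan is to exhibit a pivot $p \in B_2$ that lies in $\ball_{G-F'}(u,\lambda/2)$ (and is therefore used in the minimization) and splits $P$ into a hop-short prefix $P[u..p]$ that the input DSO$^{\le L}$ handles accurately and a suffix $P[p..v]$ that the FT-tree without granularity handles accurately via \Cref{lem:FT_upper_bound}\,(ii).

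To find $p$, I invoke the covering property of $B_2$ established during preprocessing. Since $P$ is the concatenation of at most two replacement paths in $G-F$, is hop-long (so $|P| > L \ge \lambda$), and is far away from all failures in $F$, with high probability $P$ contains some $p \in B_2$ among its first $\lambda/2$ vertices. In particular $d(u,p,F) \le |P[u..p]| \le \lambda/2$. Because $F' \subseteq F$ implies $G-F' \supseteq G-F$, the same path in $G-F$ also exists in $G-F'$, so $d(u,p,F') \le \lambda/2$ and thus $p \in \ball_{G-F'}(u,\lambda/2) \cap B_2$; this pivot therefore contributes to the minimum defining $w_H(u,v)$ in \Cref{eq:sparse_balls}.

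It remains to bound both summands. Since $d(u,p,F) \le \lambda/2 \le L$, the input $f$-DSO$^{\le L}$ returns $\widehat{d^{\le L}}(u,p,F) \le \alpha \cdot d(u,p,F) + \beta \le \alpha \cdot |P[u..p]| + \beta$. For the FT-tree term, the subpath $P[p..v]$ of the $(2f{+}1)$-decomposable path $P$ is itself $(2f{+}1)$-decomposable, and a direct check of \Cref{def:trapezoid} shows $\tr^{\varepsilon/6}_{G-F}(P[p..v]) \subseteq \tr^{\varepsilon/6}_{G-F}(P)$ (restricting to a subpath can only shrink the arm-length $\min(|P[u..w]|,|P[w..v]|)$ used in the trapezoid condition, hence only fewer vertices satisfy it), so $P[p..v]$ inherits being far away from $F$. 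Consequently $d^{(2f+1)}_{\varepsilon/6}(p,v,F) \le |P[p..v]|$, and \Cref{lem:FT_upper_bound}\,(ii) yields $FT_0(p,v,F) \le \alpha \cdot |P[p..v]| + Df\,\frac{\log_2(n)}{\varepsilon}\cdot \beta$. Summing and using $|P[u..p]| + |P[p..v]| = |P|$ gives
\begin{equation*}
	w_H(u,v) \,\le\, \widehat{d^{\le L}}(u,p,F) + FT_0(p,v,F)
		\,\le\, \alpha \cdot |P| + \left(Df\,\frac{\log_2(n)}{\varepsilon} + 1\right)\beta,
\end{equation*}
which is the claimed bound.

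The only mildly delicate step is the visibility of $p$ to the query algorithm, namely that $p$ lies in $\ball_{G-F'}(u,\lambda/2)$ and not merely in $\ball_{G-F}(u,\lambda/2)$; this is immediate from $F' \subseteq F$ and the monotonicity of balls under edge deletion. Everything else is a routine combination of the stretch guarantee of the input hop-short oracle with the upper bound of \Cref{lem:FT_upper_bound}\,(ii), together with the subpath preservation of the decomposability and far-away properties.
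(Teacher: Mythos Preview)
Your proof is correct and follows essentially the same argument as the paper's: pick the pivot $p^* \in B_2$ guaranteed by the covering property among the first $\lambda/2$ vertices of $P$ on the side of the sparse-ball endpoint $x$, use $F' \subseteq F$ to conclude $p^* \in \ball_{G-F'}(x,\lambda/2)$, then bound the prefix via the hop-short oracle and the suffix via \Cref{lem:FT_upper_bound}\,(ii). The paper presents the same decomposition with the same two estimates; your write-up even adds the explicit trapezoid-containment justification for why the suffix stays far away from $F$, which the paper merely asserts.
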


\begin{proof}
	Let $x \in \{u,v\}$ be a vertex for which the pivot tree
	did not return a sufficiently close pivot of the first type when processing set $F$.
	Let $F' \subseteq F$ 
	be the collection of edges corresponding to the parent-child traversals of that pivot tree.
	With high probability, $\ball_{G-F'}(x,\lambda/2)$ is sparse
	and the algorithm gains access to the set $\ball_{G-F'}(x,\lambda/2) \cap B_2$
	of $\Otilde(f L^f/\lambda)$ pivots of the \emph{second type}.
	Let $y \in \{u,v\}{\setminus}\{x\}$ be the other vertex.
	\Cref{eq:sparse_balls} implies that 
	\begin{equation*}
		w_H(u,v) \le \min_{p \in \ball{G-F'}(x,\lambda/2) \cap B_2}
			\widehat{d^{\le L}}(x,p,F) + FT_0(p,y,F).
	\end{equation*}
	
	Recall the covering properties of the set $B_2$ (below \Cref{def:trapezoid}).
	Since $P$ is the concatenation of at most two replacement paths, hop-long,
	and far away from $F$, it has a pivot of the second type 
	within distance $\lambda/2$ of either endpoint w.h.p.
	Let $p^* \in B_2$ be the one around $x$.
	The set $\ball_{G-\textbf{\textit{F}}}(x,\lambda/2)$,
	is contained in $\ball_{G-\textbf{\textit{F'}}}(x,\lambda/2)$.
	That means, the pivot $p^*$ is considered when computing $w_H(u,v)$, giving
	\begin{equation*}
		w_H(u,v) \le \widehat{d^{\le L}}(x,p^*,F) + FT_0(p^*,y,F).
	\end{equation*}

	The prefix $P[x..p^*]$ has at most $\lambda/2 \le L$ edges,
	that means, the first term $\widehat{d^{\le L}}(x,p^*,F)$ approximates its length 
	with stretch $(\alpha,\beta)$.
	The suffix observes $|P[p^*..y]|  \ge d_{\varepsilon/6}^{(2f+1)}(p^*,y,F)$.
	As a subpath of $P$, the suffix itself is the concatenation of at most 
	two replacement paths, hence $(2f{+}1)$-decomposable, and far away from $F$.
	Using \Cref{lem:FT_upper_bound}~\emph{(ii)}, we get that
	\begin{align*}
		\widehat{d^{\le L}}(x,p^*,F) + FT_0(p^*,y,F) &\le \alpha \cdot |\nwspace P[x..p^*] \nwspace | + \beta
			 + \alpha \cdot d_{\varepsilon/6}^{(2f+1)}(p^*,y,F) 
				+ D f \frac{\log_2(n)}{\varepsilon} \cdot \beta\\
			&\le \alpha \Big(|\nwspace P[x..p^*] \nwspace | + |\nwspace P[p^*..y] \nwspace | \Big) + 
				\left(D f \frac{\log_2(n)}{\varepsilon} + 1 \right) \cdot \beta\\
			&= \alpha \cdot |P| + \left(D f \frac{\log_2(n)}{\varepsilon} + 1 \right) 
				\cdot \beta. \qedhere
	\end{align*}
\end{proof}

We have collected all the lemmas for the different cases 
in the proof of the stretch below (\Cref{lem:stretch_general_DSO}).
In order to tie them together, we use an result by Chechik et al.~\cite[Lemma~2.6]{ChCoFiKa17}
about paths $P$ that are \emph{not} for away from $F$.
It states the existence of a detour between the endpoints of $P$ that is not much longer
but enjoys the property of being far away from all failures.
Namely, one follows $P$ until a certain vertex $y$.
One then leaves the path
to reach an endpoint $z$ of some failing edge inside of $\tr^{\varepsilon/6}_{G-F}(P)$.
The detour can then be completed by going from $z$ to the other end of $P$. 

\begin{lemma}[\cite{ChCoFiKa17}]
\label{lem:xyz-lemma}
  Let $F \subseteq E$, $|F| \le f$, be a set of edges, $u,v \in V(F) \cup \{s,t\}$ vertices,
  and $P = P(u,v,F)$ their replacement path.
  If $\tr^{\varepsilon/6}_{G-F}(P) \cap V(F) \neq \emptyset$,
  then there are vertices $x \in \{u,v\}$, $y \in V(P)$,
  and $z \in \tr^{\varepsilon/6}_{G-F}(P) \cap V(F)$ with the following properties.
  \begin{enumerate}[(i)]
    \item $\length{P[x..y]} \le \length{P}/2$;
    \item $d(y,z,F) \le \frac{\varepsilon}{6} \cdot d(x,y,F)$;
    \item $\tr^{\varepsilon/6}_{G-F}\!\Big(P[x..y] \circ P(y,z,F) \Big) \cap V(F) = \emptyset$.
  \end{enumerate}
  Thus, the path $P[x..y] \circ P(y,z,F)$ is far away from $F$
  and has length at most $(1 {+} \tfrac{\varepsilon}{6}) \cdot d(x,y,F)$.
\end{lemma}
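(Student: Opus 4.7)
The plan is a standard extremal-principle argument. The hypothesis $\tr^{\varepsilon/6}_{G-F}(P) \cap V(F) \neq \emptyset$ yields a nonempty set of \emph{witness pairs} $(y, z)$ with $y \in V(P)$, $z \in V(F)$, and $d(y, z, F) \le \frac{\varepsilon}{6}\, \min(|P[u..y]|, |P[y..v]|)$. For any such pair, define $x_y \in \{u, v\}$ to be the endpoint realizing that minimum. Then properties (i) and (ii) are already built into the definition of the trapezoid: $|P[x_y..y]| \le |P|/2$, and $d(y, z, F) \le \frac{\varepsilon}{6}\, d(x_y, y, F)$ because $P[x_y..y]$ is a subpath of the replacement path $P$ and therefore a shortest $x_y$-$y$-path in $G-F$.

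I would select the pair extremally by minimizing the detour length $d(x_y, y, F) + d(y, z, F)$, breaking ties first by smaller $d(x_y, y, F)$ and then by a fixed ordering on $V \times V$ to make the choice unique. Write $(x, y, z)$ for the selected triple and $Q = P[x..y] \circ P(y, z, F)$ for the resulting detour. The length bound $|Q| \le (1 + \frac{\varepsilon}{6})\, d(x, y, F)$ is then immediate from (ii).

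The hard part is establishing (iii). I would argue by contradiction: suppose some $z' \in V(F)$ lies in $\tr^{\varepsilon/6}_{G-F}(Q)$, witnessed by an internal vertex $w$ of $Q$, and split on which segment of $Q$ contains $w$. If $w$ lies on the prefix $P[x..y]$, then $w \in V(P)$ with $|P[x..w]| \le |P[x..y]| \le |P|/2$, so $x$ remains the closer endpoint of $P$ to $w$; thus $(w, z')$ is itself a witness pair for the original trapezoid, and its detour has length at most $(1 + \frac{\varepsilon}{6})\, d(x, w, F) \le (1 + \frac{\varepsilon}{6})\, d(x, y, F)$. Extremality then forces $w = y$, and the tie-breaking rule forces $(y, z') = (y, z)$, a contradiction. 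If instead $w$ lies on the replacement segment $P(y, z, F)$, the plan is to exhibit $(y, z')$ as a witness pair via the triangle inequality $d(y, z', F) \le d(y, w, F) + d(w, z', F) \le |Q[y..w]| + \frac{\varepsilon}{6}\, |Q[x..w]|$, and show that the detour through $(y, z')$ is strictly shorter than the one through $(y, z)$, again contradicting extremality. This second case is the delicate step: one has to propagate $|Q[x..w]| = d(x, y, F) + |Q[y..w]|$ and $d(y, z, F) = |Q[y..w]| + d(w, z, F)$ through the $\frac{\varepsilon}{6}$ slack and verify that the savings from swapping $z$ for $z'$ strictly dominate the loss from moving the witness back from $w$ to $y$. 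Getting the constants to close at exactly $\varepsilon/6$ is the main technical obstacle; I would attack it by normalizing $d(x, y, F) = 1$ and reducing to a small system of linear inequalities in $\varepsilon$.
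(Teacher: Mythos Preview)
The paper does not prove this lemma; it is quoted from \cite[Lemma~2.6]{ChCoFiKa17}, so there is no in-paper argument to compare against. Your extremal-principle approach is the standard way to establish such a statement and is essentially correct, but the execution as sketched has a real gap that prevents both cases from closing.

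The gap is that you systematically use the wrong side of the trapezoid minimum. In Case~1 you bound $d(w,z',F) \le \frac{\varepsilon}{6}\,|Q[x..w]|$, giving a detour of at most $(1+\frac{\varepsilon}{6})\,d(x,w,F)$; but this can exceed the optimal detour $d(x,y,F)+d(y,z,F)$ (take $d(x,w,F)$ just below $d(x,y,F)$ and $d(y,z,F)$ tiny), so extremality does \emph{not} force $w=y$. In Case~2 you similarly invoke $|Q[x..w]|$, which yields $d(y,z',F)\le (1+\frac{\varepsilon}{6})\,d(y,w,F)+\frac{\varepsilon}{6}\,d(x,y,F)$; this is useless for showing $(y,z')$ is a witness because the first term is not controlled by $\frac{\varepsilon}{6}\,d(x,y,F)$. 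This is exactly why you find Case~2 ``delicate'': with this bound the constants do \emph{not} close.

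The fix in both cases is to use the \emph{other} side of the minimum, $d(w,z',F)\le \frac{\varepsilon}{6}\,|Q[w..z]|$. In Case~1 this gives
\[
d(x,w,F)+d(w,z',F)\;\le\;(1-\tfrac{\varepsilon}{6})\,d(x,w,F)+\tfrac{\varepsilon}{6}\,|Q|\;<\;|Q|,
\]
strictly, since $d(x,w,F)\le d(x,y,F)\le |Q|$ with equality only if $w=y=z$, which is excluded. In Case~2, writing $|Q[w..z]|=d(y,z,F)-d(y,w,F)$,
\[
d(y,z',F)\;\le\;(1-\tfrac{\varepsilon}{6})\,d(y,w,F)+\tfrac{\varepsilon}{6}\,d(y,z,F)\;<\;d(y,z,F)\;\le\;\tfrac{\varepsilon}{6}\,d(x,y,F),
\]
so $(y,z')$ is a valid witness with strictly shorter detour. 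No normalization or system of inequalities is needed. (One also has to rule out $z'$ coinciding with the far endpoint of $P$; this follows from $\varepsilon\le 12/7$ and $|P[x..y]|\le |P|/2$ via a short triangle-inequality computation.) A slightly cleaner extremality---minimize $d(x_y,y,F)$ first, then $d(y,z,F)$---avoids even the tie-breaking discussion, but your criterion works once the correct trapezoid bound is used.
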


Finally, we prove the $(\alpha (1{+}\varepsilon), \beta)$ stretch
of our $f$-DSO.
This is the main result of this section.
Recall that the auxiliary graph $H$ has vertex set $V(H) = V(F) \cup \{s,t\}$
and weights on its edges.
The output of the DSO is $d_H(s,t)$.
At the heart of the proof of \Cref{lem:stretch_general_DSO} is an induction over
the pairwise distances in $H$.
Extra care must be taken to control the additive stretch
that accumulates over this induction.
The next lemma also completes the proof of the randomized part of \Cref{thm:general_DSO}. 
The space, query time, and preprocessing time above where all proven under the assumptions
of $\lambda = \Theta(\varepsilon L)$ and $\lambda \le L$.
We now need to make the leading constant precise and set $\lambda = \varepsilon L/9$.

\begin{lemma}
\label{lem:stretch_general_DSO}
	Define $\lambda = \varepsilon L/9$.
	Let $(\alpha,\beta)$ be the stretch of the input $f$-DSO for distances at most $L$,
	where $\beta = o(\frac{\varepsilon^2 L}{f^3 \log n})$.
	Then, w.h.p.\
	$d(s,t,F) \le d_H(s,t) \le \alpha (1{+}\varepsilon) \cdot d(s,t,F) + \beta$.
\end{lemma}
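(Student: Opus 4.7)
The lower bound $d(s,t,F) \le d_H(s,t)$ is immediate from \Cref{lem:never_underestimate}~(ii): every edge weight $w_H(u,v)$ upper-bounds $d(u,v,F)$, so the triangle inequality for replacement distances along any $s$-$t$-walk in $H$ yields $d_H(s,t) \ge d(s,t,F)$.

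For the upper bound, I plan to establish, by induction on the pairs $(u,v) \in V(H)^2$ ordered by increasing replacement distance $d(u,v,F)$, the claim that there exists an $u$-$v$-walk in $H$ of total weight at most $\alpha(1{+}\varepsilon_1)\,d(u,v,F) + \beta_{u,v}$, where $\varepsilon_1 = \Theta(\varepsilon)$ and $\beta_{u,v} = \beta$ if $d(u,v,F) \le L$ while $\beta_{u,v}=0$ otherwise. Instantiating at $(u,v)=(s,t)$ then gives the lemma. The inductive step splits into three cases on the structure of $P := P(u,v,F)$: \emph{(i)} if $|P|\le L$, the single edge $\{u,v\}$ already satisfies $w_H(u,v) \le \alpha\,d(u,v,F) + \beta$ because the input $f$-DSO$^{\le L}$ contributes to the minimum defining $w_H(u,v)$ in both \Cref{eq:dense_balls} and \Cref{eq:sparse_balls}; \emph{(ii)} if $|P|>L$ and $P$ is far from $F$, \Cref{cor:weight_long_far_away_dense} or \Cref{lem:weight_long_far_away_sparse} (depending on which formula computes $w_H(u,v)$) bound $w_H(u,v)$ by $\alpha(1{+}\lambda/L)\,d(u,v,F)$ plus an overhead of order $(Df\log_2(n)/\varepsilon)\cdot\beta$; \emph{(iii)} if $|P|>L$ and $P$ is close to $F$, \Cref{lem:xyz-lemma} furnishes $x \in \{u,v\}$, $y \in V(P)$, and $z \in V(F)\cap V(H)$ such that the detour $Q = P[x..y] \circ P(y,z,F)$ is a concatenation of at most two replacement paths, is far from $F$, and has length $|Q| \le (1{+}\varepsilon/6)\,|P|/2 < d(u,v,F)$; then case~(ii) applied to $(x,z)$ bounds $w_H(x,z)$ in terms of $|Q|$, while the induction hypothesis applies to $(z,y')$, where $y' \in \{u,v\}\setminus\{x\}$ satisfies $d(z,y',F) < d(u,v,F)$.

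The main obstacle is controlling how the additive term propagates. The overhead $(Df\log_2(n)/\varepsilon)\cdot\beta$ from case~(ii) is incurred per edge of the inductive walk in $H$, and the recursive use of case~(iii) can assemble a walk with up to $O(f^2)$ edges. Crucially, this overhead appears only when the underlying subpath is hop-long, so its replacement distance is at least $L$. Under the assumption $\beta = o(\varepsilon^2 L/(f^3\log n))$, each overhead is therefore bounded by $O(\varepsilon/f^2)\cdot d(u,v,F)$, and summing along the walk contributes at most $\Theta(\varepsilon)\cdot d(s,t,F)$ in total, which folds into a second multiplicative factor $1{+}\varepsilon_2$ with $\varepsilon_2=\Theta(\varepsilon)$. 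Combined with the $1{+}\varepsilon_1$ factor produced by the detours in case~(iii), and tuning $\varepsilon_1,\varepsilon_2$ as suitable constant multiples of $\varepsilon$, the composite stretch $(1{+}\varepsilon_1)(1{+}\varepsilon_2) \le 1{+}\varepsilon$ is achieved. The only additive $\beta$ that cannot be absorbed stems from at most one invocation of case~(i), namely when $P(s,t,F)$ itself is hop-short, which matches the $+\beta$ in the claimed bound. All high-probability events inherited from \Cref{cor:weight_long_far_away_dense}, \Cref{lem:weight_long_far_away_sparse}, and the covering properties of $B_1,B_2$ are assumed to hold jointly over the relevant pairs and failure sets.
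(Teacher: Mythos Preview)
Your approach is essentially the same as the paper's: the same induction over pairs ordered by replacement distance, the same three cases, and the same use of \Cref{lem:xyz-lemma} in the close-to-$F$ case. The one genuine issue is that the inductive invariant you state in the first paragraph does not close. In case~(iii), even though $(u,v)$ is hop-long, bounding $w_H(x,z)$ via \Cref{cor:weight_long_far_away_dense} or \Cref{lem:weight_long_far_away_sparse} injects an additive $(X{+}1)\beta$ with $X = Df\log_2(n)/\varepsilon$, and the recursive call on $(z,w)$ may contribute a further $\beta_{z,w}=\beta$. You therefore cannot maintain $\beta_{u,v}=0$ with a \emph{fixed} $\varepsilon_1$: the only multiplicative slack available is proportional to $d(x,y,F)$, which can be much smaller than $L$ even when $d(u,v,F)>L$.

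The fix---which your second paragraph already hints at and which the paper makes explicit---is to carry a rank-dependent additive term through the induction: prove
\[
  d_H(u,v) \le \alpha\!\left(1+\tfrac{2}{3}\varepsilon\right) d(u,v,F) + \rank(u,v)\,(X{+}1)\,\beta
\]
for \emph{all} pairs. The rank strictly increases along the recursion, and only at the very end, when $P(s,t,F)$ is hop-long, do you absorb the accumulated $O(f^2)(X{+}1)\beta$ into $\alpha\,\tfrac{\varepsilon}{3}\,d(s,t,F)$ using $d(s,t,F)\ge L$ and the hypothesis $\beta=o(\varepsilon^2 L/(f^3\log n))$. One minor related point: in case~(iii), the bound on $w_H(x,z)$ comes from applying \Cref{cor:weight_long_far_away_dense} or \Cref{lem:weight_long_far_away_sparse} to the concatenation $Q$ (which is a concatenation of two replacement paths and far from $F$), not to $P(x,z,F)$; and $Q$ itself may be hop-short, in which case the input $f$-DSO$^{\le L}$ handles it directly with overhead only $\beta$. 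The paper treats these sub-cases explicitly.
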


\begin{proof}
	We first cover the easy cases.
	No edge weight $w_H(u,v)$ for  $u,v \in V(F) \cup \{s,t\}$
	underestimates $d(u,v,F)$ by \Cref{lem:never_underestimate}~\emph{(ii)}.
	Therefore, the graph distance $d_H(s,t)$ is not smaller than the replacement distance $d(s,t,F)$.
	If $P(s,t,F)$ has at most $L$ edges, the upper bound is immediate as
	\begin{equation*}
		d_H(s,t) \le w_H(s,t) \le \widehat{d^{\le L}}(s,t,F) 
			\le \alpha \cdot d(s,t,F) + \beta,
	\end{equation*}
	independently of how the weight is computed.
	This is a better stretch than what we claimed, but only holds under the assumption that 
	$P(s,t,F)$ is hop-short.
	The remainder of the proof consists of showing that for hop-long replacement paths we have
	\begin{equation}
	\label[ineq]{eq:stretch_hop-long}
		d_H(s,t) \le \alpha (1{+}\varepsilon) \cdot d(s,t,F).
	\end{equation}
	While it may look as if \Cref{eq:stretch_hop-long} is independent of $\beta$,
	it will be proven by subsuming all the additive stretch in the additional $1+\varepsilon$
	factor, using that $P(s,t,F)$ is hop-long.
	
	Let $X = D f \frac{\log_2(n)}{\varepsilon}$.
	Consider all sets $\{u,v\}$ with one or two vertices of $H$,
	that is, we include the case $u = v$.
	ordered ascendingly by the true replacement distance $d(u,v,F)$
	(ties are broken arbitrarily).
	We use $\rank(u,v)$ to denote the rank of the set $\{u,v\}$ in that order.
	Note the rank is symmetric as $H$ is undirected, that is, $\rank(u,v) = \rank(v,u)$.
	We claim that, for \emph{all} $\{u,v\}$,
	\begin{equation}
	\label[ineq]{eq:induction}
		d_H(u,v) \le \alpha \left(1{+} \frac{2}{3}\varepsilon\right) \cdot d(u,v,F) 
			+ \rank(u,v) (X {+} 1) \cdot \beta.
	\end{equation}
	We prove this by induction over the order.
	It is clear for all singleton sets where $u = v$.
	Assume that the assertion holds for all sets coming before $\{u,v\}$.
	There are three major cases depending on whether $P(u,v,F)$ is hop-short or hop-long and
	whether is far away from all failures in $F$.
	Only \Cref{case:third_case} makes actual use of the induction hypothesis,
	the others are proven directly.
	
	\begin{case}
	\label{case:first_case}
		The replacement path $P(u,v,F)$ is hop-short.
	\end{case}
	
	We have
	$d_H(u,v) \le \widehat{d^{\le L}}(u,v,F) \le \alpha \cdot d(u,v,F) + \beta
		\le \alpha (1{+} \frac{2}{3}\varepsilon)d(u,v,F) 
			+ \rank(u,v) (X {+} 1) \beta$.
			
	\begin{case}
	\label{case:second_case}
		The replacement path $P(u,v,F)$ is hop-long and far away from $F$.
	\end{case}
	
	There are two subcases.
	First, suppose that $w_H(u,v)$ is computed via \Cref{eq:dense_balls}.
	Applying \Cref{cor:weight_long_far_away_dense} with $\delta = \lambda/L = \varepsilon/9$
	gives $d_H(u,v) \le w_H(u,v) \le \alpha (1{+}\frac{\varepsilon}{9}) 
		\cdot d(u,v,F) + X \beta$ w.h.p.
	Otherwise, the weight is computed via \Cref{eq:sparse_balls}.
	\Cref{lem:weight_long_far_away_sparse} shows that
	$d_H(u,v) \le w_H(u,v) \le \alpha \cdot d(u,v,F) + (X{+}1) \cdot \beta$ w.h.p.	
	Both bounds are never larger than what we claim in \Cref{eq:induction}.
	
	\begin{case}
	\label{case:third_case}
		The replacement path $P(u,v,F)$ is hop-long but close to $F$.
	\end{case}
	
	Let $P = P(u,v,F)$.
	By \Cref{lem:xyz-lemma}, there exists an endpoint $x \in \{u,v\}$ of $P$, 
	some other vertex $y \in V(P)$ on the path, and
	and an endpoint of a failing edge
	$z \in \tr_{G-F}^{\varepsilon/6}(P) \cap V(F)$ in the trapezoid
	such that the concatenation $Q = P[x..y] \circ P(y,z,F)$ is far away from all failures in $F$.
	$P(y,z,F)$ denotes the replacement path from $y$ to $z$.
	Further note that $P[x..y]$ is the replacement path from $x$ to $y$,
	since it is a subpath of $P(u,v,F)$.
	In summary $Q$ is the concatenation of two replacement paths and far away from $F$.
	 \Cref{lem:xyz-lemma} also gives $|P(y,z,F)| = d(y,\textbf{\textit{z}},F) \le \frac{\varepsilon}{6} d(x,\textbf{\textit{y}},F)$.
	So $Q$ is only slightly larger than $|P[x..y]| = d(x,y,F)$,
	namely, of length $(1{+}\frac{\varepsilon}{6}) \nwspace d(x,y,F)$.
	For notational convenience, let $w$ be the other endpoint of $P$ in $\{u,v\}{\setminus}\{x\}$.
	
	On our way to establish \Cref{eq:induction} also in this case,
	we claim the following bound on the weight of the edge $\{x,z\}$ in $H$,
	\begin{equation}
	\label[ineq]{eq:xz_weight}
		w_H(x,z) \le \alpha 
			\left(1 + \frac{\varepsilon}{9} \right) \left(1 + \frac{\varepsilon}{6} \right) \cdot
				d(x,y,F) + (X{+}1) \cdot \beta.
	\end{equation}
	Note that the left-hand side is in terms of the pair $(x,z)$
	while the right-hand side is in terms of $(x,y)$.
	Translating between the two is done via 
	$d(x,y,F) \le (1+\frac{\varepsilon}{6}) \nwspace d(x,y,F)$
	by the choice of vertex $z$.
	Intuitively, \Cref{eq:xz_weight} states that following $Q$ from $x$ to $z$
	and continuing to the other endpoint $w$ from there
	is not much more expensive than going directly from $x$ to $w$ along $P$.
	
	We distinguish the same three cases as above but now with respect to path $Q$.
	In the first one, $Q$ is hop-short.
	As in \Cref{case:first_case}, we get
	\begin{equation*}
		w_H(x,z) \le d^{\le L}(x,z,F) \le \alpha \cdot d(x,\textbf{\textit{z}},F) + \beta \le 
	 		\alpha \left(1+\frac{\varepsilon}{6} \right) \cdot d(x,\textbf{\textit{y}},F) + \beta.
	\end{equation*}
	Now assume $Q$ is hop-long and $w_H(x,z)$ is computed using \Cref{eq:dense_balls}.
	This is handled by \Cref{cor:weight_long_far_away_dense}
	which also applies to $Q$ 
	as it is the concatenation of two replacement paths and far away from $F$.
	Reecall that $\delta = \varepsilon/9$.
	We get
	\begin{align*}
		w_H(x,z) &\le \alpha \left(1{+}\delta \right) \cdot d(x,\textbf{\textit{z}},F) 
			+ (X{+}1) \cdot \beta\\
			&\le \alpha \left(1 + \frac{\varepsilon}{9} \right) 
				\left(1 + \frac{\varepsilon}{6} \right) \cdot d(x,\textbf{\textit{y}},F) 
					+ (X{+}1) \cdot \beta.
	\end{align*}
	Finally, assume $w_H(x,z)$ is computed using \Cref{eq:sparse_balls}.
	From \Cref{lem:weight_long_far_away_sparse}, we get
	$w_H(x,z) \le \alpha \cdot |Q| + (X{+}1) \cdot \beta 
		\le \alpha (1+\frac{\varepsilon}{6}) \cdot d(x,y,F) + (X{+}1) \cdot \beta$,
	which implies \Cref{eq:xz_weight}.
	
	Now that we have a bound on the weight $w_H(x,z)$,
	we prove \Cref{eq:induction} in \Cref{case:third_case}.
	Recall that this case assumes that $P = P(u,v,F)$
	is hop-long but \emph{close} to some failure in $F$.
	Also recall that $\{x,w\} = \{u,v\}$ is the same pair of vertices.
	The other vertex $z$ lies in the $\tfrac{\varepsilon}{6}$-trapezoid of $P$.
	Since $\tfrac{\varepsilon}{6} \le \tfrac{12}{7 \cdot 6} < 1$, 
	\emph{every} element of $\tr_{G-F}^{\varepsilon/6}(P)$
	is closer to $w$ in $G-F$ than $x$ is.
	In particular, we have $d(z,w,F) < d(x,w,F) = d(u,v,F)$,
	whence $\rank(z,w)$ is strictly smaller than $\rank(x,w) = \rank(u,v)$.
	We apply the induction hypothesis to $\{z,w\}$.
	It states that
	$d_H(z,w) \le \alpha (1+\frac{2}{3}\varepsilon) \cdot d(z,w,F) + \rank(z,w)(X+1) \cdot \beta$.
	\Cref{eq:xz_weight} gives
	\begin{multline*}
		d_H(u,v) = d_H(x,w) \le w_H(x,z) + d_H(z,w)\\
			\le \alpha \!\left(1 + \frac{\varepsilon}{9} \right)\! 
				\left(1 + \frac{\varepsilon}{6} \right) \nwspace d(x,y,F) + (X{+}1) \nwspace \beta
				+  \alpha \!\left(1+\frac{2}{3}\varepsilon \right) \nwspace d(z,w,F) 
					+ \rank(z,w)(X{+}1) \nwspace \beta.
	\end{multline*}
	
	We first aggregate the multiples of $\beta$ and then those of $\alpha$.
	We have 
	$(\rank(z,w)+1)(X{+}1) \cdot \beta \le \rank(x,w)(X{+}1) \cdot \beta 
		= \rank(u,v)(X{+}1) \cdot \beta$		
	For the other terms, we use that $\varepsilon \le 12/7$ implies
	\begin{equation*}
		\left(1+\frac{\varepsilon}{9}\right) \!\left(1+\frac{\varepsilon}{6}\right)
			+ \left(1 + \frac{2}{3} \varepsilon\right) \frac{\varepsilon}{6} =
		\frac{7}{54} \varepsilon^2 + \frac{4}{9} \varepsilon + 1	
	 	\le 1 + \frac{2}{3} \varepsilon.
	\end{equation*}
	We omit the factor $\alpha$ in the following calculation.
	\begin{align*}
		&\left(1 + \frac{\varepsilon}{9} \right) \left(1 + \frac{\varepsilon}{6} \right) d(x,y,F)
			+ \left(1+\frac{2}{3}\varepsilon \right) d(z,w,F)\\
			&\qquad\le \left(1 + \frac{\varepsilon}{9} \right) 
				\left(1 + \frac{\varepsilon}{6} \right) d(x,y,F)
					+ \left(1+\frac{2}{3}\varepsilon \right) \Big( d(z,y,F) + d(y,w,F) \Big)\\
			&\qquad\le \left(1 + \frac{\varepsilon}{9} \right) 
				\left(1 + \frac{\varepsilon}{6} \right) d(x,y,F)
				+ \left(1+\frac{2}{3}\varepsilon \right) \frac{\varepsilon}{6} \nwspace d(x,y,F) 
					+ \left(1+\frac{2}{3}\varepsilon \right) d(y,w,F)\\
			&\qquad= \left(\left(1+\frac{\varepsilon}{9}\right) \!\left(1+\frac{\varepsilon}{6}\right)
				+ \left(1 + \frac{2}{3} \varepsilon\right) \frac{\varepsilon}{6} \right) d(x,y,F)
					+ \left(1+\frac{2}{3}\varepsilon \right) d(y,w,F)\\
			&\qquad \le \left(1+\frac{2}{3}\varepsilon \right) d(x,y,F)
				+ \left(1+\frac{2}{3}\varepsilon \right) d(y,w,F) 
			= \left(1+\frac{2}{3}\varepsilon \right) d(x,w,F)
			= \left(1+\frac{2}{3}\varepsilon \right) d(u,v,F).
	\end{align*}
	Recombining the terms depending on $\alpha$ and $\beta$ 
	completes \Cref{case:third_case} and establishes \Cref{eq:induction}.
	
	We now use this to also prove \Cref{eq:stretch_hop-long}, which states that 
	$d_H(s,t) \le \alpha (1{+}\varepsilon) \cdot d(s,t,F)$ is true
	for those queries for which $P(s,t,F)$ is hop-long.
	This is the last part of the proof that is still open.
	\Cref{eq:induction} applied to the pair $\{s,t\}$ states that
	\begin{equation*}
		d_H(s,t) \le \alpha \left(1{+} \frac{2}{3}\varepsilon\right) \cdot d(s,t,F) 
			+ \rank(s,t) (X {+} 1) \cdot \beta.
	\end{equation*}
	It is thus enough to prove 
	$\rank(s,t) (X {+} 1) \beta \le  \alpha \nwspace \frac{\varepsilon}{3} \nwspace d(s,t,F)$.
	
	Observe that the replacement path being hop-long implies that $d(s,t,F) \ge L$.
	There are $\binom{|V(F) \cup \{s,t\}|}{2} + |V(F) \cup \{s,t\}| \le 2f^2 + 5f + 3 \le 6f^2$ 
	sets $\{u,v\}$ in $H$,
	which is an upper bound on $\rank(s,t)$.
	The last estimate uses $f \ge 2$.
	The restriction $\beta = o(\frac{\varepsilon^2 L}{f^3 \log n})$
	implies $7Df^3  \nwspace \frac{\log_2(n)}{\varepsilon} \beta \le \frac{\varepsilon}{3}L$
	for all sufficiently large $n$.
	Together with $\alpha \ge 1$, we get
	\begin{multline*}
		\rank(s,t) (X {+} 1) \,\beta \le  6f^2
			\left(Df \nwspace \frac{\log_2(n)}{\varepsilon} + 1\right) \beta
			 \le 7Df^3  \nwspace \frac{\log_2(n)}{\varepsilon} \beta\\
			 \le \frac{\varepsilon}{3} \, L
			 \le \alpha \nwspace \frac{\varepsilon}{3} \, L
			 \le \alpha \frac{\varepsilon}{3} \, d(s,t,F). \qedhere
	\end{multline*}
\end{proof}

\section{Derandomization}
\label{sec:derandomization}

We now derandomize the steps leading up to \Cref{thm:chained}.
The process of making distance oracles fault-tolerant
for short distances (\Cref{thm:short_distance_DSO}) is already deterministic.
We are thus left with the distance oracle in \Cref{thm:distance_oracle}
and the pivots in \Cref{thm:general_DSO}.
We use the critical-path framework of Alon, Chechik and Cohen~\cite{AlonChechikCohen19CombinatorialRP}.
This means computing a deterministic hitting set for the namesake 
\emph{critical (replacement) paths}
and use it in the construction of the oracle.
The crucial issue is to make the set of paths 
as small as possible and easily identifiable in order to keep the derandomization efficient.
Once assembled, the paths are given as input to the straightforward greedy algorithm 
that obtains the hitting set by iteratively selecting the vertex
contained in the most yet unhit paths.
We treat the paths are mere sets of vertices.
This allows us to seamlessly extend the approach to the sets $\ball_{G-F}(u,\lambda/2)$
when derandomizing \Cref{thm:general_DSO}. 

The following folklore lemma states the properties of the greedy algorithm.
A formal proof can be found in~\cite{AlonChechikCohen19CombinatorialRP};
King~\cite{King99FullyDynamicAPSP} gave an alternative algorithm with similar parameters.

\begin{lemma}
\label{lem:greedy_derand}
	Let $V$ be a set of $|V| = n$ vertices,
	and $M \le n$ and $q$ be positive integers (possibly depending on $n$).
	Let $S_1, \dots, S_q \subseteq V$ be sets of vertices that,
	for all indices $1 \le k \le q$, satisfy $|S_k| = \Omega(M)$.
	The \textup{\textsf{GreedyPivotSelection}} algorithm runs in time $\Otilde(qM)$
	and outputs a set $B \subseteq V$ of cardinality $|B| = O(n \log(q)/M)$ such that,
	for every $k$, it holds that $B \cap S_k \neq \emptyset$.
\end{lemma}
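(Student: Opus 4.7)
The plan is to use a classical greedy hitting-set argument. The \textsf{GreedyPivotSelection} algorithm repeatedly picks the vertex contained in the most currently uncovered sets, adds it to $B$, and marks those sets as covered, stopping when none remain. As a preprocessing step I would first truncate each $S_k$ to an arbitrary subset $S'_k \subseteq S_k$ of size $\Theta(M)$; any hitting set for the $S'_k$ automatically hits every $S_k$, and the total input description drops to $O(qM)$.

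For the size bound, let $q_i$ denote the number of uncovered sets after $i$ iterations, with $q_0 = q$. The total number of (vertex, uncovered set) incidences is $\Omega(M q_i)$, so by averaging some vertex must lie in at least $c M q_i / n$ uncovered sets for a constant $c>0$. Choosing such a vertex reduces the count to $q_{i+1} \le q_i(1 - cM/n) \le q_i \cdot e^{-cM/n}$. Iterating this geometric decay, $q_i$ drops below one after $O(n \log q / M)$ steps, which upper-bounds $|B|$.

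For the runtime, I would build an inverted index listing, for each $v \in V$, the sets containing $v$; this takes $O(qM)$ time and space after truncation. Maintain a counter $c(v)$ equal to the number of uncovered sets currently containing $v$, organized in a bucketed array indexed by count value, together with a pointer to the current maximum; since $c(v)$ is non-increasing throughout the run, the pointer only sweeps downward and contributes $O(q)$ total overhead. When the chosen vertex $v^\ast$ is added to $B$, iterate over the uncovered sets containing $v^\ast$; for each such $S'_k$, decrement $c(u)$ for every $u \in S'_k$ and mark $S'_k$ as covered. Charging the work to the set that becomes covered, the total update cost across the whole run is $\sum_k |S'_k| = O(qM)$, yielding the $\Otilde(qM)$ bound once logarithmic bookkeeping factors are absorbed.

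The main subtlety is ensuring that the $\Omega(M)$ lower bound continues to drive the averaging argument even after many sets have been covered; this is automatic because the averaging is restricted to currently uncovered sets, each of which retains its original size $\Omega(M)$. Everything else is routine maintenance of the inverted index and counter structure.
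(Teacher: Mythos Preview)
Your argument is correct and is exactly the standard greedy hitting-set proof: truncate to size $\Theta(M)$, apply the averaging bound $q_{i+1} \le q_i(1-\Omega(M/n))$ to get the $O(n\log q/M)$ size, and charge the counter updates to the covered sets for the $\Otilde(qM)$ running time. The paper itself does not prove this lemma at all; it merely calls it folklore and defers to Alon, Chechik, and Cohen~\cite{AlonChechikCohen19CombinatorialRP} (with King~\cite{King99FullyDynamicAPSP} cited as an alternative), so your write-up is in fact more detailed than what the paper provides.
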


\noindent
Preferably, the time needed for the derandomization does not increase the
asymptotic bounds of \Cref{thm:distance_oracle,thm:general_DSO}.
We thus need to find the critical vertex sets as quickly as possible
while also keeping their number $q$ small.
If, however, we choose the minimum size $M$ too small,
we end up with too many pivots.

\subsection{Derandomizing the Near-Additive Distance Oracle}
\label{subsec:derand_DO}

The derandomization of the distance oracles is a good introduction
how we employ \Cref{lem:greedy_derand}.
We focus the discussion on \Cref{thm:distance_oracle}.
The same ideas also apply to \Cref{thm:hierarchy_DO}.
The only random choice in the construction is which elements to include in the set $B$ of pivots.
Recall that the input graph $G$ is assumed to be weighted.
For any non-negative radius $r$, the set $V_r$ contains those vertices $u$
that have at most $K$ vertices within (weighted) distance $r$ around $u$.
Here, $K$ is the parameter from \Cref{thm:distance_oracle}.
The only time we use a property of $B$ other than its size is
right before \Cref{lem:correctness-weighted}.
We require that any vertex $v \in V{\setminus}V_r$ has a pivot $p(v) \in B$ 
with $d(v,p(v)) \le r$.

Let $K[v]$ be the set of the $K$ closest vertices to $v$.
If $v$ is connected to fewer than $K$ vertices, then $K[v]$ is the entire component.
We claim that it is enough to find a hitting set for all the $K[v]$ 
that have exactly $K$ elements.
To see this, first note that disconnected vertices, i.e., $r = +\infty$,
are handled directly by the oracle.
Suppose $v \in V{\setminus}V_r$ for some finite $r$.
The set $\ball(v,r)$ has more than $K$ vertices.
On the one hand, this gives $|K[v]| = K$ as more than $K$ vertices are connected to $v$.
On the other hand, the closest $K$ among them all have distance at most $r$.
Hitting either one of them is sufficient.

In terms of the parameters of \Cref{lem:greedy_derand},
there are $q \le n$ critical sets that we need to hit.
The lower bound on their size translates to choosing $M = K$.
We thus get a hitting set $B$ of size $O(n \log(q)/M) = \Otilde(n/K)$,
the same as we had for \Cref{thm:distance_oracle}.
The sets $K[v]$ are computed anyway during the preprocessing
so this incurs no extra charge.
\textsf{GreedyPivotSelection} runs in time $\Otilde(qM) = \Otilde(nK) = \Otilde(n^{3/2})$
which is dominated by the APSP computation in the preprocessing.

\subsection{Derandomizing the General Distance Sensitivity Oracle}
\label{subsec:derand_pivots}

We now turn to \Cref{thm:general_DSO}.
In particular, we restrict our attention to unweighted graphs.
The only source of randomness are the pivot sets $B_1$ and $B_2$.
We recall their covering properties.
The pivots of the first type in $B_1$ are used to hit all the sets $\ball_{G-F}(u,\lambda/2)$
provided that they are dense, that is, contain more than $L^f$ vertices.
The set $B_2$ hits the (unique) shortest paths in $G$ on at least $L$ edges,
as well as the first and last $\lambda/2$ vertices
of all hop-long concatenations of up to two replacement paths in any $G{-}F$.
In the construction of the oracle, we only used pivots of the second type 
on concatenations that are far away from $F$.
However, we do not use this information for the derandomization.

The set $B_2$ is much easier to make deterministic than $B_1$,
which is why we start with the pivots of the second type.
Recall that we compute all-pairs shortest paths in $G$ during preprocessing
(to get expaths in additional time $\Otilde(fm)$).
While doing so, we collect all paths that have length at least $M_2 = \lambda/(4f{+}2)$.
Evidently, there are at most $q_2 = O(n^2)$ such paths.
\Cref{lem:greedy_derand} with parameters $M_2$ and $q_2$
gives a set $B_2$ with $\Otilde(f n/\lambda)$ pivots.
This is the same number that was guaranteed with high probability for the random construction,
and used throughout \Cref{sec:framework_DSO}.
The running time of \textsf{GreedyPivotSelection} is $\Otilde( n^2 \lambda/f)$.

To show that it is enough to hit all shortest paths in $G$ of length $\lambda/(4f{+}2)$,
we use that concatenations of at most two replacement paths are $(2f{+}1)$-decomposable~\cite{Afek02RestorationbyPathConcatenation_journal}.
Consider such a concatenation with at least $\lambda$ edges in total,
and let $P$ be its prefix (or suffix) with $\lambda/2$ edges.
$P$ itself is $(2f{+}1)$-decomposable, that is,
a concatenation of at most $2f+1$ shortest paths in $G$.
One of them must have at least $\lambda/(4f{+}2)$ edges.
The same argument shows that $B_2$ intersects all replacement paths
with at least $\lambda/2$ edges.
The hop-long shortest path in $G$ are hit by design.

If we were to use the same approach for the pivots of the first type,
we would get way too many vertices.
Conversely, brute-forcing the dense balls in the graphs $G{-}F$ 
for \emph{all} the failure sets $F \subseteq E$ with at most $|F| \le f$ edges
takes too much time.
Instead, we use the pivot trees introduced in \Cref{subsec:general_DSO_preproc}
also for the derandomization.
Recall that we build a pivot tree with at most $f+1$ levels for each vertex $u \in V$.
Each node stores a path of at most $\lambda/2$ edges,
or a set of  $\Otilde(f L^f/\lambda)$ pivots of the second type
in case $\ball_{G-F'}(u,\lambda/2)$ is sparse.
The set $F'$ corresponds to the edges 
that are associated with the parent-child traversals
on the path from the root to the current node.

We interleave the level-wise construction of the pivot trees with derandomization phases.
Consider all the sets $\ball_{G-F'}(u,\lambda/2)$ that were computed to create the nodes
of the current level of all trees.
In the subsequent derandomization phase, it is not enough to hit only the dense balls among them.
The reason is as follows.
Whenever there is no pivot of the first type in the vicinity of $u$,
we store all pivots of the second type that are close to $u$.
The greedily selected pivots in $B_2$ may not be as evenly spread as the random ones,
potentially resulting in even a sparse ball
receiving much more than $\Otilde(fL^f/\lambda)$ many pivots of the second type.
We must therefore hit all balls that have too large of an intersection with $B_2$.

In the first level,
we construct the root nodes of the pivot trees for all $u \in V$
by computing the sets $\ball_{G}(u,\lambda/2)$ in the original graph $G$ via breath-first searches.
We use a slightly different stopping criteria than the one in \Cref{subsec:general_DSO_preproc}.
The BFS halts once the whole ball is explored or $L^f$ vertices have been visited 
(whatever happens first).
This way, we can keep the computing time per ball at $O(L^{2f})$.
To apply \Cref{lem:greedy_derand}, let $S_1, \dots, S_{q_1}$ be all balls
whose computation was stopped at $L^f$-vertices 
or which have an intersection with $B_2$ of size more than $f L^f/\lambda$.
This means that each of them has at least $M_1 = f L^f/\lambda$ elements.
We get a hitting set of $O(n \log(q_1) \lambda/(f L^f))$ vertices.
For each $u \in V$, if $\ball_{G}(u,\lambda/2)$ is among the sets $S_k$,
we store the path from $u$ to the closest selected element.
Otherwise, we store $\ball_{G}(u,\lambda/2) \cap B_2$,
which has size $O(f L^f/\lambda)$.
In the latter case, we mark the node as a leaf.
In each subsequent level up to height $f$,
we iterate this process in that for each non-leaf node, we create a child for each edge on the stored path, and compute the ball around $u$ in the corresponding graph $G{-}F'$.
In the first level, there are at most $O(n)$ balls given to \textsf{GreedyPivotSelection},
but this can grow up to $q_1 = O(n \lambda^f)$ in the last level.

As the new set of pivots of the first type $B_1$, we take the union of all the pivots generated in
the derandomization phases.
These are 
$O(f \cdot n \log(n \lambda^f) \cdot \frac{\lambda}{f L^f}) = \Otilde(f \nwspace n \lambda/L^f)$
and thus a factor $O(\lambda)$ larger than what we had using randomization.
We prove that this does not increase 
the space and preprocessing time by more than constant factors.

We redo the calculations in the paragraph ``Finishing the preprocessing'' 
(just before \Cref{subsec:general_DSO_query}) with the new value of $|B_1|$.
We use the assumptions $\lambda \le L$ and $f \le L$ for simplifications.
\begin{align*}
	&|B_1|^2 \cdot \left(\lambda + O\!\left(f \frac{\log n}{\varepsilon} \right) \right)
		\left( O(\lambda^f) + O\!\left(\frac{\log n}{\varepsilon}\right)^f \right)\\
		&= \Otilde\!\left(f^2 \frac{n^2 \lambda^{f+3}}{L^{2f}} \right) 
			+ \Otilde\!\left(f^2 \frac{n^2 \lambda^3}{L^{2f}} \right) 
				O\!\left( \frac{\log n}{\varepsilon}\right)^f
			+ \Otilde\!\left(f^3 \frac{n^2 \lambda^{f+2}}{L^{2f}} \right) 
				O\!\left(\frac{\log n}{\varepsilon} \right) 
			+ \Otilde\!\left(f^3 \frac{n^2 \lambda^2}{L^{2f}} \right)
				O\!\left(\frac{\log n}{\varepsilon} \right)^{f+1}\\
		&= \Otilde\!\left(f^4 \frac{n^2}{L^{f-3}} \right) 
			+ \Otilde\!\left(f^2 \frac{n^2}{L^{2f-3}} \right) 
				O\!\left( \frac{\log n}{\varepsilon}\right)^f
			+ \Otilde\!\left(f^2 \frac{n^2}{L^{f-3}} \right)
				O\!\left(\frac{\log n}{\varepsilon} \right)  
			+ \Otilde\!\left(f^2 \frac{n^2}{L^{2f-3}} \right)
				O\!\left(\frac{\log n}{\varepsilon} \right)^{f+1}\!.
\end{align*}
The space of the FT-trees without granularity between the pairs in $B_2 \times V$ stays the same.
For the derandomization, we assume $f \ge 4$, giving $n^2/L^{f-3} \le n^2/L$. 
All terms (for both kinds of FT-trees) are thus at most 
$\Otilde(f^2 \nwspace n^2/L) \cdot O(\log (n)/\varepsilon )^{f+2}$.
This is the same as in the randomized construction.

The term in the preprocessing time that depends on the new size of $B_1$ is
\begin{gather*}
	|B_1|^2 \cdot \left( O(\lambda^f) 
			+ O\!\left(\frac{\log n}{\varepsilon}\right)^f \right)
		= \Otilde\!\left(f^2 \frac{n^2 \lambda^{f+2}}{L^{2f}} \right)
			+ \Otilde\!\left(f^2 \frac{n^2 \lambda^2}{L^{2f}} \right) 
				O\!\left(\frac{\log n}{\varepsilon}\right)^f\\
		= \Otilde\!\left(f^2 \frac{n^2}{L^{f-2}} \right)
			+ \Otilde\!\left(f^2 \frac{n^2}{L^{2f-2}} \right) 
				O\!\left(\frac{\log n}{\varepsilon}\right)^f
		= \Otilde\!\left(f \frac{n^2}{L^{f-3}} \right)
			+ \Otilde\!\left(f \frac{n^2}{L^{2f-3}} \right) 
				O\!\left(\frac{\log n}{\varepsilon}\right)^f.
\end{gather*}
The corresponding term for the FT-trees without granularity is 
$O(f n/\lambda) \cdot O(\log(n)/\varepsilon)^{f} = O(f n/L) \cdot O(\log(n)/\varepsilon)^{f+1}$.
Since $n^2/L^{f-3} \le n^2/L$, this term still dominates this part of the preprocessing.

We still have to account for the time spend on the derandomization itself.
Recall that, for $B_2$, this is 
$\Otilde( n^2 \lambda/f)
	= \Otilde( n^2 \nwspace \varepsilon L/f)$.
Regarding $B_1$, the time for the $f+1$ derandomization phases 
can be estimated as being an $O(f)$-factor larger than the one for the last phase.
\Cref{lem:greedy_derand} gives
$\Otilde(f \cdot n \lambda^f \cdot \frac{f L^f}{\lambda}) 
	= \Otilde( f^2 L^{2f-1} n)$.
We combine the times with those needed to precompute the short-distance DSO, the pivot trees,
and LCA data structures just like in \Cref{subsec:general_DSO_preproc}.
Using that $f^2 L^{2f-1} < L^{3f}$, we arrive at
\begin{gather*}
	\mathsf{T}_L + O(L^{3f}n) 
		+ \Otilde\!\left( \frac{n^2 \nwspace \varepsilon L}{f} \right)
		+\Otilde\!\left( f^2 L^{2f-1} \nwspace n \right)
		+ \Otilde(fm) \cdot \Otilde\!\left(f \frac{n^2}{L}\right)
			\cdot O\!\left(\frac{\log n}{\varepsilon} \right)^{f+1}\\
		= \mathsf{T}_L + O(L^{3f}n) + \Otilde\!\left( \frac{n^2 \nwspace \varepsilon L}{f} \right)
			+ \Otilde\!\left(f^2 \frac{mn^2}{L} \right) 
			\cdot O\!\left(\frac{\log n}{\varepsilon} \right)^{f+1}.
\end{gather*}
If $L = \Otilde(\sqrt{f^3 \nwspace m/\varepsilon})$, 
we have $\Otilde(n^2 \varepsilon L/f) = \Otilde(f^2 \nwspace mn^2/L)$.
Therefore, we get the same asymptotic preprocessing time as with randomization.

\section{Proof of \texorpdfstring{\Cref{thm:chained}}{Theorem 5}}
\label{sec:chained}

We chain the new distance oracle of \Cref{cor:distance_oracle}
with the reductions in \Cref{thm:short_distance_DSO,thm:general_DSO} 
to obtain a deterministic $f$-DSO for unweighted graphs with near-additive
stretch and subquadratic space.

\chained*

\begin{proof}
	The static distance oracle of \Cref{cor:distance_oracle} 
	is instantiated for unweighted graphs, i.e., with maximum edge weight $W = 1$,
	and parameter $c = \gamma \ell/(\ell{+}1)$.
	Since $\gamma \le (\ell{+}1)/2$, we have $c \le \ell/2$.
	The DO has stretch $(\alpha,\beta) = (1{+}\frac{1}{\ell}, 2)$, 
	space $\mathsf{S} = \Otilde(n^{2-\frac{\gamma}{\ell+1}})$,
	and query time $\mathsf{Q} = O(n^{\gamma t/(1{+}\ell)})$.
	The time to compute it is $\mathsf{T} = O(mn)$
	when a BFS from every vertex is used for APSP.
	
	We then plug this oracle into \Cref{thm:short_distance_DSO} with a cut-off parameter 
	$L = n^{\frac{c}{\ell (f{+}1)}} = n^{\frac{\gamma}{(\ell{+}1) (f{+}1)}}$
	that distinguishes between hop-long and hop-short distances.
	Note that this implies $L^{f} = n^{(1-\frac{1}{f+1}) \frac{\gamma}{\ell{+}1}}$.
	The assumption $f = o(\log (n)/\log\log n)$ implies
	that $O(f)^f = O(\log n)^f = n^{o(1)}$.
	We use this for the space of the resulting hop-short distance \emph{sensitivity} oracle
	\begin{equation*}
		\mathsf{S}_L = O(fL \log n)^f \cdot \mathsf{S} 
			= L^f \nwspace n^{2-\frac{\gamma}{\ell+1} + o(1)}
			= n^{2- \frac{\gamma}{(\ell+1)(f+1)} + o(1)}.
	\end{equation*}
	Since $\alpha L + \beta + f^2 L = \Otilde(n^{\frac{\gamma}{(\ell{+}1) (f{+}1)}})$
	is negligible compared to $\mathsf{Q} = O(n^{\gamma \ell/(\ell+1)})$,
	the query time of the intermediate data structure is 
	$\mathsf{Q}_L = \Otilde(f \mathsf{Q}) = \Otilde(n^{\gamma \ell/(\ell+1)})$.
	We defer the analysis of the preprocessing time to the end of the proof
	as it uses similar simplifications as introduced next.
	
	The condition 
	$\varepsilon = \omega(\sqrt{\log(n)}/n^{\frac{\gamma}{2(\ell+1)(f+1)}})$ bounding how fast
	$\varepsilon$ can go to 0 (if any) ensures that
	$\beta = 2 = o(\frac{\varepsilon^2 L}{f^3 \log n})$.
	Also, we have 
	$L = n^{\frac{\gamma}{(\ell{+}1) (f{+}1)}} = O(n^{1/(2(f+1))}) 
		= \Otilde(\sqrt{f^3 \nwspace m/\varepsilon})$.
	We can thus apply the deterministic version of \Cref{thm:general_DSO}
	to the $f$-DSO for short distances
	to get a general $f$-DSO with slightly increased stretch $((1{+}\frac{1}{\ell})(1{+}\varepsilon), 2)$.
	
	We first discuss its space.
	We insert our choice of $L$ into the second and third term 
	of the space bound in \Cref{thm:general_DSO}.
	\begin{gather*}
		\Otilde(f L^{2f-1} n) = \Otilde\!\left(n^{ \frac{2f-1}{f+1} \frac{\gamma}{\ell+1}} \cdot n \right)
			= \Otilde\!\left(n^{ 1+ \frac{2f-1}{f+1} \frac{\gamma}{\ell+1}} \right).\\
		\Otilde\!\left(f^2 \frac{n^2}{L}\right) \cdot O\!\left(\frac{\log n}{\varepsilon}\right)^{f+2}
			= \frac{n^{2-\frac{1}{f+1} \frac{\gamma}{\ell{+}1}+ o(1)}}{\varepsilon^{f+2}}.
	\end{gather*}
	They have a factor $n^{-1 + \frac{2f}{f+1} \frac{\gamma}{\ell+1}+ o(1)}/\varepsilon^{f+2}$
	between them.
	Since $\gamma/(\ell{+}1) \le 1/2 < (f{+}1)/(2f)$,
	the latter term dominates.
	With high probability, the total space of our general DSO is
	\begin{equation*}
		\mathsf{S}_L + \Otilde\!\left(f^2 \frac{n^2}{L}\right) 
			\cdot O\!\left(\frac{\log n}{\varepsilon}\right)^{f+2}
			= n^{2- \frac{\gamma}{(\ell+1)(f+1)}+ o(1)}
				+ \frac{n^{2- \frac{\gamma}{(\ell+1)(f+1)}+ o(1)}}{\varepsilon^{f+2}}
			= \frac{n^{2- \frac{\gamma}{(\ell+1)(f+1)}+ o(1)}}{\varepsilon^{f+2}}.
	\end{equation*}
	Its query time is
	$\Otilde(f^5 \frac{L^{f-1}}{\varepsilon^2} \cdot (\mathsf{Q}_L + f))
		= \Otilde(n^{(1-\frac{2}{f+1}) \frac{\gamma}{(t{+}1)}} 
			\cdot n^{\frac{\gamma \nwspace \ell}{\ell+1}}/\varepsilon^2)
		= \Otilde(n^{\gamma}/\varepsilon^2)$.
		
	The only thing left to prove is the preprocessing time.
	Recall that the static DO is computed in time $\mathsf{T} = O(mn)$.
	Consider the first term of the preprocessing time in \Cref{thm:short_distance_DSO}
	The $O(f L \log n)^f = n^{(1-\frac{1}{f+1}) \frac{\gamma}{(\ell{+}1)} + o(1)}$
	part is negligible compared to $\mathsf{Q} = O(n^{\frac{\gamma \ell}{\ell+1}})$.
	Also, we have $(\alpha L + \beta)^f = O(L)^f = n^{(1-\frac{1}{f+1}) \frac{\gamma}{(\ell{+}1)} + o(1)}$.
	The first step of the transformation thus runs in time
	\begin{align*}
		\mathsf{T}_L &= \Otilde(n^2 \cdot (\alpha L + \beta)^f) \cdot ( O(fL \log n)^f + \mathsf{Q} )
				+ O(fL \log n)^f \cdot \mathsf{T}\\
				&= n^{2+(1-\frac{1}{f+1}) \frac{\gamma}{(\ell{+}1)} + o(1)} \cdot \mathsf{Q}
					+ n^{(1-\frac{1}{f+1}) \frac{\gamma}{(\ell{+}1)} + o(1)} \cdot \mathsf{T}\\
				&= n^{2+(1-\frac{1}{f+1}) \frac{\gamma}{(\ell{+}1)} + \frac{\gamma \ell}{\ell+1} + o(1)}
					+ mn^{1+(1-\frac{1}{f+1}) \frac{\gamma}{\ell{+}1} + o(1)}\\
				&= n^{2+\gamma + o(1)}
					+ mn^{1+\frac{\gamma}{\ell{+}1} + o(1)}.
	\end{align*}
	
	The running time of the reduction in \Cref{thm:general_DSO} is
	\begin{equation*}
		\mathsf{T}_L +  O(L^{3f}n) + \Otilde\!\left(f^2 \frac{mn^2}{L} \right) 
			\cdot O\!\left(\frac{\log n}{\varepsilon} \right)^{f+1}.
	\end{equation*}
	The second term is 
	$O(L^{3f}n) = O(n^{1+\frac{3f}{f+1}\frac{\gamma}{\ell+1}})$.
	With our implicit assumption of $m \ge n$,
	this is dominated by  
	$\Otilde(f^2 mn^2/L) \cdot O(\log (n)/\varepsilon)^{f+1}
		= mn^{2-\frac{1}{f+1} \frac{\gamma}{\ell{+}1} + o(1)}/\varepsilon^{f+1}$.
	Therefore, the total preprocessing time in our case simplifies to
	\begin{align*}
		\mathsf{T}_L +\frac{mn^{2-\frac{1}{f+1} \frac{\gamma}{\ell{+}1} + o(1)}}{\varepsilon^{f+1}}
			&= n^{2+\gamma + o(1)} + mn^{1+\frac{\gamma}{\ell{+}1} + o(1)}
				+ \frac{mn^{2- \frac{\gamma}{(\ell+1)(f+1)} + o(1)}}{\varepsilon^{f+1}}\\
			&= n^{2+\gamma + o(1)} + \frac{mn^{2- \frac{\gamma}{(\ell+1)(f+1)} + o(1)}}
				{\varepsilon^{f+1}}. \qedhere
	\end{align*}
\end{proof}

\section*{Acknowledgements} 

\begin{minipage}{0.99\textwidth}
\begin{wrapfigure}{l}{4.75cm}
\flushleft
\vspace*{-2em}
\hspace*{-.25em}
	\includegraphics[scale=.17]{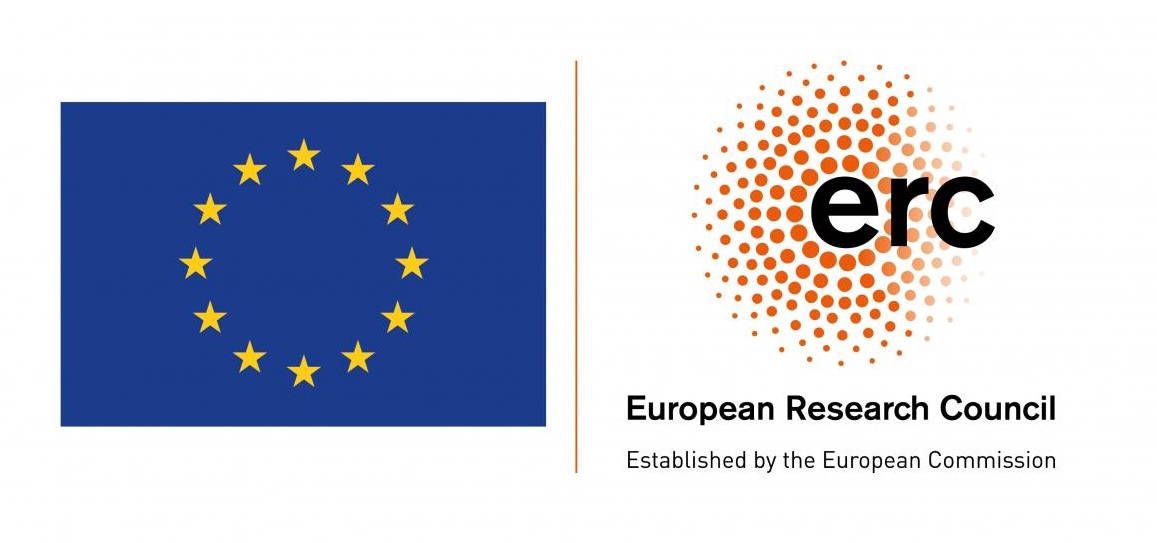}
\end{wrapfigure}

This project received funding from the
European Research Council (ERC) under the European Union's Horizon
2020 research and innovation program,
grant agreement No.~803118 
``The Power of Randomization in Uncertain Environments (UncertainENV)''.
The third author is partially supported by Google India Research Awards.
\end{minipage}

\bibliographystyle{alphaurl}
\bibliography{DSO_bib}

\end{document}